\newtheorem{theorem}{Theorem}
\newtheorem{proposition}[theorem]{Proposition}
\newtheorem{observation}[theorem]{Observation}
\newtheorem{definition}[theorem]{Definition}
\newtheorem{corollary}[theorem]{Corollary}
\newtheorem{lemma}[theorem]{Lemma}
\newtheorem{example}[theorem]{Example}
\newtheorem{remark}{Remark}
\newtheorem{claim}{Claim}
\newcommand{\sg}{\sigma}
\newcommand{\raus}[1]{}
\def\tu#1{\mathbf{#1}}
\newcommand{\calG}{{\mathcal{G}}}
\newcommand{\calH}{{\mathcal{H}}}
\newcommand{\calP}{{\mathcal{P}}}
\newcommand{\calR}{{\mathcal{R}}}
\newcommand{\calS}{{\mathcal{S}}}
\newcommand{\calT}{{\mathcal{T}}}
\newcommand{\calU}{{\mathcal{U}}}
\newcommand{\Perm}{\mathrm{Perm}}
\newcommand{\Det}{\mathrm{Det}}
\newcommand{\sgn}{\mathrm{sgn}}
\newcommand{\var}[1]{\textsf{var}({#1})}
\newcommand{\free}[1]{\textsf{free}({#1})}
\newcommand{\atom}[1]{\textsf{atom}({#1})}
\newcommand{\VP}{\mathbf{VP}}
\newcommand{\VNP}{\mathbf{VNP}}
\newcommand{\gapP}{\mathbf{gapP}}
\newcommand{\sP}{\mathbf{\#P}}
\newcommand{\p}{\mathbf{P}}
\newcommand{\NP}{\mathbf{NP}}
\newcommand{\sW}[1]{\mathbf{\#W[#1]}}
\newcommand{\CQ}{\mathrm{CQ}}
\newcommand{\wsCQ}[1]{\mathrm{\#_{#1}CQ}}
\newcommand{\ACQ}{\mathrm{ACQ}}
\newcommand{\CSP}{\mathrm{CSP}}
\newcommand{\sCQ}{\mathrm{\#CQ}}
\newcommand{\sACQ}{\mathrm{\#ACQ}}
\newcommand{\wsACQ}[1]{\mathrm{\#_{#1}ACQ}}
\begin{document}

\title{The Complexity of Weighted Counting for Acyclic Conjunctive Queries}
\author{Arnaud Durand\\ 
IMJ UMR 7586  -  Logique\\
Université Paris Diderot\\
F-75205 Paris, France  \\
 {\small \texttt{durand@logique.jussieu.fr}}
\and
Stefan Mengel\thanks{Partially supported by DFG grants BU 1371/2-2 and BU 1371/3-1.}\\Institute of Mathematics\\ University of Paderborn\\ D-33098 Paderborn, Germany\\ {\small\texttt{smengel@mail.uni-paderborn.de}} 
}

\maketitle

\begin{abstract}
This paper is a study of weighted counting of the solutions of acyclic conjunctive queries ($\ACQ$). The unweighted quantifier free version  of this problem is known to be tractable (for combined complexity), but it is also known that introducing even a single quantified variable makes it $\sP$-hard. We first show that weighted counting for quantifier-free $\ACQ$ is still tractable and that even minimalistic extensions of the problem lead to hard cases. We then introduce a new parameter for quantified queries that permits to isolate large island of tractability. We show that, up to a standard assumption from parameterized complexity, this parameter fully characterizes tractable subclasses for counting weighted solutions of $\ACQ$ queries. Thus we completely determine 
the tractability frontier for weighted counting for $\ACQ$.
\end{abstract}

\section{Introduction}

Evaluating conjunctive queries is a fundamental problem from database theory. It is equivalent to evaluating so called Select-Project-Join queries and has several equivalent definitions, in particular, in terms of constraint satisfaction problems. While the problem is known to be $\NP$-complete~\cite{CM-77}, a number of structurally restricted classes of conjunctive queries admit efficient algorithms.  Among them,  the class of acyclic conjunctive queries, $\ACQ$ for short, is a large and useful fragment which is well-known to be tractable (see~\cite{Yannakakis-81} but also~\cite{GLS-01}).
The study of acyclic queries has  also been the  starting point of applications of hypergraph decomposition tools to query answering. 

Much less is known about  counting solutions to database queries  which is a basic operation of standard database systems, too, and thus also a very natural fundamental problem. As a generalization of $\mathrm{\#3SAT}$,  it is easy to see that counting solutions to unquantified conjunctive queries is $\sP$-complete. Mostly it has been considered as $\CSP$ with fixed constraint languages without restrictions on the structure of the formulas (see e.g.~\cite{DR-10, BDDJJR-10, DGJ-09}). Recently, Pichler and Skritek \cite{PS-11} showed that the restriction to quantifier free acyclic queries gives tractable instances for counting like it does for decision. This result also extends to other classes for which decision is known to be tractable like bounded hypertree width. Also, Mengel \cite{Mengel-11} showed that weighted counting on bounded treewidth, hence bounded arity, queries is easy, too. So in a sense if we do not allow quantification, not much changes if we go from decision to counting problems.

This picture changes completely if we allow existential quantification of variables. While the complexity of the decision problem remains unchanged, it turns out that counting solutions of general conjunctive queries  becomes much harder since it is complete for $\mathbf{\#\cdot\NP}$ \cite{baulandEtAl05}. Also the acyclic case is no longer tractable. Pichler and Skritek \cite{PS-11} showed that introducing one single existential quantifier allows constructing $\sP$-hard instances of very restricted form. It follows easily that counting solutions to unions of a polynomial number of acyclic conjunctive queries is $\sP$-hard, too, while the decision problem is tractable. This shows that counting and decision differ fundamentally for acyclic conjunctive queries and that in order to find islands of tractability for counting we need new concepts that are tailored specifically to counting problems.

The main contribution of this paper is to give a complete picture of  the (combined) complexity of weighted counting problems related to acyclic conjunctive queries.  We consider weighted counting problems associated to instances of quantifier free $\ACQ$
and present algorithms that compute arithmetic circuits which, in turn, can be evaluated to obtain the result of counting problems. Thus we separate structural manipulations of the CQ instance from the counting process itself. This separation is implicit in other papers (see e.g. \cite{FMR-08}) and is made explicit here. In this context, we first prove (Theorem~\ref{thm:constructVP}, generalizing~\cite{PS-11}) that  computing the sum of the weights of solutions\footnote{Provided the evaluation space itself admits efficient algorithms for multiplication and addition of weights
(such as the field of rationals $\mathbb{Q}$)
 } of a (weighted) quantifier free $\ACQ$ can be done in polynomial time. Then,  
considering extensions of the problem above,
  we show that counting the number of solutions of a conjunction or a disjunction of \textit{two} quantifier free $\ACQ$ is $\sP$-complete even for Boolean domain and for fixed arity (see Proposition~\ref{prop:union}).
 This last result and the fact that one quantification is enough to define hard cases (see~\cite{PS-11}) shows that even  ``minimalistic'' extensions of quantifier free $\ACQ$  lead to intractable counting problems and that one cannot hope to get any meaningful tractable class by these means.  

However, our second set of results counterbalances this impression. We show that a large subclass of $\ACQ$ with existentially quantified variables leads to efficiently solvable counting problems. To this aim, we introduce a  (hypergraph) parameter, called \textit{quantified star size}, to measure 
the degree of dispersion of free variables  in  acyclic conjunctive queries (this generalizes the notion of connex-acyclicity of~\cite{BDG-07}). We show (see Theorem~\ref{thm:countingeasy}) that if this parameter is bounded by some constant $k\in \mathbb{N}$, the resulting (weighted) counting problem is solvable in time $n^{O(k)}$ where $n$ is the size of the instance. Furthermore, we show (see Theorem~\ref{thm:starsizeeasy}) that the quantified star size of a formula is efficiently computable making the counting result applicable. We also show that the runtime bound $n^{O(k)}$ can probably not be improved substantially in the sense that the parameterized problem is not fixed parameter tractable under the standard hypothesis $\mathbf{FPT} \ne \sW{1}$ from parameterized complexity~\cite{FlumGrohe04,FlumGrohe06}. Under the same hypothesis, we finally show (Theorem~\ref{thm:starsizenecessary}) that quantified star size is optimal in the sense that any other structural restriction on acyclic conjunctive queries that leads to polynomial time counting of the solutions must imply bounded quantified star size. 

Our study of counting problem is mostly formulated in the slightly more general setting of arithmetic circuit complexity. So a consequence of our results is that they also provide information about 
the structural complexity of families of polynomials computed by arithmetic circuits and in particular on tractability and intractability in this context. As an example, it is an immediate corollary of our results that the polynomial families we consider which are built over bounded quantified star size queries characterize the class $\VP$ that formalizes efficient computation in the Valiant model~\cite{Valiant-79} and which is well known for its apparent lack of natural non-circuit characterizations. Analog completeness results  can also be derived from our hardness results too.

\paragraph*{Structure of the paper}
Necessary preliminaries about arithmetic circuits, conjunctive queries and acyclicity
are given in Section~\ref{sct:defis}. In Section~\ref{sct:basicacyclic} we show that the weighted counting of solutions to acyclic conjunctive queries is easy extending the result of~\cite{PS-11} but also some results in~\cite{Mengel-11}. In Section~\ref{sct:union} we show that unions of solutions of acyclic conjunctive queries are hard to count. Afterwards we turn to quantified star size to give a parameterization of counting solutions to quantified acyclic queries in Section~\ref{sct:starsize}. In passing, we prove that determining the quantified star size of a query (i.e. of the hypergraph associated to the query) can be done in polynomial time. Here we also show hardness in the sense of parameterized complexity and show that quantified star size likely is the most general restriction that leads to polynomial time counting. In the final Section~\ref{sec:arithmeticcircuitcomplexity} we apply our results to arithmetic circuit complexity: We extend Pichler and Skritek's hardness for quantified conjunctive query to the Valiant model and give a characterization of $\VP$ by acyclic conjunctive queries of bounded star size.

\section{Preliminaries}\label{sct:defis}

\paragraph*{Arithmetic circuit complexity}
An {\em arithmetic circuit} over a field $\mathbb{F}$ is a labeled directed acyclic graph (DAG) consisting of vertices or gates with indegree or fanin $0$ or $2$. The gates with fanin $0$ are called input gates and are labeled with constants from $\mathbb{F}$ or variables $X_1, X_2, \ldots, X_n$. The gates with fanin $2$ are called computation gates and are labeled with $\times$ or $+$. 
The polynomial computed by an arithmetic circuit is defined in the obvious way: An input gate computes the value of its label, a computation gate computes the product or the sum of its childrens' values, respectively. We assume that a circuit has only one sink which we call output gate. We say that the polynomial computed by the circuit is the polynomial computed by the output gate. The \emph{size} of an arithmetic circuit is the number of gates. The \emph{depth} of a circuit is the length of the longest path from an input gate to the output gate in the circuit. A circuit is called \emph{multiplicatively disjoint} if,  for each $\times$-gate, its two input   subcircuits are  disjoint.

\paragraph*{Conjunctive query: decision, counting and weighted counting.} We assume the reader to be familiar with the basics of (first order) logic  (see ~\cite{Libkin-04}). If $\phi$ is a first order formula, $\var{\phi}$ denotes the set of its variables, by $\free{\phi}\subseteq  \var{\phi}$ the set of its free variables and $\atom{\phi}$ the set of its atomic formulas. Let $\tu x=x_1,...,x_k$, we denote $\phi(\tu x)$ the formula with free variables  $\tu x$.
The \textit{Boolean query problem} $\Phi=(\calS,\phi)$ associated to a formula $\phi(\tu x)$  and a structure $\calS$, asks whether the set

\[\phi(\calS)=\{\tu a : (\calS, \tu a)\models \phi(\tu x)\}\]

\noindent called the \textit{query result} is empty or not. The (general) query  problem consists of computing the set $\phi(\calS)$, while the corresponding counting problem is computing the size of $\phi(\calS)$, denoted by $|\phi(\calS)|$. When $\phi$ is a $\{\exists,\wedge\}$-first order formula the boolean query problem is known as the \textit{Conjunctive Query Problem}, $\CQ$ for short.  It is well known that the Boolean $\CQ$ problem is equivalent to the propositional satisfiability problem and thus is $\NP$-complete.
We denote by $\sCQ$  the associated counting problem: given a query instance $\Phi=(\calS,\phi)$, return the value of $|\phi(\calS)|$.

Let now $\mathbb{F}$ be a field and $\calS$ be a finite structure of domain $D$. A $\mathbb{F}$-weight function for $\calS$ is a mapping $w:D\rightarrow \mathbb{F}$. If $\tu a$ is a tuple of elements of $D$ of length $k$, the weight of $\tu a$ is

\[w(\tu a)=\prod_{i=1}^k w(a_i).\]

The \textit{weighted} counting problem for $\CQ$, denoted $\wsCQ{\mathbb{F}}$, is the following problem: given $\Phi=(\calS,\phi)$ and a $\mathbb{F}$-weighted function $w$, return the sum of the weights of all solutions i.e. the value of 

\[\sum_{\tu a \in \phi(\calS)} w(\tu a).\]

When $w$ is the constant function $1$, this value is clearly equal to $|\phi(\calS)|$.

\medskip

\noindent\textsf{Query size and Model of computation.}  The size $\|\Phi\|$ of a query input is the sum of the size of the formula $|\phi|$ and of the size $|\calS|$  of $\calS$. We consider the size $|\calS|$ of $\calS$ with domain $D$ to be the number  of elements in $D$ plus the number of tuples in relations of $\calS$.

All algorithms below are expressed in terms of operations on tuples (without consideration on their size). Hence, we choose the RAM model (with addition as basic operation) under uniform cost measure  as underlying model of computation. However, the choice of a model will have some importance only in the few cases where precise polynomial time bounds are given.

\paragraph*{Acyclic Conjunctive Queries.} 
 A (finite) hypergraph  $\calH$ is a pair $(V,E)$ where $V$ is a finite set and $E\subseteq \calP(V)$. We associate a hypergraph $\calH=(V,E)$ to a formula $\phi$ (the \textit{canonical} structure describing $\phi$) by setting $V:= \var{\phi}$ and $E:= \{\var{a}\mid a \in \atom{\phi} \}$.
 
\begin{definition} A join tree (or tree structure) of a hypergraph $\calH=(V,E)$ is a pair $(\calT,\lambda)$ where $\calT=(V_T,T)$ is a tree and $\lambda$ is a function from $V_T$ to $E$ such that:

\begin{itemize}
\item for each $e\in E$, there is a $t\in V_T$ such that $\lambda(t)=e$,

\item For each $v\in V$, the set $\{t\in V_T: \ v \in \lambda(t)\}$ is a connected subtree of $T$.
\end{itemize}
\end{definition}

A hypergraph is acyclic if it has a join tree~\cite{Fagin-83}. 
When there is no ambiguity, we often identify vertices of a join tree and their labellings. We also sometimes only specify the tree $T$ explicitly, without talking of $\lambda$ which is always implicitly understood to exist.
A formula $\phi$ is acyclic if its associated hypergraph is acyclic.
Considering acyclic $\{\exists,\wedge\}$-first order formulas yields the Boolean acyclic conjunctive query problem denoted $\ACQ$. We denote by $\sACQ$ (resp. $\wsACQ{\mathbb{F}}$) the associated counting (resp. $\mathbb{F}$-weighted counting) problem. If $\phi$ is such that  $\free{\phi}=\var{\phi}$) then $\phi$ is said to be \textit{quantifier-free}.

Any $\tu a\in \phi(\calS)$  will be alternatively  seen as an assignment $\tu a: \var{\phi}\rightarrow D$ or as a tuple of dimension $|\var{\phi}|$. Two arbitrary assignments $\tu a$ and $\tu a'$ are \textit{compatible} if they agree on their common variables.
We will make use of the following classical join operators.


\begin{definition} Let $\phi(\tu x,\tu y)$, $\psi(\tu y,\tu z)$ be two conjunctive queries with  $\tu x\cap\tu z = \emptyset$  and let $\calR,\calS$ be two finite structures. Let us define:

- The natural join:  $\phi(\calR)\bowtie \psi(\calS) = \{(\tu a,\tu b,\tu c): \ (\tu a,\tu b)\in \phi(\calR)  \mbox{ and } (\tu b,\tu c)\in \psi(\calS)\}$

- The special (left)  join: $\phi(\calR)\ltimes \psi(\calS) = \{(\tu a,\tu b): \ (\tu a,\tu b)\in \phi(\calR)  \mbox{ and there exists } (\tu b,\tu c)\in \psi(\calS)\}$.
\end{definition}

When $\calR=\calS$, $\phi(\calR)\bowtie \psi(\calS)$ is simply $[\phi\wedge\psi](\calR)$.
The natural join between two relations can be computed in time linear in the size of the relations (here in time linear in $|\phi(\calR)|$ and $|\psi(\calS)|$).

\begin{remark}[CQ and CSP] A well-known equivalent formulation of the quantifier-free conjunctive query problem can be stated in terms of $\CSP$. In this later problem, given two structures $\calS$ and $\calT$, one asks whether there exists an homomorphism from $\calS$ to $\calT$. In the recent past, counting and weighted counting for $\CSP$ have been mainly stated in the \textit{non uniform} version of the problem i.e. when the template $\calT$ is fixed and only $\calS$ is given as input (see e.g.~\cite{DR-10, BDDJJR-10, DGJ-09}). In this paper, we focus on the uniform version of the problem.   
\end{remark}

\paragraph*{Polynomials defined by conjunctive queries}\label{intro:polynomials}

We briefly introduce a polynomial $Q$ that generalize $\sCQ$. A more thorough discussion of $Q$ can be found in~\cite{Mengel-11}.

Let $\Phi=(\phi, \calS)$ be a query with domain $D$. We assign to $\Phi$ the following polynomial $Q(\Phi)$ in the variables $\{X_d \mid d \in D\}$.

\[Q(\Phi) := \sum_{a\in \phi(\calS)} \prod_{x \in \var{\phi}} X_{a(x)} = \sum_{a\in  \phi(\calS)} \prod_{d \in D} X_d^{\mu_d(a)},\]
where $\mu_d(a) = |\{ x \in \var{\phi} \mid a(x) = d\}|$ computes number of variables mapped to $d$ by $a$.
Note that the number of variables in $Q(\Phi)$ is $|D|$, the size of the domain. 
and that  $Q(\Phi)$ is homogeneous of degree $|\var{\Phi}|$.

Observe that $Q$ is essentially the weighted counting problem $\wsCQ{\mathbb{F}}$ by setting $X_d:= w(d)$ for all $d\in D$. Thus if we can efficiently compute circuits that in turn compute $Q(\Phi)$, we can efficiently solve $\wsACQ{\mathbb{F}}$ on the instance $\Phi$.

\paragraph*{Parameterized counting complexity}

This section is a very short introduction to parameterized counting complexity (for more details see \cite{FlumGrohe04,FlumGrohe06}). 

A parameterized counting problem is a function $F: \Sigma^* \times \mathbb{N} \rightarrow \mathbb{N}$, for an alphabet $\Sigma$. Let $(x,k)\in \Sigma^*\times \mathbb{N}$, then we call $x$ the input of $F$ and $k$ the parameter. A parameterized counting problem $F$ is fixed parameter tractable, or $F\in \mathbf{FPT}$, if there is an algorithm computing $F(x,k)$ in time $f(k)\cdot |x|^c$ for some computable function $f:\mathbb{N}\rightarrow \mathbb{N}$ and some constant $c\in \mathbb{N}$.

Let $F:\Sigma^*\times \mathbb{N} \rightarrow \mathbb{N}$ and $G:\Pi^*\times \mathbb{N} \rightarrow \mathbb{N}$ be two parameterized counting problems. A parameterized parsimonious reduction from $F$ to $G$ is an algorithm that computes for every instance $(x,k)$ of $F$ an instance $(y,l)$ of $G$ in time $f(k)\cdot |x|^c$ such that $l\le g(k)$ and $F(x,k) = G(y,l)$ for computable functions $f,g:\mathbb{N}\rightarrow \mathbb{N}$ and a constant $c\in \mathbb{N}$. A parameterized $T$-reduction from $F$ to $G$ is an algorithm with an oracle for $G$ that solves any instance $(x,k)$ of $F$ in time $f(k)\cdot |x|^c$ in such a way that for all oracle queries the instances $(y,l)$ satisfy $l\le g(k)$ for computable functions $f,g$ and a constant $c\in \mathbb{N}$. 

Let $p$-$\mathrm{\#Clique}$ be the problem of counting $k$-cliques in a graph where $k$ is the parameter and the graph is the input. A parameterized problem $F$ is in $\sW{1}$ if there is a parameterized parsimonious reduction from $F$ to $p$-$\mathrm{\#Clique}$\footnote{Let us remark that Thurley \cite{Thurley-06} gives good arguments for defining $\sW{1}$ not with parsimonious reductions. He instead defines $\sW{1}$ with parameterized $T$-reductions with only one oracle call. We keep the definition of \cite{FlumGrohe04,FlumGrohe06}, because we will show no $\sW{1}$ upper bounds and thus can avoid these subtleties. We remark though that finding the right reduction notions for counting problems is notoriously tricky to get right (see e.g. \cite{KPZ99,DHK05}).}. $F$ is $\sW{1}$-hard, if there is a parameterized $T$-reduction from $p$-$\mathrm{\#Clique}$ to $F$. As usual, $F$ is $\sW{1}$-complete if it is in $\sW{1}$ and hard for it, too.

A standard assumption from parameterized complexity is that not all problems in $\sW{1}$ (and thus in particular the complete problems) are fixed parameter tractable. Thus, from showing that a problem $F$ is $\sW{1}$-hard it follows that $F$ can be assumed to be not fixed parameter tractable.

Except for these definitions we will not use parameterized parsimonious reductions and we will not use the complete power of parameterized $T$-reductions either. Instead, all parameterized reductions in the remainder of the paper will be $T$-reductions with exactly one oracle call.

\section{Constructing circuits for acyclic conjunctive queries}\label{sct:basicacyclic}

\begin{theorem}\label{thm:constructVP}
 Given an acyclic quantifier free conjunctive query $\Phi$, we can in time polynomial in $\|\Phi\|$ compute a multiplicatively disjoint arithmetic circuit $C$ that computes~$Q(\Phi)$.
\end{theorem}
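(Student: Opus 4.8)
The plan is to exploit acyclicity via a join tree of $\phi$ and to build the circuit by a bottom-up dynamic programming along this tree, in the spirit of Yannakakis' algorithm but aggregating weights as polynomials rather than deciding emptiness.

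First I would compute a join tree $(\calT,\lambda)$ of the (acyclic) hypergraph $\calH$ associated with $\phi$; such a tree exists and can be computed in polynomial time. I root $\calT$ at an arbitrary node, and I may assume that $\calT$ has one node per atom, so $|V_T|\le|\phi|$, and that every variable occurs in some atom (a variable in no atom ranges freely and only contributes a global factor $\sum_{d\in D}X_d$). For a node $t$ with atom $a$, write $R_t$ for the set of assignments of $\lambda(t)=\var{a}$ consistent with the corresponding relation of $\calS$; note $|R_t|$ is at most the size of that relation. By the connectedness condition of join trees (the running intersection property), every variable $v$ occupies a connected subtree with a unique node closest to the root; I say $v$ is \emph{owned} by that node and write $\mathrm{own}(t)$ for the variables owned by $t$. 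These sets partition $\var{\phi}$.

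The key quantity is, for each node $t$ and each local assignment $\alpha\in R_t$, the polynomial
\[
  P_t[\alpha] = \sum_{a}\ \prod_{v} X_{a(v)},
\]
where $a$ ranges over the assignments of the variables of the subtree rooted at $t$ that extend $\alpha$ and satisfy every atom in that subtree, and $v$ ranges over the variables owned by nodes of that subtree. For a leaf this is simply $\prod_{v\in\mathrm{own}(t)}X_{\alpha(v)}$. For an internal node $t$ with children $c_1,\dots,c_m$ I claim
\[
  P_t[\alpha] = \Bigl(\prod_{v\in\mathrm{own}(t)}X_{\alpha(v)}\Bigr)\cdot\prod_{i=1}^{m}\Bigl(\sum_{\beta}P_{c_i}[\beta]\Bigr),
\]
where $\beta$ ranges over the assignments in $R_{c_i}$ compatible with $\alpha$. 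Correctness of this recursion is where the running intersection property does the work: any variable shared between two distinct child subtrees, or between a child subtree and $t$, must lie in $\lambda(t)$ and is therefore pinned down by $\alpha$. Hence once $\alpha$ is fixed the extensions inside the different child subtrees are chosen independently and carry disjoint sets of owned variables, which justifies the product. Finally $Q(\Phi)=\sum_{\alpha\in R_r}P_r[\alpha]$ for the root $r$, since the root subtree is all of $\phi$ and the variables owned over the whole tree are exactly $\var{\phi}$.

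From the recursion I build the circuit bottom-up, introducing one small tree of fan-in-two $+$ and $\times$ gates for each pair $(t,\alpha)$. The number of such pairs is $\sum_t|R_t|$, which is polynomial in $\|\Phi\|$, and each is assembled from $O(\sum_i|R_{c_i}|+|\mathrm{own}(t)|)$ gates; the output is a sum over $R_r$. Hence the circuit has polynomial size and is produced in polynomial time. The delicate point, which I expect to be the main obstacle, is multiplicative disjointness. I maintain it as follows: whenever a factor $X_{\alpha(v)}$ is needed I introduce a \emph{fresh} input gate labelled $X_{\alpha(v)}$ rather than sharing a global one, and the only multiplications performed are those of the recursion, namely products of these fresh factors with the child-sums $\sum_{\beta}P_{c_i}[\beta]$ for pairwise distinct children. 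Because the children index pairwise disjoint subtrees, the two inputs of every $\times$-gate depend on gates $P_s[\cdot]$ drawn from disjoint node sets and on disjoint fresh inputs, so they are gate-disjoint. Sharing does occur, since the gate $P_{c}[\beta]$ is reused for every compatible parent assignment $\alpha$, but such reuse only ever happens beneath $+$-gates, never across a single $\times$. This yields the required multiplicatively disjoint circuit.
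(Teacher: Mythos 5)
Your proof is correct, and it follows the same basic strategy as the paper---dynamic programming along a join tree, computing for each tree node a family of polynomials that aggregate the monomials contributed by the subtree below it---but several of your technical choices genuinely diverge, mostly to your advantage. First, you index the subproblems $P_t[\alpha]$ by tuples $\alpha\in R_t$ of the node's \emph{own} atom, whereas the paper's polynomials $f_{t,\tu a,c}$ are indexed by assignments to the \emph{parent's} atom; since there are exponentially many conceivable indices $\tu a$, the paper is forced into a top-down construction in order to materialize only the polynomially many polynomials actually needed, a complication your indexing sidesteps entirely (there are only $\sum_t |R_t|$ states, so plain bottom-up works). Second, the paper first runs a Yannakakis-style semijoin reduction (the sets $A_t$, Claim~\ref{claim}) and sums over the reduced relations, while you sum over the unreduced $R_t$ and let tuples with no satisfying extension contribute $0$ via an empty sum; this is correct and still polynomial, though it yields a cruder time bound than the paper's $O(|\phi|^2\times|\calS|^2)$. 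Third, your ``ownership'' device (charging each variable to the topmost join-tree node containing it) plays exactly the role of the paper's partition $c_0,c_1,\dots,c_k$ of the exponent set: both prevent double-counting of variables shared between children. The most substantial difference is multiplicative disjointness: the paper obtains it by invoking the Malod--Portier transformation \cite{MP-08} as a black box in a final step, whereas you argue the constructed circuit is \emph{directly} multiplicatively disjoint, using fresh input gates per occurrence and the observation that the two inputs of any $\times$-gate draw on gates associated with disjoint sets of join-tree nodes, reuse of a gate $P_{c}[\beta]$ occurring only across distinct $+$-contexts. That argument is sound but terse; to make it airtight, state the invariant explicitly---every gate belongs to a unique tree node, and the subcircuit under a gate belonging to $t$ contains only gates belonging to nodes of the subtree rooted at $t$---and then check that the two inputs of each $\times$-gate span disjoint node sets together with distinct gadget-local gates. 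With that spelled out, your construction is self-contained where the paper's relies on an external result.
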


\begin{proof} The first step follows that of \cite{Mengel-11}. Since arity of queries is not bounded, a new approach is necessary to show that only polynomially many gates are necessary to compute $Q(\Phi)$. Also the algorithmic nature of the construction is stressed more to give the upper bound on the complexity of constructing the circuit $C$.

So let $\Phi=(\calS,\phi)$ be an acyclic conjunctive query. Let $(\calT, \lambda)$ the join tree associated with $\phi$. By definition, the tree $\calT$ has $m$ vertices $t_1,...,t_m$ associated to the atoms $\lambda(t_1), ..., \lambda(t_m)$ in $\phi$. Observe that $\calT$ can be constructed from $\phi$ in polynomial time; indeed it can even be computed in logarithmic space, see \cite{GLS-01}. Thus we do not consider the construction of $\calT$ but take it as given.
For $t\in V_T$, we call  $\phi_t$ the conjunction of constraints corresponding to the subtree $\calT_t$ with $t$ as root. The set $\var{\phi_t}=\bigcup_{t'\in \calT_t} \var{\lambda(t')}$ is denoted by $e_t$. For convenience we also denote the atomic formula $\lambda(t)$ by $\lambda_t$.

 Let $\tu a$ be an assignment of some variables of $\phi$ and $c\subseteq \var{\phi_t}$. 
We show by induction on the depth of $\calT$ that the following polynomial can be computed by an arithmetic circuit of polynomial size
\[
f_{t,\tu a, c} = \sum_{\substack{\tu \alpha \in \phi_t(\calS)\\ \tu \alpha \sim \tu a}} \prod_{x\in  c} X_{\tu a(x)}.
\]

Remark that if $r$ is the root of $\calT$ then,  $f_{r, \emptyset, \var{\phi}}=Q(\Phi)$. Observe that in contrast to \cite{Mengel-11} we have an exponential number of polynomials $f_{t,\tu a, c}$, so we cannot afford to compute them all in a bottom up fashion. Instead we will construct the circuit top down starting from $r$ and make sure that in each step only polynomially many $f_{t,\tu a, c}$ are needed. This will directly give the runtime bound for the construction.

So suppose first that $t$ is a leaf.
Then, 
$\phi_t$ is some atomic constraint $\lambda_t$, hence
$\phi_t(\calS)$  is of size at most linear in $|\calS|\leq \|\Phi\|$, so the sum only involves a number of terms linear in~$\|\Phi\|$.

 Suppose now $t\in V_T$ is not a leaf and let $t_1,...,t_k$ in $V_T$ be the children of $t$ in $\calT$.
Let $c_0,c_1,...,c_k$ be a partition of $c$ into disjoint sets such that each $c_i\subseteq e_i\cap c$, for $i=1,...,k$ and $c_0\subseteq c\backslash \bigcup_{i=1}^k e_{t_i}$ (the need for choosing a partition is that each variable $x$ appearing in several  $\lambda(t_i)$ sets must be taken into account at most once in order not to overcount the exponent of $X_{a(x)}$). 

\[
\begin{array}{rl}
f_{t,\tu a, c} =  & \displaystyle \sum_{\substack{\tu \alpha \in \phi_t(\calS)\\ \tu \alpha \sim \tu a}} \prod_{x\in c} X_{\tu a(x)}\\
  = &    
  \displaystyle \sum_{\substack{\tu \alpha \in \phi_t(\calS)\\ \tu \alpha \sim \tu a}}
  \prod_{x\in c_1} X_{\tu \alpha(x)}
  \cdots
  \prod_{x\in c_k} X_{\tu \alpha(x)}
   \prod_{x\in c_0} X_{\tu \alpha(x)}
    \end{array}
\]

Let $A_t=((\lambda_t(\calS)\ltimes \phi_{t_1}(\calS)) \ltimes \phi_{t_2}(\calS)) \ltimes \ldots \ltimes \phi_{t_k}(\calS)$. Note that $A_t\subseteq \lambda_t(\calS)$.

\begin{claim}\label{claim} The set $A_t$ is computable in time $|\calS|\times |\phi_t|$.
\end{claim}

\begin{proof}[Proof of the claim] By induction on the tree depth. The proof is an adaptation of Yannakakis algorithm to evaluate acyclic conjunctive queries (see \cite{Yannakakis-81}).
Note that $A_t$ is a subrelation of $\phi_t(\calS)$ with   $\phi_t$ being a constraint of the input formula. So if $t$ is a leaf, the result is obvious.
If $t$ is not a leaf. Let $t_1,...,t_k$ be its children. Remark that
\[A_t = (\lambda_t(\calS)\ltimes A_{t_1}) \ltimes A_{t_2}) \ltimes \ldots \ltimes A_{t_k},\]
\noindent since each $A_{t_i}$ is the projection of $\phi_{t_i}(\calS)$ onto $\var{\phi_{t_i}}$. 
The set $A_t$ is computed as follows: Special joins are computed step by step  respecting the order given by parentheses. To compute each join sort the two sets of assignments using the lexicographic ordering 
induced by the variables they have in common. Then, run once through the two sorted relations to select the right tuples.
Suppose now that each $A_{t_i}$ is computable in time $ |\calS|\cdot |\phi_{t_i}|$. Then, $A_t$ is computable in time $|\calS|\cdot (|\lambda_t|+ |\phi_{t_1}| + \ldots + |\phi_{t_k}|)\leq |\calS|\cdot|\phi_t|$.
\end{proof}

Each solution $\tu \alpha\in \phi_t(\calS)$ can be uniquely expressed as the natural join of a tuple $\tu\beta\in A_t$ and a sequence of $\tu \alpha_i\in \phi_{t_i}(\calS)$, $i=1,...,k$, compatible with $\beta$ (more formally by natural join of singleton relations containing  these tuples), i.e.
given $\tu \alpha\in \phi_t(\calS)$, there exist $\tu\beta\in A_t$ and  $\tu \alpha_i\in \phi_{t_i}(\calS)$, $i=1,...,k$, such that

\[\{\tu\alpha\} = \{\tu\beta\} \bowtie \{\tu \alpha_1\} \bowtie \ldots \bowtie \tu \{\alpha_k\}.\]
 
 Conversely, given $\tu\beta\in A_t$ and a sequence of $\tu \alpha_i\in \phi_{t_i}(\calS)$, $i=1,...,k$, compatible with $\beta$, the natural join of these tuples is an $\tu \alpha\in \phi_t(\calS)$. This follows from the connectedness condition in the join tree, i.e. from the fact that given distinct $i,j\leq k$, $\var{\phi_{t_i}}\cap \var{\phi_{t_j}}\subseteq \var{\lambda_{t}}$. Indeed, if $\alpha_i$ and $\alpha_j$ assign values to of a common variable, they must agree on it, because they both agree with $\beta$. This implies that the following equalities hold.

\[
\begin{array}{rl}
f_{t,\tu a, c} =  &   
  \displaystyle \sum_{\substack{\tu \alpha \in \phi_t(\calS)\\ \tu \alpha \sim \tu a}}
  \prod_{x\in c_1} X_{\tu \alpha(x)}
  \cdots
  \prod_{x\in c_k} X_{\tu \alpha(x)}
   \prod_{x\in c_0} X_{\tu \alpha(x)}\\
   = &   
 \displaystyle\sum_{\substack{\tu \beta \in A_t\\ \tu \beta \sim \tu a}}
  \sum_{\substack{\tu \alpha_1 \in \phi_{t_1}(\calS)\\ \tu \alpha_1 \sim \tu \beta}}
  \cdots 
  \sum_{\substack{\tu \alpha_k \in \phi_{t_k}(\calS)\\ \tu \alpha_k \sim \tu \beta}}
  \prod_{x\in c_1} X_{\tu \alpha_1(x)}
  \cdots
  \prod_{x\in c_k} X_{\tu \alpha_k(x)}
   \prod_{x\in c_0} X_{\tu \beta(x)}\\
   = & \displaystyle\sum_{\substack{\tu \beta \in A_t\\ \tu \beta \sim \tu a}}
   f_{t_1,\beta,c_1} \cdots f_{t_k,\beta,c_k} \cdot   \prod_{x\in c_0} X_{\tu \beta(x)} 
    \end{array}
\]

Note that  the sum is now over $A_t$ and not over $\phi_t(\calS)$ anymore.
We claim that the construction described above can be done in polynomial time. Indeed, for each $t$ we only have to compute the $f_{t,\tu a, c}$ for one fixed set $c$ but for potentially all $\tu a \in A_{t'}$ where $t'$ is the father of vertex $t$ in $\calT$. Thus for $t$ we only have to compute $|A_{t'}|\le |\calS|\le \|\Phi\|$ polynomials $f_{t,\tu a, c}$. Furthermore for $t$ we only have to access the polynomials for $t_1,...,t_k$, more precisely all of  $f_{t_1,\beta,c_1}$, \ldots, $f_{t_k,\beta,c_k}$ with $\beta\in A_t$.  So the computation of one $f_{t,\tu a, c}$ involves only $O(|A_t|\times (k+|c_0|))=O(|\calS|\times |\phi|)$ arithmetic operations. Computing the $f_{t,\tu a, c}$ for all $\tu a\in A_{t'}$ but fixed $t$ can then be done with $O(|\phi| \times |\calS|^2)$ operations. Summing up over all $t$ we get a total upper bound of $O(|\phi|^2 \times|\calS|^2)$, so the circuit $C$ for $Q(\Phi)$ is of polynomial size. Now remark that each set $A_t$, for $t\in V_T$ can be constructed in time $O(|\phi| \times |\calS|)$ by Claim~\ref{claim}. For a fixed $\tu a$, filtering all elements $\beta$ of $A_t$ compatible with $\tu a$ can be done in linear time after sorting $A_t$. Hence, the index set of each sum is efficiently computable and the construction of the circuit can be done in polynomial time.

In a final step we apply the construction of Malod and Portier \cite{MP-08} to make the circuit multiplicatively disjoint. 
\end{proof}

We get the following corollary on weighted counting problems (which generalizes  a recent result of Pichler and Skritek \cite{PS-11}).

\begin{corollary} Let $\mathbb{F}$ be field such that iterated addition and multiplication are computable in polynomial time in $\mathbb{F}$. Then 
 $\wsACQ{\mathbb{F}}$ can be solved in polynomial time for quantifier free queries.
\end{corollary}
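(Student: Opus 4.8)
The plan is to reduce the weighted counting problem directly to the evaluation of the circuit produced by Theorem~\ref{thm:constructVP}. First I would recall the observation made just after the definition of $Q$: the problem $\wsACQ{\mathbb{F}}$ is nothing but the evaluation of $Q(\Phi)$ under the substitution $X_d := w(d)$ for every element $d$ of the domain $D$ of $\calS$. Indeed,
\[
Q(\Phi)\big|_{X_d = w(d)} = \sum_{a \in \phi(\calS)} \prod_{x \in \var{\phi}} w(a(x)) = \sum_{a \in \phi(\calS)} w(a),
\]
which is exactly the value that $\wsACQ{\mathbb{F}}$ asks for. So the algorithm is: (i) run the construction of Theorem~\ref{thm:constructVP} to obtain, in time polynomial in $\|\Phi\|$, a multiplicatively disjoint circuit $C$ computing $Q(\Phi)$; (ii) relabel each input gate carrying the variable $X_d$ with the constant $w(d)\in\mathbb{F}$; (iii) evaluate $C$ gate by gate in topological order over $\mathbb{F}$ and output the value at the output gate.

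Steps (i) and (ii) are immediate and cost polynomial time, so the whole argument reduces to showing that step (iii), the evaluation of a polynomial-size arithmetic circuit over $\mathbb{F}$, can be carried out in polynomial time. This is where the hypotheses on $\mathbb{F}$ enter. Each computation gate performs a single addition or multiplication of two elements of $\mathbb{F}$, and $C$ has only polynomially many gates, so the total number of field operations is polynomial.

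The point that needs care, and that I expect to be the main obstacle, is that a polynomial number of field operations does not by itself entail polynomial running time: over fields such as $\mathbb{Q}$, an unrestricted circuit can produce values of exponential bit-length through iterated squaring, so the bit-sizes of the intermediate values must be kept under control. This is precisely why $C$ is made multiplicatively disjoint in Theorem~\ref{thm:constructVP}. Multiplicative disjointness forces the formal degree of the polynomial computed at each gate to be bounded by the size of its subcircuit, hence by the polynomial size of $C$; consequently the value at every gate is a sum of products of at most polynomially many of the weights $w(d)$ (together with the circuit constants), so that both its numerator and its denominator have bit-length polynomial in $\|\Phi\|$ and in the bit-length of the weights. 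The assumption that iterated addition and multiplication are feasible in $\mathbb{F}$ is exactly the guarantee that these degree-bounded combinations can be formed and accumulated without blow-up. Putting these together bounds the cost of the entire evaluation by a polynomial, and therefore $\wsACQ{\mathbb{F}}$ is solvable in polynomial time for quantifier free queries.
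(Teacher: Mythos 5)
Your proposal is correct and takes essentially the same approach as the paper: apply Theorem~\ref{thm:constructVP} to get a circuit for $Q(\Phi)$, substitute $X_d := w(d)$, and evaluate the circuit over $\mathbb{F}$. The only difference is cosmetic, in how efficient evaluation is justified: the paper appeals to standard depth-reduction techniques for degree-bounded circuits to control the size of representations of field elements, whereas you bound the bit-length of intermediate values directly from multiplicative disjointness; both arguments rest on the same underlying fact that the degree (and hence the representation size of every gate's value) is polynomially bounded.
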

\begin{proof}
 Given an instance $\Phi$, Theorem \ref{thm:constructVP} yields a circuit $C$ that computes $Q(\Phi)$. Setting $X_d=w(d)$ for all $d\in D$, we can evaluate $Q(\Phi)$ to give the answer to the weighted counting problem. Observe that efficient evaluation is possible, because the degree of $Q(\Phi)$ is bounded and thus we can use standard depth reduction techniques to avoid a blowup of the size of representations of field elements.
\end{proof}

\section{Union and intersection of acyclic queries}\label{sct:union}

In this section, we show that considering conjunction and disjunction of two acyclic conjunctive queries leads to intractable counting problems.

\begin{proposition}\label{prop:union}
 Computing the size of the union and the intersection of solutions to two quantifier free $\sACQ$-instances are both $\sP$-complete. This result remains true for $\sACQ$ on boolean domain and arity at most $3$.
\end{proposition}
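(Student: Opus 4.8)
The plan is to establish $\sP$-hardness by reduction from a standard $\sP$-complete counting problem, while $\sP$-membership is essentially immediate. For membership, note that both the union and intersection have at most $|\phi_1(\calS)| + |\phi_2(\calS)|$ many solutions, so a nondeterministic polynomial-time machine can guess an assignment, verify in polynomial time that it satisfies the appropriate Boolean combination of the two acyclic queries, and count accepting paths; this places both problems in $\sP$. The substance of the statement is therefore the hardness, and here I would reduce from counting the satisfying assignments of a formula, or more conveniently from a graph-counting problem that is naturally expressible as a conjunction of two acyclic structures.

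\medskip

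For the intersection case, the natural target is to write a single $\sP$-hard problem as $\phi_1 \wedge \phi_2$ where each $\phi_i$ is \emph{individually} acyclic but their conjunction encodes something hard. The key idea I would exploit is that acyclicity is not preserved under intersection: a cyclic hypergraph (for instance, the triangle or more generally a graph whose edges form cycles) can be split into two acyclic hypergraphs. Concretely, I would encode counting the number of satisfying assignments of a hard problem by distributing its atoms between $\phi_1$ and $\phi_2$ so that each query's hypergraph has a join tree but their union does not. A clean source is counting independent sets or, to respect the arity-at-most-$3$ and Boolean-domain constraints, counting satisfying assignments of a \emph{monotone} or bipartite structure; since counting the number of solutions to a general (possibly cyclic) conjunctive query over the Boolean domain is $\sP$-complete, and any such query can be written as an intersection of two acyclic queries by a suitable edge-bipartition, the intersection result follows. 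The arity bound of $3$ should be achievable by using ternary relations to glue together the pieces, as is standard when simulating arbitrary constraints.

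\medskip

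For the union case, the standard trick is inclusion--exclusion: $|\phi_1(\calS) \cup \phi_2(\calS)| = |\phi_1(\calS)| + |\phi_2(\calS)| - |\phi_1(\calS) \cap \phi_2(\calS)|$. Since by Theorem~\ref{thm:constructVP} (and its corollary) the two individual quantifier-free $\sACQ$ counts $|\phi_1(\calS)|$ and $|\phi_2(\calS)|$ are computable in polynomial time, a polynomial-time algorithm for the union count would immediately yield the intersection count, and vice versa. Thus the two hardness results are Turing-equivalent given the tractability of single acyclic queries, and it suffices to prove hardness for just one of them and transfer it. I would establish hardness for the intersection directly by the hypergraph-splitting reduction above, then derive union-hardness via this inclusion--exclusion identity.

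\medskip

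The main obstacle I anticipate is engineering the reduction so that all three constraints are met \emph{simultaneously}: the Boolean domain, arity at most $3$, and the requirement that each of $\phi_1, \phi_2$ be genuinely acyclic. Forcing low arity while still being able to split an arbitrary cyclic constraint structure into two acyclic pieces requires care, since introducing auxiliary variables to reduce arity can reintroduce cycles into the pieces. I would therefore choose the starting hard problem to already have a convenient structure — for example a grid-like or bipartite incidence structure whose edges admit an obvious two-coloring into two forests — so that the bipartition into two acyclic hypergraphs is transparent and the ternary relations needed to enforce the constraints do not create new cycles within either part. Verifying acyclicity of each constructed query (by exhibiting an explicit join tree) is the step I would treat most carefully.
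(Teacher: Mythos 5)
Your overall architecture matches the paper's: inclusion--exclusion to make union and intersection interchangeable (given that single quantifier-free $\sACQ$ counts are polynomial-time computable), and then hardness for the intersection by splitting a hard query with grid-like structure into a ``rows'' query and a ``columns'' query, each of which is a disjoint union of paths and hence acyclic. However, there is a genuine gap at the heart of your hardness argument. First, your intermediate claim that any cyclic conjunctive query ``can be written as an intersection of two acyclic queries by a suitable edge-bipartition'' is false: for binary constraints, acyclicity of the query hypergraph means the constraint graph is a forest, so such a bipartition exists exactly when the constraint graph has arboricity at most $2$; dense instances (already $K_6$) admit no such split. You partially recognize this by retreating to ``a grid-like or bipartite incidence structure whose edges admit an obvious two-coloring into two forests'' --- but you never establish that counting solutions of a conjunctive query whose primal graph \emph{is} a grid is $\sP$-hard, and this is precisely the technical content of the paper's proof. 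Standard $\sP$-hard problems do not come with grid structure; the paper has to manufacture it, via a parsimonious reduction from $\mathrm{\#circuitSAT}$ that embeds an arbitrary Boolean circuit into a two-dimensional grid (routing edges along grid lines without congestion, and eliminating wire crossings with an xor-based crossover gadget, all while keeping the reduction parsimonious so that solution counts are preserved). Without this lemma, or some substitute for it, your reduction has no $\sP$-hard source problem of the required shape, so the argument does not go through.

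A secondary, smaller gap concerns the refinement to Boolean domain and arity at most $3$. The grid hardness above naturally yields domain size $4$ (two extra values are needed to make the $\land$-gate simulation work with binary constraints), so the Boolean-domain claim does not follow from it directly; the paper handles this with a second, separate encoding in which the non-Boolean values are replaced by pairs of Boolean variables and the constraints around $\land$-gates become ternary, after which one must re-verify that the row and column hypergraphs (now containing ternary edges sharing two vertices at gate positions) are still acyclic. Your proposal gestures at ``using ternary relations to glue together the pieces'' and correctly flags that auxiliary variables can reintroduce cycles, but it does not carry out this step, which is exactly where the care is needed.
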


\begin{remark}
In~\cite{GSS-01}, it is proved that the (bi-)colored grid homomorphism problem is $\NP$-complete. This result implies part of Proposition~\ref{prop:union}, i.e. that counting the assignments of the conjunction of two $\ACQ$-instances is $\sP$-complete (the fact that this hardness result is still true on Boolean domain does not follow, however).
\end{remark}

For the proof we use the following lemma:

\begin{lemma}\label{lem:grid}
 Counting solutions to quantifier free conjunctive queries whose primal graph is a grid is $\sP$-complete even for domains of size $4$.
\end{lemma}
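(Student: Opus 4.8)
The plan is to establish $\sP$-hardness via a parsimonious (or count-preserving) reduction from a known $\sP$-complete problem that has a natural grid-like structure. The most promising source problem is counting the number of proper colorings, or more specifically counting solutions to a $\sP$-complete lattice/tiling problem; however, since we are restricted to domain size $4$, the cleanest route is to reduce from counting satisfying assignments of a monotone or general CNF formula, or from a counting version of a tiling problem where local constraints propagate along rows and columns of a grid. The key structural idea is that a grid primal graph lets us encode \emph{horizontal} and \emph{vertical} consistency constraints separately: each grid vertex $v_{i,j}$ carries a domain value, and the edges of the grid are exactly the binary atoms of the query, so the query is a conjunction of binary constraints laid out on an $m \times n$ lattice.

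First I would fix the reduction source. A convenient choice is counting the solutions of $\mathrm{\#CSP}$ instances whose constraint graph is already known to be hard, or reducing from a problem like counting independent sets / counting the permanent of a $0/1$ matrix, which is the canonical $\sP$-complete problem. Given a target problem instance, I would build a grid of variables where each row encodes (a copy of) the global object being counted and each column enforces that the encoding is consistent across rows, so that the only satisfying grid-assignments are in bijection with the objects we want to count. The domain of size $4$ gives enough room to encode a bit value together with a small amount of ``synchronization'' or ``signal-passing'' information so that binary edge-constraints can propagate a choice made in one cell along an entire row or column. This is the standard trick: with domain $\{0,1,2,3\}$ one can carry, e.g., a truth value and a marker indicating whether a particular clause or constraint has already been satisfied, letting purely local (edge) constraints simulate a global constraint.

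The main obstacle I expect is twofold. First, one must ensure the reduction is genuinely \emph{parsimonious}: the number of grid-assignments must equal (or be a fixed, efficiently invertible function of) the number of objects counted in the source problem, which forces the encoding to be rigid enough that no spurious or duplicate solutions arise. Controlling overcounting is the delicate part, and it is precisely why the restriction to domain size $4$ (rather than $2$) matters — the extra values are needed to rule out ``illegal'' local configurations that would otherwise produce extra solutions. Second, one must verify that the constructed primal graph is \emph{exactly} a grid (a rectangular lattice), not merely grid-like: every atom must be binary and correspond to a horizontal or vertical grid edge, with no long-range atoms, so that Lemma~\ref{lem:grid}'s hypothesis is met verbatim.

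Finally, I would confirm $\sP$-membership, which is routine: solutions to a quantifier-free conjunctive query can be counted by a nondeterministic polynomial-time machine that guesses an assignment $\tu a : \var{\phi} \to D$ and verifies in polynomial time that every atom is satisfied, so the counting problem lies in $\sP$. Combining membership with the hardness reduction yields $\sP$-completeness, and since the construction uses domain $\{0,1,2,3\}$ throughout, the bound on domain size of $4$ follows directly. The grid shape of the primal graph then feeds into Proposition~\ref{prop:union} because a grid can be covered by two acyclic (indeed, forest-shaped) subqueries — one consisting of the horizontal edges and one of the vertical edges — so counting solutions to the conjunction of these two $\ACQ$-instances is exactly counting solutions to the grid query, inheriting its hardness.
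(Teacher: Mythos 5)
Your preferred route is essentially correct but genuinely different from the paper's. The workable core of your sketch is the observation that with domain $\{0,1\}\times\{0,1\}$ each cell can carry a truth value together with a running ``clause satisfied so far'' marker. Made concrete: to count satisfying assignments of a CNF formula with variables $x_1,\ldots,x_n$ and clauses $C_1,\ldots,C_m$, take an $m\times n$ grid, let cell $(i,j)$ hold a pair $(v_{i,j},s_{i,j})$, let vertical edges force $v_{i-1,j}=v_{i,j}$ (so every row holds the same assignment), and let the horizontal edge between $(i,j-1)$ and $(i,j)$ force $s_{i,j}=s_{i,j-1}\vee\mathrm{sat}_i(j,v_{i,j})$, where $\mathrm{sat}_i(j,b)=1$ iff $x_j=b$ satisfies a literal of $C_i$; the initialization of $s_{i,1}$ and the requirement $s_{i,n}=1$ are folded into the first and last horizontal constraints of row $i$. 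Since the markers are then uniquely forced by the assignment, grid solutions are in bijection with satisfying assignments, so the reduction from $\mathrm{\#SAT}$ is parsimonious, uses domain size exactly $4$, and has the grid as primal graph. The paper instead reduces from $\mathrm{\#circuitSAT}$: it first proves (Proposition~\ref{prop:gridhard}) that counting satisfying assignments of $\wedge$-$\neg$ circuits \emph{embedded in a grid} is $\sP$-complete, which requires routing wires as non-crossing grid paths and eliminating crossings with an XOR-based crossover gadget, and only then encodes the embedded circuit as binary constraints over $\{0,1,2,3\}$, where the values $2,3$ appear only at $\wedge$-gates. Your approach buys simplicity, avoiding all of the planar-embedding topology; the paper's buys an intermediate result of independent interest and a construction in which non-boolean values are localized at $\wedge$-gates, which is what it exploits for the boolean-domain, arity-$3$ refinement inside the proof of Proposition~\ref{prop:union} (your construction survives that refinement too, since each horizontal constraint involves only three of the four bits).

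Two caveats. First, you hedge over the source problem, and most of the alternatives you float would \emph{not} go through with your trick: reducing from the permanent, from independent sets in general graphs, or from a generic $\sP$-hard structure means embedding a non-planar incidence structure into a grid, which is exactly the wire-crossing problem that the paper's crossover gadget exists to solve; your row-copy/marker scheme works precisely because a CNF formula's clause checks can be serialized along rows. So you must commit to the CNF source. Second, the two difficulties you correctly flag (parsimony and the primal graph being exactly a grid) are only resolved once the edge relations are written out explicitly as above; as submitted, your text is a plan rather than a proof, though one that completes routinely.
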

\begin{proof}
Counting solutions to general quantifier free conjunctive queries is in $\sP$, so we only need to show hardness. We show hardness by reducing a restricted version of $\mathrm{\#circuitSAT}$ to $\sCQ$ with the desired grid structure. From the $\sP$-completeness of our $\mathrm{\#circuitSAT}$ version we get $\sP$-hardness for counting solutions of conjunctive queries with grid structure.

We now define this version of $\mathrm{\#circuitSAT}$ that we call $\#$($\land$-$\neg$-$\mathrm{grid}$)-$\mathrm{circuitSAT}$: An instance of $\#$($\land$-$\neg$-$\mathrm{grid}$)-$\mathrm{circuitSAT}$ is a boolean circuit which only contains $\land$- and $\neg$-gates and in which all gates are vertices of a $2$-dimensional grid. Furthermore, the edges of the circuit are non-intersecting paths along the edges of the grid.

\begin{proposition}\label{prop:gridhard}
$\#$($\land$-$\neg$-$\mathrm{grid}$)-$\mathrm{circuitSAT}$ is $\sP$-complete under parsimonious reductions.
\end{proposition}
\begin{proof}
We make a parsimonious reduction from $\mathrm{\#circuitSAT}$. Let $C$ be a $\mathrm{\#circuitSAT}$ instance, i.e. a boolean circuit. In a first step we substitute all $\lor$-gates $x\lor y$ by $\neg(\neg x \land \neg y)$. We then make sure that every gate has at most degree $3$ and that all input gates and the output gate have at most degree $2$ by adding double negations. Call the resulting circuit~$C'$.

We now embed $C'$ into a grid. To do so we take a three step approach that starts with a coarse grid that is then refined. Let $n$ be the size of $C'$. We first distribute the vertices that represent gates into a $n\times n$-grid $G_1$ such that each vertex of depth $i$ has the coordinates $(i,j)$ for some $j$. Furthermore each edge of the circuit is a sequence of straight lines where each straight line goes from a vertex in one row to another vertex in the next row. Also in each vertex of $G_1$ at most two lines start and end. For vertices on which no gate of $C'$ lies, we assume that at most one edge starts and ends. It is clear that such an embedding can be constructed easily.

In a second step we make sure that the edges of the circuit follow the edges of a grid without congestion. We do this for each row of the coarse grid $G_1$ individually. We construct a new grid $G_2$ by adding $2n-1$ new rows before each row in $G_1$ and one new column before each column. Observe that each vertex $(i,j)$ in $G_1$ has the coordinates $(2ni, 2j)$ in $G_2$. Each vertex $v$ of $G_1$ in row $i$ has a most $2$ outgoing straight lines $l_1, l_2$ representing edges of the circuit $C'$ which both end in a vertex of row $i+1$. Let $l_1$ end in $(i+1, j)$ and $l_2$ end in $(i+1, j')$ with $j<j'$, then we call $l_1$ be the low output and $l_2$ the high output. If there is only one output, we define it to be high. We also make the equivalent definition for high and low inputs. 

Now we substitute the lines representing edges of $C'$ by paths in $G_2$. Let $l$ be a line that starts in $G_1$ in $(i,j')$ and ends in $(i+1, j)$. We construct a path $P_l$ from $(2ni, 2j)$ to $(2n(i+1), 2j')$:
\begin{itemize}
 \item If $l$ is a low output and a low input the path is the piecewise linear curve through the vertices $(2ni, 2j)(2ni, 2j-1)(2ni+ 2j, 2j-1)(2ni+ 2j, 2j')(2n(i+1), 2j')$.
 \item If $l$ is a high output and a low input the path is through $(2ni, 2j)(2ni+2j+1, 2j)(2ni+ 2j+1, 2j')(2n(i+1), 2j')$. 
 \item If $l$ is a low output and a high input the path is through $(2ni, 2j)(2ni, 2j-1)(2ni + 2j, 2j-1)(2ni+ 2j, 2j'+1)(2n(i+1), 2j'+1)(2n(i+1), 2j')$. 
\item If $l$ is a high output and a high input the path is through $(2ni, 2j)(2ni+2j+1, 2j)(2ni+ 2j+1, 2j'+1)(2n(i+1), 2j'+1)(2n(i+1), 2j')$. 
\end{itemize}
The result is an embedding of $C'$ into a grid such that the gates are on vertices of $G_2$ and the edges of $C'$ are paths in the grid. Observe that the paths were constructed in such a way that two paths between gates never share edges, so they only intersect in single vertices.

In the final step of the reduction we get rid of these intersections on non-gate vertices by adding additional gates. Each crossing in $G_2$ is substituted by the gadget illustrated in Figure \ref{fig:gridgadget}. To do so we make the grid finer again by a constant factor. The result is a circuit $C''$ that is embedded into a grid. Furthermore $C''$ has the same satisfying assignments as $C'$.

\begin{figure}[t]
 \centering
\begin{tikzpicture}[scale=.8, transform shape]
\node (a1) at (2,0) {a};
\node (ae1) at (2,4) {};
\node (b1) at (0,2) {b};
\node (be1) at (4,2) {};
\node (a2) at (6,2) {b};
\node (ae2) at (10,2) {};
\node (b2) at (8,0) {a};
\node (be2) at (8,4) {};
\node (a3) at (12,2) {x};
\node (ae3) at (15,4) {$x\oplus y$};
\node (b3) at (14,0) {y};
\node (be3) at (14,4) {};
\tikzstyle{every node}=[draw,shape=circle];
\node (1) at (8,2) {$\oplus$};
\node (2) at (8,3) {$\oplus$};
\node (3) at (9,2) {$\oplus$};
\node (4) at (13,3) {$\neg$};
\node (5) at (14,3) {$\land$};
\node (6) at (15,3) {$\lor$};
\node (7) at (14,2) {$\lor$};
\node (8) at (15,2) {$\land$};
\node (9) at (15,1) {$\neg$};
\draw [->] (a1) -- (ae1);
\draw [->] (b1) -- (be1);

\draw [->] (a2) -- (7,2) -- (1);
\draw [->] (7,2) -- (7,3) -- (2);
\draw [->] (b2) -- (1);
\draw [->] (1) -- (2);
\draw [->] (1) -- (3);
\draw [->] (8,1) -- (9,1) -- (3);
\draw [->] (2) -- (be2);
\draw [->] (3) -- (ae2);

\draw [->]  (13,2) -- (4);
\draw [->] (a3) -- (7);
\draw [->]  (14,1) -- (9);
\draw [->] (b3) -- (7);
\draw [->] (4) -- (5);
\draw [->] (7) -- (5);
\draw [->] (5) -- (6);
\draw [->] (7) -- (8);
\draw [->] (9) -- (8);
\draw [->] (8) -- (6);
\draw [->] (6) -- (ae3);

\end{tikzpicture}
 \caption{The crossing paths in the left are substituted by a gadget without crossings in the middle that uses $\oplus$-gates which compute $\mathrm{xor}$ of its inputs. It is easily checked that the outputs compute $(a\oplus b) \oplus a$ and $(a\oplus b) \oplus b$ which simplify to $b$ and $a$ respectively. On the right we show how the $\oplus$-gates can be simulated over the basis $\land, \lor, \neg$ without losing planarity. Degree $4$ gates, splitting of edges and $\lor$-gates can be avoided by introducing some more $\neg$-gates and using De Morgan's law.}
 \label{fig:gridgadget}
\end{figure}
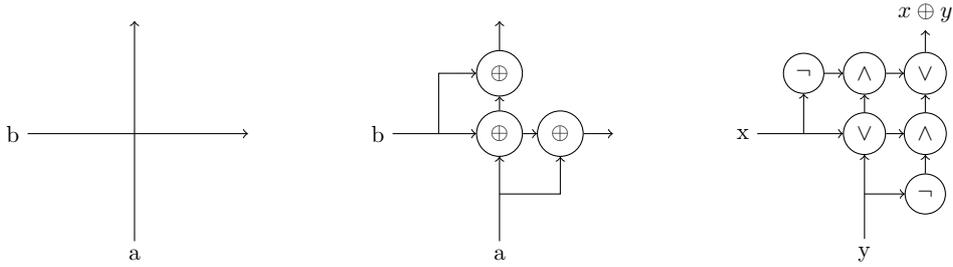

\end{proof}

\begin{remark}
 We could also have given a proof of Proposition \ref{prop:gridhard} with results on embedding general planar graphs into grids in the way we need it (see e.g. \cite{Valiant-81}). We have chosen to present an ad-hoc proof instead to keep the results of this paper self-contained.
\end{remark}

We now reduce $\#$($\land$-$\neg$-$\mathrm{grid}$)-$\mathrm{circuitSAT}$ to $\sCQ$ instances of grid structure. So let $(C,G)$ be an instance of $\#$($\land$-$\neg$-$\mathrm{grid}$)-$\mathrm{circuitSAT}$, i.e. a circuit $C$ that is embedded into a grid $G$. Let $G$ be of size $n\times n$. W.l.o.g. we may assume that no gates are on neighboring vertices in $G$ and that the output gate is not a $\land$-gate. For each $\land$-gate af $C$ we arbitrarily fix one input as the first input while the other one is the second one. We construct a binary conjunctive query $\Phi$ whose primal graph is $G$. The domain is $\{0,1,2,3\}$ where $0$ and $1$ represent the usual boolean values while $2$ and $3$ are used in a gadget construction for $\land$-gates. For each edge $e=uv$ in $G$ we add a constraint $\phi_e$ in the variables $u$ and $v$ in the following way:
\begin{itemize}
  \item If $e$ is not an edge of $C$, $\phi_e$ has the satisfying assignments $\{ab \mid a,b \in \{0,1,2,3\}\}$.
 \item If $e$ is an edge of $C$ directed from $u$ to $v$ and $v$ is not a gate and $u$ is not a $\land$-gate, $\phi_e$ has the satisfying assignments $\{00, 11\}$.
 \item If $e$ is an edge of $C$ directed from $u$ to $v$ and $v$ is a $\neg$-gate, $\phi_e$ has the satisfying assignments $\{01, 10\}$.
\item If $e$ is an edge of $C$ directed from $u$ to $v$ and $v$ is a $\land$-gate and the path to $v$ over $u$ is from the first input of $v$, $\phi_e$ has the satisfying assignment $\{00, 02, 11, 13\}$.
\item If $e$ is an edge of $C$ directed from $u$ to $v$ and $v$ is a $\land$-gate and the path to $v$ over $u$ is from the second input of $v$, $\phi_e$ has the satisfying assignment $\{00, 01, 12, 13\}$.
\item If $e$ is an edge of $C$ directed from $u$ to $v$ and $v$ is not a gate and $u$ is a $\land$-gate, $\phi_e$ has the satisfying assignments $\{00, 10, 20, 31\}$.
\end{itemize}

Observe that the construction near the $\land$-gates is possible, because no two gates are neighbors. So the constraints are all well defined.
Now each vertex that is not part of $C$ gets a unary constraint that has only the single satisfying assignment $1$. Also the output gate of $C$ gets such a unary constraint.

We claim that if we fix an assignment $a$ to the variables representing the inputs of $C$, there is an satisfying extension to the other variables if and only if $a$ satisfies $C$. Furthermore, this extension is unique. It is clear that the constraints along the paths and on the $\neg$-gates propagate the correct values along the grid. In a satisfying assignment, the variable representing an $\land$-gate has to take the value representing the values of its inputs in binary. The gates after the $\land$-gates then calculate the conjunction value for these inputs.
\end{proof}

\begin{proof}[Proof of Proposition \ref{prop:union}]
 Again, we only need to show hardness. By the inclusion-exclusion principle counting for unions and intersections is equally hard, so it suffices to show hardness for intersections. The reduction is straightforward with Lemma \ref{lem:grid}. Let $\Phi$ be a conjunctive query whose primal graph is a grid. We separate the constraints into two new formulas: $\Phi_1$ gets all the constraints that lie on rows of the grid, $\Phi_2$ gets those on the columns. Clearly we have $\Phi = \Phi_1 \land \Phi_2$ and the $\Phi_i$ are acyclic. Thus the first part of the lemma follows.
 
 To show that the result is true for queries on boolean domain, we sketch a different encoding of $\land$-$\neg$-$\mathrm{grid}$-$\mathrm{\#circuitSAT}$ into conjunctive queries. Roughly speaking, the structure of the encoding is basically the same but non boolean elements are mapped to sequences of boolean variables  (that represent their binary encodings). To do so we need ternary relations.  For completeness, details are given below. Again let $G$ be the $n\times n$  grid and suppose no gates are on neighboring vertices in $G$ and that the output gate is not a $\land$-gate.  For each $\land$-gate $v$, we introduce a second vertex/variable $v_1$. We construct a ternary CQ instance $\Phi$ as follows.  For each edge $e=uv$ in $G$ we add an constraint $\phi_e$ in the following way:
\begin{itemize}
  \item If $e$ is not edge of $C$ , $\phi_e$ has the satisfying assignments $\{00, 01, 10, 11\}$.
 \item If $e$ is an edge of $C$ directed from $u$ to $v$ and $v$ is not a gate and $u$ is not a $\land$-gate, $\phi_e$ has the satisfying assignments $\{00, 11\}$.
 \item If $e$ is an edge of $C$ directed from $u$ to $v$ and $v$ is a $\neg$-gate, $\phi_e$ is the constraint  with the following satisfying assignments $\{01, 10\}$.
\item If $e$ is an edge of $C$ directed from $u$ to $v$ and $v$ is a $\land$-gate and the path to $v$ over $u$ is from the first input of $v$, $\phi_e$ is the ternary constraint on variables $u,v,v_1$ with the following satisfying assignment set $\{000, 010, 101, 111\}$.
\item If $e$ is an edge of $C$ directed from $u$ to $v$ and $v$ is a $\land$-gate and the path to $v$ over $u$ is from the second input of $v$, $\phi_e$ is the ternary constraint on variables $u,v,v_1$ which has the satisfying assignment $\{000, 001, 110, 111\}$.
\item If $e$ is an edge of $C$ directed from $u$ to $v$ and $v$ is not a gate and $u$ is a $\land$-gate, $\phi_e$ is the constraint on variables $u_1,u,v$ which has the satisfying assignments $\{000, 010, 100, 111\}$.
\end{itemize}

The constraint is then split into two conjunctive query instances $\Phi_1$ and $\Phi_2$ as above grouping horizontal and vertical constraints separately. Note that, connection at $\and$ gates $v$ between hyperedges is now on two vertices  $v$ and $v_1$. But the resulting hypergraphs for $\Phi_1$ and $\Phi_2$ are well acyclic.
\end{proof}

The reductions of this section are all parsimonious, so we directly get the following corollary~\footnote{We state this corollary for completeness. Although we found no references, it is certainly already known}:

\begin{corollary}
 Deciding if the intersection of the solutions of two quantifier free acyclic conjunctive queries is nonempty is $\NP$-hard.
\end{corollary}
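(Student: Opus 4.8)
The plan is to observe that the entire chain of reductions assembled in this section is parsimonious, and that a parsimonious reduction in particular preserves emptiness of the solution set. Hence the very same chain that established $\sP$-hardness of counting the intersection will simultaneously yield $\NP$-hardness of the corresponding emptiness problem, once we feed it a hard \emph{decision} problem rather than a hard counting problem. The only general fact needed is that a parsimonious reduction, which by definition preserves the exact number of solutions, maps zero-count instances to zero-count instances and nonzero-count instances to nonzero-count instances, and is therefore at the same time a many-one reduction between the associated nonemptiness problems.

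Concretely, I would start from $\mathrm{circuitSAT}$, the decision version of Boolean circuit satisfiability, which is $\NP$-complete. The transformation of Proposition~\ref{prop:gridhard} turns a circuit $C$ into an instance of $\#$($\land$-$\neg$-$\mathrm{grid}$)-$\mathrm{circuitSAT}$ while preserving the set of satisfying assignments (up to the bijection induced by the double-negation and crossing gadgets), so $C$ is satisfiable if and only if the grid circuit is. Next, the reduction of Lemma~\ref{lem:grid} encodes such a grid circuit as a quantifier-free $\CQ$ instance $\Phi=(\calS,\phi)$ of grid structure in which, as shown there, each satisfying assignment to the input variables extends \emph{uniquely} to a solution of $\Phi$; thus $\phi(\calS)\neq\emptyset$ precisely when the circuit is satisfiable. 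Finally, the splitting step in the proof of Proposition~\ref{prop:union} writes $\Phi=\Phi_1\wedge\Phi_2$ with both $\Phi_1$ and $\Phi_2$ acyclic and quantifier free (the row constraints versus the column constraints), so the intersection of the solution sets of $\Phi_1$ and $\Phi_2$ equals $\phi(\calS)$ and is nonempty exactly when $C$ is satisfiable.

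Composing these three polynomial-time transformations gives a many-one reduction from $\mathrm{circuitSAT}$ to the problem of deciding whether the intersection of the solutions of two quantifier-free acyclic conjunctive queries is nonempty, which establishes $\NP$-hardness. I do not expect any real obstacle here: since each individual step was already verified to be parsimonious in the course of proving Lemma~\ref{lem:grid} and Proposition~\ref{prop:union}, no new combinatorial work is required, and the corollary follows immediately by reading the same reductions at the level of decision problems. The one point worth stating explicitly is precisely the parsimony-to-nonemptiness principle above, so that the reader sees why starting from the $\NP$-complete decision problem $\mathrm{circuitSAT}$ (rather than from $\mathrm{\#circuitSAT}$) suffices.
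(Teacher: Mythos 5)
Your proposal is correct and is essentially the paper's own argument: the paper dispatches this corollary with the single remark that ``the reductions of this section are all parsimonious,'' which is precisely the parsimony-to-nonemptiness principle you state, applied to the same chain $\mathrm{circuitSAT} \rightarrow \#$($\land$-$\neg$-$\mathrm{grid}$)-$\mathrm{circuitSAT} \rightarrow$ grid-structured $\CQ \rightarrow$ intersection of two acyclic queries. Your write-up merely makes explicit what the paper leaves implicit, so there is nothing to add or correct.
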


Note that in contrast, it is well-known that deciding the disjunction of acyclic conjunctive queries can be done in time linear in $\|\Phi\|$. 

\section{Quantified star size}\label{sct:starsize}

\subsection{Definitions and statement of the results}\label{sct:results}

It is proved in \cite{PS-11} that introducing one single existential quantifier in 
acyclic conjunctive queries leads to $\sP$-complete problems. So bounding the number of quantified variables does not yield tractable instances.
 In this section, we will show that not the number of quantified variables is crucial but how they are distributed in the associated hypergraph. 
 A basic observation on the hard instance in \cite{PS-11} is that the formula has a star (in the graph theoretical sense) in its associated graph whose center is the quantified variable. 
 Abstracting this observation, we introduce a parameter called \textit{quantified star size} that leads to tractable $\sACQ$ instances.

Before we formulate the main results of this section, we make several definitions. 

Let $\calH= (V,E)$ be a hypergraph and $S \subseteq V$. 
The \emph{induced subhypergraph}  $\calH[S]$ of $\calH$ is the hypergraph $\calH[S] = (S, \{e \cap S \mid e \in E, e \cap S \ne \emptyset\})$. The induced subhypergraph of an edge set $E'\subseteq E$ is $\calH[E'] = (\bigcup_{e\in E'}e, E')$. 
Let $x,y \in V$, a \textit{path} between $x$ and $y$ is a subset of edges $e_1,...,e_k\in E$ such that $x \in e_1$, $y \in e_k$, and for all $i\leq k-1$, $e_i \cap e_{i+1}\neq \emptyset$. 
 Alternatively, a path can be seen as the sequence of vertices $(x,a_1,...,a_k,y)$ such that 
 $a_i\in e_i\cap e_{i+1}$.
These definitions apply to graphs as well.

\begin{observation}\label{obs:acyclicitytransmission} If $\calH$ is an acyclic hypergraph and $C\subseteq V(H)$, then $\calH[C]$ is acyclic. If $\calT=(V_T,T)$ is a join tree of $\calH$ then $\calT[C]$, the tree obtained from $T$ by replacing for all vertices $t\in V(T)$ the labeling $\lambda(t)$ by $\lambda(t)\cap C$, is a join tree of $\calH[C]$. 
\end{observation}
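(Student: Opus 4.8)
The plan is to prove the second (stronger) assertion, from which the first follows immediately: once $\calT[C]$ is shown to be a join tree of $\calH[C]$, the hypergraph $\calH[C]$ has a join tree and is therefore acyclic by definition. So I would fix a join tree $(\calT,\lambda)$ of $\calH$ with $\calT=(V_T,T)$ and check that the relabelled tree on the same underlying tree $T$, with $\lambda'(t):=\lambda(t)\cap C$, satisfies the two defining conditions of a join tree for $\calH[C]=(C,\{e\cap C\mid e\in E,\ e\cap C\neq\emptyset\})$.

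The connectedness condition is the heart of the matter, and it transfers for free. For any $v\in C$ we have the set equality $\{t\in V_T\mid v\in\lambda'(t)\}=\{t\in V_T\mid v\in\lambda(t)\cap C\}=\{t\in V_T\mid v\in\lambda(t)\}$, where the last step uses $v\in C$. The right-hand set is a connected subtree of $T$ because $(\calT,\lambda)$ is a join tree of $\calH$, so the same holds after restriction. Surjectivity onto the edges of $\calH[C]$ is equally direct: every edge of $\calH[C]$ has the form $e\cap C$ for some $e\in E$ with $e\cap C\neq\emptyset$; choosing $t$ with $\lambda(t)=e$ (such $t$ exists since $\lambda$ is surjective onto $E$) gives $\lambda'(t)=e\cap C$.

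The one point that needs care — and the only real obstacle — is that some node $t$ may satisfy $\lambda(t)\cap C=\emptyset$, so $\lambda'(t)=\emptyset$ is not an edge of $\calH[C]$ and the naive relabelling is not literally a map into $E(\calH[C])$. I would repair this by reassigning such nodes. Let $W=\{t\in V_T\mid\lambda(t)\cap C\neq\emptyset\}$; assuming $\calH$ has no isolated vertices, $W$ is nonempty whenever $C\neq\emptyset$, since each $v\in C$ lies in some edge $e$ and the node labelled $e$ belongs to $W$. Root $T$ at a node of $W$ and, for each $t\notin W$, let $g(t)$ be the nearest ancestor of $t$ lying in $W$ (and set $g(t)=t$ for $t\in W$); this is well defined because the root is in $W$. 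Redefining $\lambda'(t):=\lambda(g(t))\cap C$ now yields a genuine edge of $\calH[C]$ at every node.

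Finally I would recheck both conditions for this corrected labelling. Surjectivity is unaffected, since nodes of $W$ keep their labels. For connectedness, fix $v\in C$ and let $A=\{t\mid v\in\lambda(t)\}$, the original connected subtree (note $A\subseteq W$). Then $\{t\mid v\in\lambda'(t)\}=\{t\mid g(t)\in A\}=g^{-1}(A)$, and I would verify that $g^{-1}(A)$ is connected: for any $t\in g^{-1}(A)$, every node $u$ on the tree path from $t$ up to $g(t)$ is either $t$ itself or lies strictly between $t$ and $g(t)$ and hence outside $W$, so $g(u)=g(t)\in A$; thus the whole path lies in $g^{-1}(A)$ and joins $t$ to the connected set $A\subseteq g^{-1}(A)$. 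Hence $g^{-1}(A)$ is connected, which completes the verification. This shows that $\calT[C]$ (with empty labels repaired) is a join tree of $\calH[C]$, and acyclicity of $\calH[C]$ follows.
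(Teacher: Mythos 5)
Your proof is correct, and its core is exactly the paper's one-line argument: for $v\in C$ the set $\{t\mid v\in\lambda(t)\cap C\}$ equals $\{t\mid v\in\lambda(t)\}$, so the connectedness condition transfers verbatim, and surjectivity onto the edges of $\calH[C]$ is immediate. Where you go beyond the paper is in the handling of nodes whose label becomes empty: the paper dismisses this with the remark that $\calT[C]$ is a subforest of $\calT$ (implicitly deleting empty-labelled nodes, which may disconnect the tree and strictly speaking yields a join forest rather than a join tree), whereas you keep the underlying tree intact and relabel each empty node with the trimmed label of its nearest ancestor carrying a nonempty label, then verify that the preimage $g^{-1}(A)$ of each connected set $A$ remains connected. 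Both repairs work --- in the paper's version one would note that each set $\{t\mid v\in\lambda(t)\}$ survives inside a single forest component, so the components can be rejoined arbitrarily --- but your version is the more careful one and proves the statement literally as worded, with the same underlying tree $T$.
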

\begin{proof}
Immediate. $\calT[C]$ is a subforest of $\calT$. The connectedness condition of the set  $\{t\in V_{T[C]}: \ v \in \lambda(t)\}$, for all $v\in C$ is obviously true. 
\end{proof}

\begin{definition}[$S$-component~\cite{BDG-07}]
Let 
$\calH=(V,E)$ be a hypergraph and $S\subseteq V$. 
Let $E_{\nsubseteq S}$ be the set of hyperedges $\{e\in E: e \nsubseteq S\}$. 
The \emph{$S$-component} of $e\in E_{\nsubseteq S}$ is the hypergraph $\calH[E']$ where $E'$ is the set of all edges $e'\in E_{\nsubseteq S}$ such that there is a path from $e-S$ to $e'-S$ in $\calH[V-S]$.
\end{definition}

It is clear that hyperedges of $ E_{\nsubseteq S}$ are partitioned into pairwise disjoint $S$-components.  

\begin{definition}[$S$-$k$-star, $S$-star size] Let $\calH=(V,E)$ be a hypergraph, $S\subseteq V$ and $k\in\mathbb{N}$. The subhypergraph $\calH'=(V',E')$ of $\calH$ is a \emph{$S$-$k$-star} if:
\begin{itemize}
\item $\calH'$ is an $S$-component of $\calH$.
\item there exist $y_1,...,y_k\in V'\cap S$ such that there is no edge $e\in E$ that contains more than one of the $y_i$.
\end{itemize}
We say that $y_1, \ldots , y_k$ form the $S$-$k$-star.

The \emph{$S$-star size} of $\calH$ is the maximum $k$ such that there is a $S$-$k$-star in $\calH$. 
\end{definition}

In other words, the $S$-star size of a hypergraph $\calH$ is the maximal star size of its $S$-components.

\begin{observation}\label{obs:inducedstarsize}
Let $\calH=(V,E)$ be a hypergraph, $S\subseteq V$ and $\calH'$ be an $S$-component of $\calH$. Then, if $\calH$ is acyclic, $\calH'$ is acyclic.
\end{observation}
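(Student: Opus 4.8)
The statement to prove is Observation~\ref{obs:inducedstarsize}: if $\calH=(V,E)$ is acyclic and $\calH'$ is an $S$-component of $\calH$, then $\calH'$ is acyclic.

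\medskip

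The plan is to realize $\calH'$ as an induced subhypergraph of $\calH$ in the sense of the edge-induced construction $\calH[E']$, and then to invoke the hereditary acyclicity already recorded in Observation~\ref{obs:acyclicitytransmission}. By definition an $S$-component is exactly a hypergraph of the form $\calH[E']$ for a suitable set $E' \subseteq E_{\nsubseteq S}$ of hyperedges (those edges reachable by a path in $\calH[V-S]$ from the seed edge). So $\calH'$ is the subhypergraph spanned by a \emph{subset of the edges} of $\calH$, with vertex set $\bigcup_{e \in E'} e$. The task therefore reduces to showing that acyclicity is preserved when we delete edges (and the vertices that become isolated).

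\medskip

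The concrete approach is to build a join tree for $\calH'$ directly from a join tree of $\calH$. First I would take a join tree $(\calT,\lambda)$ of $\calH$, which exists by the assumed acyclicity. The natural candidate for a join tree of $\calH' = \calH[E']$ is obtained by keeping only the vertices $t \in V_T$ whose label $\lambda(t)$ lies in $E'$; one must check that the subgraph of $\calT$ induced on these vertices is connected (so that it is a genuine tree, not a forest), and that the connectedness condition of the join-tree definition survives. The coverage condition is immediate, since every $e \in E'$ still appears as some $\lambda(t)$. The connectedness condition for each $v \in V(\calH')$ is inherited: the set $\{t : v \in \lambda(t)\}$ was a connected subtree in $\calT$, and restricting to the retained vertices keeps it connected provided the retained vertex set itself forms a subtree.

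\medskip

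The main obstacle I anticipate is precisely this connectivity issue: restricting a join tree to an arbitrary subset of its nodes may disconnect it, yielding only a join \emph{forest}, which corresponds to the fact that a general edge-subset of an acyclic hypergraph need not be a single acyclic component but rather a disjoint union of acyclic pieces. The key point that rescues the argument is that $\calH'$ is not an arbitrary edge subset but an $S$-\emph{component}, i.e. a \emph{connected} object under reachability in $\calH[V-S]$; this connectivity should force the retained nodes to span a connected subtree. I would therefore argue that any two retained nodes are joined by a path in $\calT$ all of whose internal nodes can also be taken to carry labels in $E'$, using the path-connectedness in the definition of the $S$-component together with the connectedness condition of the original join tree. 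Alternatively, and perhaps more cleanly, I would note that $\calH[E']$ is an \emph{edge-induced subhypergraph} of the acyclic $\calH$ and appeal to the standard fact (a mild variant of Observation~\ref{obs:acyclicitytransmission}) that edge-induced subhypergraphs of acyclic hypergraphs are acyclic, since deleting edges can only destroy cycles, never create them. Either route gives the result, with the connectivity-of-the-component observation being the substantive ingredient that upgrades ``acyclic forest'' to ``acyclic with a single join tree.''
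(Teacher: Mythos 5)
Your first route is essentially the paper's own proof: take a join tree $\calT$ of $\calH$, keep exactly the nodes whose labels lie in $E'$, use the connectivity built into the definition of an $S$-component to show that these nodes induce a subtree $\calT'$ of $\calT$, and observe that the connectedness condition survives because, for each vertex $v$, the retained occurrence set is the intersection of two subtrees of a tree, hence connected. To make this fully precise you would spell out the connectivity step as follows: if $g,g'$ are consecutive edges in a chain witnessing the component, they share a vertex $v\notin S$, so the $\calT$-path between any node labelled $g$ and any node labelled $g'$ runs inside the subtree of nodes whose labels contain $v$; every label on that path therefore contains $v\notin S$ and shares it with $g$, hence belongs to $E'$ because the component is closed under sharing a vertex outside $S$. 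That is exactly the substance the paper compresses into ``it follows.''

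However, your ``alternatively, and perhaps more cleanly'' route is false, and the closing claim that ``either route gives the result'' is not correct. For hypergraph ($\alpha$-)acyclicity, deleting edges \emph{can} create cycles: the hypergraph with edges $\{a,b\},\{b,c\},\{a,c\},\{a,b,c\}$ is acyclic (take the node labelled $\{a,b,c\}$ as root with the three small edges as leaves), but the edge-induced subhypergraph on $\{a,b\},\{b,c\},\{a,c\}$ is a triangle and has no join tree. So there is no ``standard fact'' that edge-induced subhypergraphs of acyclic hypergraphs are acyclic; Observation~\ref{obs:acyclicitytransmission} concerns vertex-induced subhypergraphs, where every edge is kept but intersected with $C$, and it has no such edge-deletion variant. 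This also corrects your framing of the obstacle: restricting $\calT$ to an arbitrary edge subset does not merely yield a join \emph{forest} of acyclic pieces that needs to be reconnected --- the connectedness condition itself can break, as in the triangle example, where $a$ occurs in two retained labels whose connecting path passes only through the deleted node. The $S$-component hypothesis is thus doing real work: it is precisely what makes the retained nodes connected, and without it the statement is simply false rather than merely awkward to phrase.
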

\begin{proof}
Let $\calT$ be a join tree of $\calH$.
An $S$-component $\calH'=(V',E')$ is a subhypergraph induced by the edge set $E'$. By definition $E'$ is connected in $G$ and it follows that $\{t\mid \lambda(t)\in E'\}$ induces a subtree $\calT'$ of $\calT$. The connectedness condition holds in $\calT$ and thus it holds in $\calT'$, too. It follows that $\calT'$ is a join tree of $\calH'$ and $\calH'$ is acyclic.
\end{proof}

\begin{definition}
 The \emph{quantified star size} of a acyclic conjunctive formula $\phi(\tu x)$ is the $S$-star size of the hypergraph $\calH$ associated to $\phi(\tu x)$, where $S$ is the set of free variables in~$\phi(\tu x)$.
\end{definition}

\begin{example} The formula $\phi(x,y)\equiv \exists t \exists z  R(x,y,t) \wedge S(x,z,t)$ has quantified star size $1$ because the free variables $x$ and $y$ appear together in one constraint.

Paths formulas (of arbitrary length), for example $\phi(x,y,z)\equiv \exists t_1 \exists t_2   R(x,t_1) \wedge R(t_1,z) \wedge R(z,t_2) \wedge R(t_2,y)$, are of quantified star size $2$. 

Star formulas, such as $\phi(x,y,z,t)\equiv \exists u  R(u,x) \wedge R(u,y) \wedge R(u,z) \wedge R(u,t)$ have quantified star size  equal to the degree of the center of the star (here $4$). 
\end{example}

\begin{example}~\label{example:formulahard} The hard formula of \cite{PS-11} is of quantified star size $n$, the size of the structure  domain.
\end{example}

We now formulate the main results of this section. The first result is that bounding the quantified star size yields tractable counting problems.

\begin{theorem}\label{thm:countingeasy}
 There is an algorithm that given an acyclic conjunctive query $\Phi$ computes an arithmetic circuit $C$ that computes $Q(\Phi)$. The runtime of the algorithm is $\|\Phi\|^{O(k)}$ where $k$ is the quantified star size of $\Phi$.
\end{theorem}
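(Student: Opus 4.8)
The plan is to reduce the quantified query to an \emph{equivalent quantifier-free} acyclic query over the free variables only, and then invoke Theorem~\ref{thm:constructVP}. Write $S=\free{\phi}$, so that the relevant polynomial is $Q(\Phi)=\sum_{a\in\phi(\calS)}\prod_{x\in S}X_{a(x)}$, the sum ranging over assignments $a\colon S\to D$ that admit an extension to the quantified variables satisfying the body of $\phi$. The first observation is that this extendability condition factorizes: since distinct $S$-components share only vertices of $S$ and every quantified variable lies in exactly one $S$-component, an assignment $a$ is a solution iff it satisfies every atom $e\subseteq S$ and, for each $S$-component $\calC$, its restriction $a|_{F_\calC}$ (where $F_\calC=V(\calC)\cap S$) lies in $R_\calC:=\pi_{F_\calC}(\phi_\calC(\calS))$, the set of $F_\calC$-assignments extendable inside $\calC$. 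A component with $F_\calC=\emptyset$ reduces to a global satisfiability check and can be pre-processed away. Hence, if we can materialize each $R_\calC$ as an explicit relation, the new query $\Phi'$ consisting of the atoms $e\subseteq S$ together with one atom $R_\calC$ per component has $\phi'(\calS')=\phi(\calS)$ and therefore $Q(\Phi')=Q(\Phi)$.

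The crux is to bound $|R_\calC|$ by $\|\Phi\|^{O(k)}$ and to compute it in that time. Here I would use the join tree $\calT$ of $\calH$: by Observation~\ref{obs:inducedstarsize} the component $\calC$ induces a connected subtree $\calT_\calC$, and each free variable $x\in F_\calC$ occupies a connected subtree $T_x$ of $\calT_\calC$. The intersection graph of the family $\{T_x : x\in F_\calC\}$ is the intersection graph of subtrees of a tree, hence chordal and therefore perfect; its independence number is exactly the $S$-star size of $\calC$, which is at most $k$ (an independent set is a family of pairwise disjoint $T_x$, i.e.\ free variables pairwise sharing no atom, which is precisely an $S$-star). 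By perfection the vertices split into at most $k$ cliques $F_1,\dots,F_k$, and by the Helly property of subtrees of a tree each clique $F_i$ has a common node $t_i$ with $F_i\subseteq\lambda(t_i)$. Consequently every $a\in R_\calC$ restricts on each $F_i$ to the projection of some tuple of the relation $\lambda(t_i)(\calS)$, of which there are at most $|\calS|$; so $R_\calC$ contains at most $|\calS|^k$ assignments. To compute $R_\calC$ I would enumerate these (at most $|\calS|^k$) candidate assignments, one projected tuple per group, and test each for extendability inside $\calC$ by the Yannakakis-style evaluation of Claim~\ref{claim}, which is polynomial. This costs $\|\Phi\|^{O(k)}$ per component.

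It then remains to check that $\Phi'$ is acyclic so that Theorem~\ref{thm:constructVP} applies. A join tree for $\calH'$ is obtained from $\calT$ by contracting, for each component, the connected subtree $\calT_\calC$ to a single node labelled $F_\calC$, while keeping the nodes $t$ with $\lambda(t)\subseteq S$ (relabelled $\lambda(t)$). This remains a tree, and the connectedness condition survives because contracting a connected subtree keeps the occurrence set of every variable connected, exactly in the spirit of Observation~\ref{obs:acyclicitytransmission}. Applying Theorem~\ref{thm:constructVP} to $\Phi'$ yields a circuit for $Q(\Phi')=Q(\Phi)$ in time polynomial in $\|\Phi'\|$, and since $\|\Phi'\|=\|\Phi\|^{O(k)}$ the whole procedure runs in $\|\Phi\|^{O(k)}$.

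I expect the main obstacle to be the combinatorial heart of the second paragraph: turning the $S$-star-size bound (a maximum-antichain quantity) into the clique cover that yields the $\|\Phi\|^{O(k)}$ enumeration bound. This is where perfection of chordal graphs (subtree intersection graphs) is essential, and one should also make sure the clique decomposition is itself computable within the allotted time, which dovetails with the separate claim that the quantified star size is efficiently computable (Theorem~\ref{thm:starsizeeasy}). The acyclicity of $\Phi'$ and the extendability tests, by contrast, should be routine given Observations~\ref{obs:acyclicitytransmission} and \ref{obs:inducedstarsize} and Claim~\ref{claim}.
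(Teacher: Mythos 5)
Your overall strategy is the same as the paper's: for every $S$-component, materialize the relation of extendable assignments to its free variables by enumerating tuple combinations over at most $k$ atoms covering those variables, test each candidate with a Yannakakis-style satisfiability check, contract the component's subtree of the join tree to preserve acyclicity, and finish by invoking Theorem~\ref{thm:constructVP}. Where you genuinely differ is in how the ``at most $k$ covering atoms'' are obtained: the paper proves a K\H{o}nig-type duality for acyclic hypergraphs (Lemma~\ref{lem:edgecover}, via an explicit algorithm on the join tree) and applies it through Corollary~\ref{cor:computestarsize}, whereas you invoke chordality of subtree intersection graphs, perfection, and the Helly property. That route is legitimate: your intersection graph is exactly the primal graph of $\calH'[S]$ for the component $\calH'$, its clique cover number equals its independence number by perfection, Helly turns each clique into a single atom containing it, and minimum clique covers of chordal graphs are computable in polynomial time, so the decomposition is constructive. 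A small bonus of your version is that the cliques partition the free variables, so no consistency check between the chosen tuples is needed, unlike with the paper's (possibly overlapping) edge cover.

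There is, however, one genuine gap, located at the parenthetical claim that an independent set of your intersection graph ``is precisely an $S$-star.'' Your graph records only atoms of the component: $x,y\in F_\calC$ are adjacent iff some edge of the \emph{component} contains both. The paper's definition of an $S$-$k$-star demands that no edge of the \emph{whole} hypergraph --- including atoms lying entirely inside $S$, which belong to no $S$-component --- contains two of the chosen vertices. From the definitions alone you only get that the $S$-star size is at most your independence number, while your runtime bound needs the reverse inequality; otherwise your exponent is the possibly larger independence number rather than the quantified star size $k$. This identity is not a definition-chase: for the cyclic hypergraph with edges $\{x,z\},\{y,z\},\{x,y\}$ and $S=\{x,y\}$, the unique $S$-component has local independence number $2$ but $S$-star size $1$. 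The identity does hold for acyclic hypergraphs, and the paper proves exactly this in part~\ref{item:computing star size}) of Corollary~\ref{cor:computestarsize}: by a connectivity argument in a join tree of $\calH[V']$, two free variables of the same $S$-component lying in distinct atoms of that component can never share an edge outside it, since the node where the path from that outside edge enters the component's subtree would have to contain both variables, contradicting their independence inside the component. Once you add this lemma (or an equivalent argument), your proof goes through, and the remainder --- the factorization over components, the $|\calS|^{k}$ enumeration, the extendability tests, and the contraction argument for acyclicity --- matches the paper's proof.
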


\begin{corollary}
 There is an algorithm for the problem $\sACQ$ that runs in time $\|\Phi\|^{O(k)}$ where $k$ is the quantified star size of the input query $\Phi$.
\end{corollary}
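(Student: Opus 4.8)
The plan is to reduce the quantified query to an equivalent \emph{quantifier free} acyclic query over its free variables, at the cost of a blow-up of $\|\Phi\|^{O(k)}$, and then to invoke Theorem~\ref{thm:constructVP}. Write $S = \free{\phi}$, let $\calH$ be the hypergraph of $\phi$, and fix a join tree $\calT$. The edges not contained in $S$ split into $S$-components, each acyclic by Observation~\ref{obs:inducedstarsize} and occupying a connected subtree of $\calT$. Since the quantified vertices of distinct $S$-components are separated in $\calH[V-S]$ by definition, whether an assignment to $S$ extends to a satisfying assignment of $\phi$ factorises: it must satisfy every edge contained in $S$ and, independently in each $S$-component, its restriction to the free variables of that component must be extendable to the quantified variables living there. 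So it suffices, for each $S$-component, to compute the relation $R^*$ of those boundary free assignments that are extendable, and to plug each $R^*$ back in as a single new constraint over the free variables it mentions.

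First I would prove the structural heart of the argument: if the $S$-star size is at most $k$, then in every $S$-component the free variables occurring in it can be covered by at most $k$ of its edges. Inside the component's subtree, consider the subtrees $T_y = \{t : y \in \lambda(t)\}$ of the boundary free variables $y$; their intersection graph is chordal, and two variables are adjacent exactly when they share an edge, so a set of pairwise non-adjacent variables is precisely an $S$-star. Bounded star size thus means this intersection graph has independence number at most $k$, and since chordal graphs are perfect its vertices are covered by at most $k$ cliques; by the Helly property of subtrees of a tree each clique lives at a common node $t$, i.e. in a single edge $\lambda(t)$. Given covering edges $\lambda(t_1),\dots,\lambda(t_{k'})$ with $k'\le k$, I would enumerate the at most $|\calS|^{k'}\le|\calS|^k$ mutually compatible combinations of tuples from $\lambda(t_1)(\calS),\dots,\lambda(t_{k'})(\calS)$; each such combination fixes all boundary free variables, and a Yannakakis-style evaluation (Claim~\ref{claim}) on the acyclic component with these values substituted decides in polynomial time whether it extends. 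Collecting the \emph{distinct} extendable boundary assignments yields $R^*$, a relation of size at most $|\calS|^k$ equal to the projection of the component's solutions onto its free variables; the deduplication here is what guarantees each solution is counted once rather than once per quantified witness.

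Next I would assemble the quantifier free query $\Phi'$ over $\calS'$ that keeps the edges contained in $S$ and replaces each $S$-component by the single constraint $R^*$ over its boundary free variables, and argue it is acyclic. The clean way is to take $\calT$ and contract each component's subtree to one node labelled by the union of its boundary free variables (the scope of the corresponding $R^*$); every remaining node already carries an edge contained in $S$. Contracting connected subtrees of a tree yields a tree, and for each free variable $y$ the image of the connected set $T_y$ stays connected, so the connectedness condition survives and the contracted tree is a join tree of $\Phi'$. By construction $\phi'(\calS') = \phi(\calS)$, and since $\Phi'$ is quantifier free on the variable set $S$, the polynomial $Q(\Phi')$ sums $\prod_{y\in S} X_{a(y)}$ over exactly these solutions, i.e. $Q(\Phi')=Q(\Phi)$. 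As $\|\Phi'\|=\|\Phi\|^{O(k)}$, applying Theorem~\ref{thm:constructVP} to $\Phi'$ produces the desired circuit within the claimed time bound.

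The main obstacle is the covering lemma of the second paragraph: turning the purely combinatorial bound ``no $(k+1)$-star'' into the usable ``$\le k$ edges suffice'' is exactly the step that licenses the $|\calS|^{O(k)}$ enumeration, and it is where the Helly property and the perfection of chordal graphs must be combined with the join tree. A secondary point needing care is correctness of the projection: one must check both that every extendable boundary assignment is realised by some enumerated compatible combination (nothing missed) and that identical boundary assignments from different combinations are merged (nothing over-counted), which is precisely what makes $Q(\Phi')$ equal $Q(\Phi)$ rather than a weighted overcount.
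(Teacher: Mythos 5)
Your construction follows the same overall architecture as the paper's proof of Theorem~\ref{thm:countingeasy}: decompose along $S$-components, cover each component's free variables by at most $k$ atoms, enumerate the at most $\|\Phi\|^{k}$ compatible tuple combinations, keep those certified extendable by Yannakakis' algorithm, replace the component by the resulting single (deduplicated) relation, preserve acyclicity by contracting the component's subtree of the join tree, and finish with Theorem~\ref{thm:constructVP} (for the corollary itself one should add the final line: evaluate the circuit at $X_d=1$ for all $d$, which is unproblematic since $Q(\Phi)$ has degree at most $|\free{\phi}|$). Where you genuinely diverge is the covering lemma. The paper derives it from Lemma~\ref{lem:edgecover}, a self-contained K\H{o}nig-type theorem for acyclic hypergraphs proved by an explicit greedy procedure on the join tree, combined with Corollary~\ref{cor:computestarsize}; you instead invoke classical structure theory: occurrence subtrees form a chordal intersection graph, perfection gives clique cover number equal to independence number, and the Helly property turns each clique into a single hyperedge. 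Your route is shorter and conceptually clean; the paper's is elementary, self-contained, and directly algorithmic, which it needs anyway for Theorem~\ref{thm:starsizeeasy}. For the $\|\Phi\|^{O(k)}$ budget of this corollary the algorithmic question is harmless either way (one could even brute-force all edge subsets of size at most $k$), but you should say how the cover is actually found, e.g.\ via a perfect elimination ordering or by that brute force.

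The one step you assert rather than prove is that ``a set of pairwise non-adjacent variables is precisely an $S$-star.'' In your intersection graph (subtrees taken inside the component's subtree), non-adjacency of $y,y'$ only says that no edge \emph{of the component} contains both; the definition of an $S$-star forbids \emph{any} edge of $\calH$, including edges outside the component (for instance edges contained entirely in $S$), from containing two of the chosen vertices. The direction you need --- independent in your graph implies $S$-star, so that the independence number is at most $k$ --- therefore rests on the claim that two free vertices of a component sharing an edge outside the component already share an edge inside it. This is true, and it is exactly what the paper proves inside part~\ref{item:computing star size}) of Corollary~\ref{cor:computestarsize}: the node labelled by such an outside edge attaches to the component's subtree through a unique entry node $t^*$, and connectedness of the two occurrence subtrees forces both vertices into $\lambda(t^*)$, an edge of the component. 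Without this short join-tree argument your bound on the independence number, and hence the whole $\|\Phi\|^{O(k)}$ budget, is unjustified; with it, your proof is complete.
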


The second result below implies  that computing the quantified star size is easy and thus classes of $\sACQ$-instances of bounded quantified star size are efficiently decidable.

\begin{theorem}\label{thm:starsizeeasy}
 There is a polynomial time algorithm that, given a hypergraph $\calH=(V,E)$ and $S\subseteq V$, computes the $S$-star size of $\calH$.
\end{theorem}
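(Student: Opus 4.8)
The plan is to reduce the computation of the $S$-star size to a collection of maximum independent set problems and then to exploit acyclicity, which is the standing assumption of this section (recall that quantified star size is only defined for acyclic $\calH$, and that without it the problem is $\NP$-hard). Unwinding the definition of $S$-$k$-star, the $S$-star size of $\calH$ equals $\max_{\calH'} \alpha(\calH')$, where $\calH'=(V',E')$ ranges over the $S$-components of $\calH$ and $\alpha(\calH')$ denotes the largest number of vertices $y_1,\dots,y_k\in V'\cap S$ that pairwise fail to co-occur in any edge $e\in E$. Introducing the \emph{co-occurrence graph} $G$ on vertex set $V$, in which $u,v$ are adjacent iff some edge of $E$ contains both of them (this is exactly the primal graph of $\calH$), we see that $\alpha(\calH')$ is precisely the size of a maximum independent set of the induced subgraph $G[V'\cap S]$. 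Note that different $S$-components may share free vertices, which is why one maximises per component and takes the overall maximum rather than computing a single global independent set. The task therefore splits into: (i) compute the $S$-components, and (ii) solve maximum independent set on each $G[V'\cap S]$.

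For step (i) I would follow the definition directly: form $\calH[V-S]$, compute the connected components of its primal (or incidence) graph by ordinary graph search, and group the edges of $E_{\nsubseteq S}$ accordingly; this is clearly polynomial in the input size. For step (ii) the essential point is that maximum independent set is $\NP$-hard in general, so acyclicity must do real work. The key lemma is that $G$ is chordal, which follows cleanly from the join tree $\calT=(V_T,T)$: for each $v\in V$ the set $T_v=\{t\in V_T : v\in\lambda(t)\}$ is a connected subtree of $\calT$ by the join-tree condition, and since every edge of $\calH$ is of the form $\lambda(t)$, two vertices $u,v$ are adjacent in $G$ if and only if $T_u\cap T_v\neq\emptyset$. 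Thus $G$ is the intersection graph of a family of subtrees of a tree and is therefore chordal. Chordality is hereditary, so every induced subgraph $G[V'\cap S]$ is chordal as well, and maximum independent set on chordal graphs is solvable in polynomial time (for instance, compute a perfect elimination ordering via \textsc{LexBFS} and run the standard greedy that repeatedly selects a simplicial vertex, adds it to the solution, and deletes it together with its neighbours).

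Assembling these subroutines yields the polynomial-time algorithm. The main obstacle is conceptual rather than technical: recognising that the innocent-looking ``pairwise non-co-occurring'' condition is really a maximum independent set problem, and that the acyclicity of $\calH$ is exactly what restores tractability by forcing $G$ to be chordal. The remaining care is in verifying that all objects stay polynomial in $\|\Phi\|$ even though the arity is unbounded: each edge $e$ contributes at most $\binom{|e|}{2}\le |V|^2$ edges to $G$, so $G$, the component decomposition, and every induced subgraph $G[V'\cap S]$ have size polynomial in the input, and each invoked subroutine runs in polynomial time.
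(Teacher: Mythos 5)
Your proof is correct, and it reaches the result by a genuinely different route than the paper. Both arguments split $\calH$ into its $S$-components and recognize that the star-size condition is a maximum independent set problem on the $S$-vertices; but where you pass to the primal (co-occurrence) graph $G$ and invoke chordality, the paper stays at the hypergraph level. Concretely, the paper proves a K\H{o}nig-type duality for acyclic hypergraphs (maximum independent set $=$ minimum edge cover, Lemma~\ref{lem:edgecover}) via an explicit bottom-up algorithm on the join tree, applies it to the acyclic induced hypergraph $\calH_i[S]$ of each $S$-component, and then still needs a separate join-tree argument (inside Corollary~\ref{cor:computestarsize}) showing that two $S$-vertices of a component that share no edge of that component cannot co-occur in an edge of $E$ outside it --- this step is forced because an $S$-$k$-star forbids co-occurrence in \emph{any} edge of $\calH$, not just edges of the component. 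You sidestep that subtlety by defining independence with respect to all of $E$ from the outset, so your reduction is literally the definition; the mathematical weight then falls on your key lemma, that the primal graph of an acyclic hypergraph is chordal (via the subtree-intersection representation supplied by the join tree), combined with the classical polynomial-time maximum independent set algorithm for chordal graphs. What each approach buys: yours is shorter, more modular, and rests on textbook results; the paper's Lemma~\ref{lem:edgecover} does double duty, since it simultaneously outputs a minimum edge cover of matching size, which is exactly what part b) of Corollary~\ref{cor:computestarsize} and the counting algorithm of Theorem~\ref{thm:countingeasy} later consume, whereas your argument produces the star size and an independent set but not those covering edges. You are also right to read the theorem with acyclicity as a standing hypothesis: as you note, without it the problem encodes maximum independent set on arbitrary graphs and is $\NP$-hard, and the paper's own proof (Corollary~\ref{cor:computestarsize}) indeed assumes $\calH$ acyclic.
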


We prove Theorem~\ref{thm:countingeasy} and Theorem~\ref{thm:starsizeeasy} in the following two subsections.

\subsection{Computation of $S$-star size}

In this section we show that $S$-star size can be computed in polynomial time.

Let $\calH=(V,E)$ be a hypergraph and $S\subseteq V$. We say that $E^* \subseteq E$ covers $S$ if $S\subseteq \bigcup_{e\in E^*} e$. If $S=V$ we say that $E^*$ is an edge cover of $\calH$. An independent set $I$ in $\calH$ is a set $I\subseteq V$ such that there are no distinct vertices $x,y\in I$ that lie in a common edge $e\in E$.

\begin{lemma}\label{lem:edgecover}
 For acyclic hypergraphs the size of a maximum independent set and a minimum edge cover coincide. Moreover, there is an algorithm that given an acyclic hypergraph $\calH$ computes a maximum independent set $I$ and a minimum edge cover $E^*$ of $\calH$.
\end{lemma}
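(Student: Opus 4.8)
The plan is to prove both claims at once by a single bottom-up procedure on a join tree $\calT$ of $\calH$, which simultaneously builds an independent set $I$ and an edge cover $E^*$ with $|I| = |E^*|$, while the accompanying analysis shows these are optimal. Throughout I assume every vertex lies in some edge (otherwise no edge cover exists; this always holds for hypergraphs coming from conjunctive queries). The easy half of the duality, $\mathrm{maxIS}(\calH) \le \mathrm{minEC}(\calH)$, holds for every hypergraph: each edge of a cover contains at most one vertex of an independent set, and the cover must meet every vertex of the set, so assigning each independent vertex to a covering edge is injective. The content of the lemma is the reverse inequality together with the effective construction, and for this I exploit acyclicity through the join tree.

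Root $\calT$ and pick a leaf $t \neq \text{root}$ with label $e = \lambda(t)$ and parent label $e' = \lambda(t')$. The crucial structural fact, from the connectedness condition of the join tree, is that every vertex of $e \setminus e'$ is \emph{private} to $e$, i.e.\ belongs to no other edge: a second edge containing such a vertex would, by connectedness, force it into $e'$. I distinguish two cases and induct on $|V| + |E|$. If $e \subseteq e'$ (no private vertex), then $e$ is redundant for both problems, so deleting the edge $e$ leaves $\mathrm{maxIS}$ and $\mathrm{minEC}$ unchanged (its independence constraint is implied by that of $e'$, and any cover using $e$ may use $e'$ instead); I recurse with $e$ removed, decreasing $|E|$. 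If $e \setminus e' \neq \emptyset$, I commit $e$ to the cover and a private vertex $v \in e \setminus e'$ to the independent set, and recurse on the induced subhypergraph $\calH_2 := \calH[V \setminus e]$, which is acyclic with join tree $\calT[V\setminus e]$ by Observation~\ref{obs:acyclicitytransmission}; here $|V|$ strictly decreases.

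Two recurrences, both giving a clean $+1$, then drive the induction. For the cover, since the private vertices of $e$ are covered only by $e$, the edge $e$ lies in every edge cover; restricting an optimal cover to $V \setminus e$ (resp.\ adjoining $e$ to an optimal cover of $\calH_2$) shows $\mathrm{minEC}(\calH) = \mathrm{minEC}(\calH_2)+1$. For the independent set, $|I \cap e| \le 1$ for any independent $I$ (all of $e$ is pairwise adjacent), so $I \setminus e$ is independent in $\calH_2$ of size $\ge |I|-1$; conversely any independent set of $\calH_2$ extends by the private vertex $v$, which conflicts with nothing outside $e$, giving $\mathrm{maxIS}(\calH) = \mathrm{maxIS}(\calH_2)+1$. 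Combined with the inductive hypothesis $\mathrm{maxIS}(\calH_2) = \mathrm{minEC}(\calH_2)$ this yields the equality for $\calH$, the base case being a one-node tree where $e = V$ and both quantities equal $1$. The constructed $I$ and $E^*$ realize these optima, and since each iteration removes at least one vertex or edge and does polynomial work (the join tree is computable in polynomial time, cf.~\cite{GLS-01}), the procedure runs in polynomial time.

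The main obstacle, and the place where the recursion must be chosen carefully, is the independent-set side of Case~2: deleting only the private vertices $e \setminus e'$ (rather than all of $e$) fails, because an optimal independent set of the smaller instance could use a shared vertex of $e \cap e'$ that conflicts with the newly added $v$. Recursing on $\calH[V \setminus e]$ removes this interaction entirely and is exactly what makes both recurrences collapse to $+1$; verifying that this deletion keeps every remaining vertex covered and preserves acyclicity (via Observation~\ref{obs:acyclicitytransmission}) is the technical heart of the argument.
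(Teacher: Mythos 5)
Your proposal is correct and takes essentially the same route as the paper: a bottom-up greedy pass over a join tree that, at each leaf whose label has a vertex outside the parent's label (such a vertex being private by the connectedness condition), commits that edge to $E^*$ and the private vertex to $I$, deletes all vertices of the edge, and skips redundant leaves with $\lambda(t)\subseteq\lambda(t')$. The only difference is presentational: the paper runs the full algorithm and then verifies globally (via two claims) that $E^*$ is a cover and $I$ is independent, concluding optimality from $|I|=|E^*|$ and weak duality, whereas you package the same facts as an induction with the exact recurrences $\mathrm{maxIS}(\calH)=\mathrm{maxIS}(\calH[V\setminus e])+1$ and $\mathrm{minEC}(\calH)=\mathrm{minEC}(\calH[V\setminus e])+1$.
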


The first sentence in Lemma \ref{lem:edgecover} can be seen as an adaptation of K\H{o}nig's theorem for bipartite graphs (see e.g.~\cite{Bollobas-98}) to acyclic hypergraphs.
The proof uses a minimally modified version of an algorithm that Guo and Niedermeier \cite{GuoN-06} describe to compute minimum (unweighted) edge covers of acyclic hypergraphs. We show here that their techniques cannot only be used to compute minimum edge covers but also maximum independent sets of acyclic hypergraphs.

\begin{proof}
 Clearly the size of any independent set is not greater than that of any edge cover, simply because no edge can cover two vertices in an independent set. So if we present an algorithm that computes an independent set $I$ and an edge cover $E^*$ of a given acyclic hypergraph $\calH=(V,E)$ such that $|I|=|E^*|$ we are done.

So let us now describe an algorithm that computes $I$ and $E^*$: Let $\calT= (V_T, E_T)$ be a join tree of $\calH$ with root $r$. We start with initially empty sets $I$ and $E^*$ and iteratively delete leaves of $\calT$ in a bottom-up manner from the leaves to the root. It is easily seen that for each leaf $t\in V_T$, either $\lambda(t)\subseteq \lambda(t')$ where $t'$ is the parent of $t$ or there exists $y\in\lambda(t)$ such that $y\not\in\lambda(t')$. In this case, we will say that $y$ is unique for $t$. If $t=r$ is a leaf of $\calT$, i.e. $r$ is the only vertex in $\calT$, we say by convention that if $\lambda(t)$ contains any vertices, they are all unique for $t$.
 
We do the following until $V_T$ is empty. First, choose a leaf $t$ of $\calT$.
 If there is no vertex unique for $t$, we simply delete $t$ from $V_T$.
 If there  are vertices that are unique for $t$, choose one vertex $y$ among them and add it to $I$. Furthermore, add $\lambda(t)$ to $E^*$, delete all vertices in $\lambda(t)$ from $\calH$ and delete $t$ from $V_T$. When $V_T$ is empty, $I$ and $E^*$ are the result of the algorithm.

For a vertex $t\in V_T$ we denote by $\calT_t$ the subtree of $\calT$ with the root $t$. Let furthermore $V_t$ be defined as the vertices in $V$ that appear only in $\{\lambda(t^*) \mid t^*\in V(\calT_t)\} \subseteq E$ and in no other edge in $E$. 

\begin{claim}\label{clm:cover}
 Whenever the algorithm deletes $t\in V_T$, the edge set $E^*$ covers the vertices $V_t$.
\end{claim}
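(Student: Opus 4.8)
The plan is to prove the claim by induction following the bottom-up order in which the algorithm deletes the nodes of $\calT$. Since the algorithm always removes a leaf of the current tree, by the time a node $t$ is deleted all of its proper descendants in the original tree $\calT$ have already been deleted. This gives the right induction order: when $t$ is handled I may assume that Claim~\ref{clm:cover} already holds for every child $t_i$ of $t$, i.e. that $E^*$ covers $V_{t_i}$; since $E^*$ only grows during the run, $E^*$ covers $\bigcup_i V_{t_i}$ at the moment $t$ is processed.

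First I would decompose $V_t$. Every vertex $v \in V_t$ that does not lie in $\lambda(t)$ appears only in edges of the proper subtrees $\calT_{t_i}$. Here the connectedness condition of the join tree is exactly what I need: if such a $v$ occurred in edges of two different children's subtrees $\calT_{t_i}$ and $\calT_{t_j}$, then the connected subtree $\{t^* \in V_T : v \in \lambda(t^*)\}$ would have to pass through $t$, forcing $v \in \lambda(t)$, a contradiction. Hence every such $v$ lies in a single $\calT_{t_i}$ and therefore in $V_{t_i}$. This shows $V_t \setminus \lambda(t) \subseteq \bigcup_i V_{t_i}$, so this part of $V_t$ is already covered by induction, and it only remains to cover the vertices of $V_t$ that lie in $\lambda(t)$.

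Next I would split on the two cases of the algorithm. If $t$ has a vertex that is unique for it, the algorithm adds the whole edge $\lambda(t)$ to $E^*$, which covers $V_t \cap \lambda(t)$ outright; combined with the inductive part this covers all of $V_t$. The delicate case is when $t$ has no unique vertex, so that nothing is added to $E^*$ when $t$ is deleted. Here I would argue that every $v \in V_t \cap \lambda(t)$ must already have been deleted from $\calH$ before $t$ is processed --- and since a vertex is only ever deleted together with an edge that is simultaneously placed into $E^*$, any already-deleted vertex is covered. To see that $v$ is already gone, suppose instead it is still present, i.e. $v$ lies in the current label of $t$. Since $t$ has no unique vertex, $v$ then also lies in the current label of the parent $t'$, hence in the original $\lambda(t')$. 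But $v \in V_t$ means $v$ occurs only in edges of $\calT_t$, and $t' \notin V(\calT_t)$, so $v \notin \lambda(t')$ originally, a contradiction.

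I expect the no-unique-vertex case to be the main obstacle, precisely because there the current (reduced) hypergraph and the original one must be reconciled: the covering guarantee has to come entirely from vertices deleted in earlier steps rather than from any edge added at the current step. The clean way to close this is the observation above that a surviving vertex of $V_t$ would necessarily appear in the parent edge, which is impossible by the definition of $V_t$; everything else is routine bookkeeping about the monotonicity of $E^*$ and the leaves-first deletion order.
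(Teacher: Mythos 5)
Your proof is correct and takes essentially the same route as the paper's: the paper phrases it as a ``first failing vertex'' contradiction while you run the induction directly with a case split, but both rest on the same two facts, namely that $V_t \setminus \lambda(t) \subseteq \bigcup_i V_{t_i}$ by the connectedness condition, and that a not-yet-deleted vertex of $V_t$ lying in $\lambda(t)$ cannot occur in the parent's label and hence is unique for $t$. The details you add (the explicit decomposition of $V_t$ and the observation that already-deleted vertices are always covered, which settles the no-unique-vertex case) are exactly what the paper leaves implicit.
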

\begin{proof}
Assume that the claim is false, then there is a first vertex $t\in V_T$ met during the execution of the algorithm for which after $t$ is deleted some vertex $y\in V_t$ is not covered by $E^*$. For all children $t^*$ of $t$ the vertices in $V_{t^*}$ are covered by $E^*$, so $y$ must lie in $\lambda(t)$. But then $y$ is unique for $t$ before $t$ is deleted. Thus $\lambda(t)$ is added to $E^*$ and $y$ is covered by $\lambda(t)\in E^*$ after $t$ is deleted which is a contradiction.
\end{proof}

From Claim \ref{clm:cover} it follows directly that $E^*$ is an edge cover at the end of the algorithm.

\begin{claim}\label{clm:independent}
 At each point in time during the algorithm $I$ is an independent set in $G$.
\end{claim}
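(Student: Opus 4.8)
The plan is to prove the invariant by induction on the steps of the algorithm. The base case is trivial, since $I$ is initially empty, and the only step that changes $I$ is the one in which a vertex $y$, unique for the current leaf $t$, is selected and added. So the whole argument reduces to showing that at such a step $y$ does not lie in a common edge $e\in E$ with any vertex $z$ already in $I$. The single tool I would use throughout is the connectedness condition of the join tree $\calT$, together with the bookkeeping fact that deletions only ever \emph{remove} vertices: once a vertex is deleted it is gone from every label, so any vertex still present has its original occurrences intact.

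First I would establish a locality statement for $y$. Since $y$ is unique for $t$, we have $y\in\lambda(t)$ and $y\notin\lambda(t')$ for the parent $t'$ in the current tree; as $y$ has not been deleted (it is being added now), $y\notin\lambda(t')$ holds for the original labeling as well, while $y\in\lambda(t)$ originally. By the connectedness condition the set $T_y=\{s\in V_T : y\in\lambda(s)\}$ is a subtree of $\calT$ containing $t$ but not its parent, hence $T_y\subseteq\calT_t$; that is, in $\calH$ the vertex $y$ occurs only in labels of nodes of the original subtree $\calT_t$. The same reasoning applied at the earlier step where some $z\in I$ was added at a leaf $s_z$ shows that $z$ occurs only in labels of nodes of $\calT_{s_z}$.

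Now suppose for contradiction that $y$ and $z$ share an edge $e=\lambda(s)$. By the two locality statements, $s\in\calT_t$ and $s\in\calT_{s_z}$, so the subtrees $\calT_t$ and $\calT_{s_z}$ overlap at $s$; in a tree this forces one of $t,s_z$ to be an ancestor of the other. The hard part is ruling out both bad configurations using the processing order. Since $s_z$ was deleted (when $z$ was chosen) strictly before $t$ is processed, and $s_z$ was a leaf at that moment, all descendants of $s_z$ had already been removed; as $t$ is still present, $t$ cannot lie in $\calT_{s_z}$, so $s_z$ must be a \emph{proper descendant} of $t$. Then the tree path from $t$ to $s\in\calT_{s_z}$ passes through $s_z$, and connectedness of $T_y$ (which contains both $t$ and $s$) forces $s_z\in T_y$, i.e.\ $y\in\lambda(s_z)$ originally. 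But when $z$ was selected the entire current label $\lambda(s_z)$ was deleted, and $y$, never having been deleted, was still in it at that time; hence $y$ would have been deleted, contradicting that $y$ is present now. This contradiction shows no such shared edge exists, completing the inductive step, so $I$ is an independent set at every stage.
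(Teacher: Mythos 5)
Your proof is correct and takes essentially the same route as the paper's: both arguments use uniqueness of the chosen vertex plus the connectedness condition to localize it to the subtree where it was selected, and then use connectedness again to force the later vertex $y$ into the label of the node where the earlier vertex was chosen, contradicting the fact that that entire label was deleted at that step. The only cosmetic differences are that you argue by induction rather than via a first (minimal) counterexample, and that you localize both vertices and invoke ancestor-comparability of the two subtree roots where the paper localizes only the earlier vertex; the key mechanism is identical.
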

\begin{proof}
Assume the claim is wrong. Then, there is a first vertex $y$ that is added to $I$ such that $y$ is adjacent to $x$ already in $I$. The vertex $x$ was added to $I$, so there was $t\in V_T$ such that $x$ was unique for $t$ when $t$ was considered by the algorithm. Thus $x$  is in $V_t$ and consequently not in  $\lambda({t'})$ for any vertex $t'\in V_T\setminus V(\calT_t)$. Hence, if $x$ and $y$ are adjacent,  there must be a vertex $t^*\in V(\calT_t)$ such that $\{x,y\}\subseteq \lambda(t^*)$. But $y$ is added to~$I$ after $x$ and thus it must appear in $\lambda({t'})$ for a vertex $t'\in V_T\setminus V(\calT_t)$. Then because of the connectedness condition and the fact that $\calT$ is a tree, $y$ must also be in $\lambda(t)$ and thus is deleted from $\calH$ when $t$ is deleted. But then $y$ cannot be added later which is a contradiction.
\end{proof}

With Claim \ref{clm:cover} and Claim \ref{clm:independent} we have that at the end of the algorithm $E^*$ is an edge cover of $G$ and $I$ is an independent set in $G$. It is easy to see, that $|E^*| = |I|$. This completes the proof.
\end{proof}

\begin{corollary}\label{cor:computestarsize}
 Let $\calH=(V,E)$ be an acyclic hypergraph and $S\subseteq V$. Then the following statements are true:
\begin{enumerate}[a)]
 \item~\label{item:computing star size} The $S$-star size of $\calH$ can be computed in polynomial time.
 \item~\label{item:computing cover} Let $\calH'=(V',E')$ be an $S$-component of $\calH$ and let $k$ be the $S$-star size of $\calH'$. There is a polynomial time algorithm that computes an edge set $E^* \subseteq E'$ that covers $S\cap V'$ and $|E^*|= k$.
\end{enumerate}
\end{corollary}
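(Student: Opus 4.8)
The plan is to reduce both statements to the machinery already established in Lemma~\ref{lem:edgecover} together with the structural observations about $S$-components. First recall from the discussion after the definition of $S$-component that the hyperedges of $E_{\nsubseteq S}$ partition into pairwise disjoint $S$-components, and that these can be enumerated in polynomial time: build the auxiliary graph on $E_{\nsubseteq S}$ whose adjacency is ``there is a path from $e-S$ to $e'-S$ in $\calH[V-S]$'' and take its connected components. This is a standard reachability computation, linear in $\|\calH\|$. By Observation~\ref{obs:inducedstarsize}, each such $S$-component $\calH'=(V',E')$ is itself acyclic, so the tools of Lemma~\ref{lem:edgecover} apply to it.

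For part~\ref{item:computing star size}, the key reformulation is that the $S$-$k$-star condition --- the existence of $y_1,\dots,y_k\in V'\cap S$ no two of which share an edge --- is exactly the statement that $\{y_1,\dots,y_k\}$ is an independent set of the induced subhypergraph $\calH'[(V'\cap S)]$. Hence the star size of a single $S$-component $\calH'$ equals the size of a maximum independent set of $\calH'$ restricted to its free vertices $V'\cap S$. By Observation~\ref{obs:acyclicitytransmission}, this induced subhypergraph is again acyclic, so by Lemma~\ref{lem:edgecover} its maximum independent set has size equal to its minimum edge cover and can be computed in polynomial time. The $S$-star size of $\calH$ is then the maximum of these values over all $S$-components, which is a polynomial-time post-processing step.

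For part~\ref{item:computing cover}, I would run the \emph{same} algorithm from Lemma~\ref{lem:edgecover} on the acyclic hypergraph $\calH'[(V'\cap S)]$, but now read off the edge cover $E^*$ rather than the independent set. The algorithm returns an edge cover of $\calH'[(V'\cap S)]$ whose cardinality equals the maximum independent set size, i.e.\ equals $k$ by part~\ref{item:computing star size}. A cover of the induced subhypergraph lifts to a set of edges of the original $\calH'$: each edge of $\calH'[(V'\cap S)]$ is of the form $e\cap S$ for some $e\in E'$, so choosing a preimage $e$ for each selected edge gives $E^*\subseteq E'$ with $|E^*|=k$ that covers $S\cap V'$, as required.

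The one place needing genuine care --- the main obstacle --- is the passage to the induced subhypergraph $\calH'[(V'\cap S)]$ and the matching of its cardinalities back to the original component. I must verify that no spurious identifications or empty intersections occur when restricting edges to $V'\cap S$, that the minimum-edge-cover-equals-maximum-independent-set equality of Lemma~\ref{lem:edgecover} is invoked on the \emph{restricted} acyclic hypergraph (which is legitimate by Observation~\ref{obs:acyclicitytransmission}), and that the lifted cover genuinely has $|E^*|=k$ rather than merely $|E^*|\ge k$. Everything else is routine bookkeeping on top of the already-proved lemma.
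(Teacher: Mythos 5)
Your plan follows the paper's proof almost step for step (decompose into $S$-components, apply Lemma~\ref{lem:edgecover} to each component restricted to $S$, take the maximum over components, and for part~\ref{item:computing cover} lift each edge of the cover to a preimage in $E'$), but it has a genuine gap exactly at the step you present as a definitional reformulation. You claim the $S$-$k$-star condition ``is exactly'' independence of $\{y_1,\dots,y_k\}$ in $\calH'[V'\cap S]$. It is not: the definition of an $S$-$k$-star forbids \emph{any} edge $e\in E$ of the whole hypergraph from containing two of the $y_i$, while independence in $\calH'[V'\cap S]$ only forbids edges coming from $E'$. Since the $S$-components partition only $E_{\nsubseteq S}$, an edge lying entirely inside $S$ belongs to no component, and an $S$-vertex of $V'$ may also lie in edges of other components; such edges are invisible to $\calH'[V'\cap S]$. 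Hence what your algorithm computes is, a priori, only an upper bound on the $S$-star size. Moreover, the implication you need is false without acyclicity: for $V=\{x,y,z\}$, $S=\{x,y\}$, $E=\{\{x,z\},\{y,z\},\{x,y\}\}$ there is a single $S$-component with edges $\{x,z\},\{y,z\}$, and $\{x,y\}$ is independent in its restriction to $S$, yet it does not form a star because of the edge $\{x,y\}\subseteq S$: the star size is $1$, not $2$. So the equivalence cannot be ``exact'' by definition; any proof of it must use acyclicity of all of $\calH$, not just of the component.

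This missing implication is in fact the bulk of the paper's proof of part~\ref{item:computing star size}. The paper shows: if $x,y\in S\cap V'$ lie in no common edge of $E'$, then no edge $e\in E\setminus E'$ contains both. The argument takes a join tree of the acyclic hypergraph $\calH[V']$ (whose edges are those of $E'$ together with the restrictions of outside edges to $V'\cap S$), notes that the nodes labelled by edges of $E'$ induce a subtree, and observes that a node labelled by an outside edge containing both $x$ and $y$ would have to attach to that subtree via two different entry nodes --- one forced by the connectedness condition for $x$, one for $y$, since $x$ and $y$ lie in distinct $E'$-edges --- contradicting that the join tree is a tree. Only with this in hand do the maximum independent set of $\calH'[S]$, the minimum edge cover of $\calH'[S]$, and the $S$-star size of $\calH'$ all coincide; your part~\ref{item:computing cover} also needs this equality to conclude $|E^*|=k$ rather than merely $|E^*|\ge k$. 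Note that your own ``place needing genuine care'' paragraph lists only bookkeeping concerns about restricting edges (empty intersections, identifications) and never mentions edges outside the component, so the gap is not covered by the checks you propose. The remaining steps of your proposal (enumerating components by reachability, invoking Observations~\ref{obs:inducedstarsize} and~\ref{obs:acyclicitytransmission}, applying Lemma~\ref{lem:edgecover}, and lifting the cover edge by edge) are correct and agree with the paper.
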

\begin{proof}
\ref{item:computing star size}) Let $\calH_1,\ldots,\calH_m$ be the $S$-components of $\calH$. By Observation~\ref{obs:inducedstarsize}, each $\calH_i= (V_i, E_i)$, $i\in \{1,\ldots, m\}$, is acyclic and then by Observation~\ref{obs:acyclicitytransmission}, $\calH_i[S]$ is acyclic too. By Lemma~\ref{lem:edgecover}, for each $i\in \{1,\ldots, m\}$, one can determine the size of a maximum independent set $I_i$ of $\calH_i[S]$. But we claim that for each $i$ the star size $s_i$ of the $S$-component $\calH_i$ and the size of the maximum independent set in $\calH_i[S]$ coincide. Indeed, consider two vertices $x,y \in S\cap V_i$ such that there is an edge $e\in E\setminus E_i$ such that $\{x,y\}\in e$. Note that $x$ and $y$ are each included in at least one edge of $E_i$.   Remark also that $(V_i,E_i\cup E[V_i\cap S])=\calH[V_i]$ where $E[V_i\cap S]:= \{e\cap (V_i \cap S) \mid e\in E, e\cap (V_i \cap S)\ne \emptyset\}$. $\calH[V_i]$ is acyclic by Observation~\ref{obs:acyclicitytransmission}. Let $\calT_i'$ be a join tree of $\calH[V_i]$. The vertices $\{t\in V(\calT_i')\mid \lambda(t)\in E_i\}$ are connected in $\calH[V_i]$, so they induce a subtree  $\calT_i$ of $\calT_i'$. 
But then, if $e \in E\backslash E_i$, by the connectedness condition the vertex $t$ with $\lambda(t)=e$ must be connected to $\calT_i$ by two different paths that enter $\calT_i$ via two different edges (since both $x$ and $y$ are in distinct edges of $E_i$). This contradicts the fact that $\calT_i'$ is a tree. Thus $s_i$ is indeed the size of a maximum independent set made of $S$-vertices in $\calH[V_i]$ which is the $S$-star size of $\calH_i$. The $S$-star size of $\calH$ is then the maximal value among  $s_1,...,s_m$ and the result follows.

\ref{item:computing cover})  We compute an edge cover $\tilde{E}$ of size $k$ for $\calH'[S]$ with Lemma \ref{lem:edgecover}. Then for each edge $\tilde{e} \in \tilde{E}$ one can easily find an edge $e\in E'$ with $\tilde{e}\subseteq e$.
\end{proof}

%
%

\subsection{Efficiently computing the $Q$-polynomial}

We now have the necessary ingredients to prove Theorem \ref{thm:countingeasy}.

\begin{proof}[Proof of Theorem \ref{thm:countingeasy}] Let $\Phi=(\calS,\phi)$ be an input query of quantified star size $k$. We will construct a quantifier free formula $\varphi$ and a new structure $\calS'$ in time $\|\Phi\|^{O(k)}$ such that $\phi(\calS)=\varphi(\calS')$.

Let $\calH$ be the hypergraph of $\phi$ and $S$ the set of free variables of $\phi$. 
Let $\calH'=(V',E')$ be an $S$-component of $\calH$ and let $\phi'$ be the subformula of $\phi$ whose  atomic formulas are the hyperedges of $E'$. The formula $\phi$ can then be written as a conjunction $\phi'\wedge \psi$, where the formula $\psi$ contains all the atoms of $\phi$ not in $\phi'$.
By the definition of $S$-components we have $(\var{\phi'}\backslash \free{\phi'})\cap \var{\psi}=\emptyset$. In other words, the quantified variables in $\phi'$ only appear in atoms of $\phi'$ and common variables of $\phi'$ and $\psi$ are necessarily in $S$, i.e.~they are free.

Let now $E^*\subseteq E'$ be a cover of $V'\cap S$ of size $s\leq k$ computed with Corollary~\ref{cor:computestarsize}. Let  $\phi_1, \ldots, \phi_s$ be the atomic formulas associated to edges in $E^*$.  We will compute $\phi'(\calS)$ and construct in parallel a new atomic constraint $\varphi_1$ and a new relation $\varphi_1^{\calS}$ such that $\varphi_1^{\calS}=\phi'(\calS)$. The set of variables $\var{\varphi_1}$ is $\bigcup_{i=1}^s \free{\phi_i}=\free{\phi'}$. For each combination $t_1, \ldots , t_s$ of tuples in $\phi_1(\calS), \ldots, \phi_s(\calS)$ we add the tuple $t:= t_1 \bowtie \ldots \bowtie t_s$ to the relation $\varphi_1^{\calS}$ if
\begin{itemize}
 \item the tuples $t_1, \ldots, t_s$ are consistent, i.e. they coincide on shared variables,
 \item the ACQ instance  that we get from $\phi'$ by fixing the variables in $\var{\varphi_1}$ to the values specified by the tuples $t_1, \ldots, t_s$ is satisfiable.
\end{itemize}

Observe that we have to only consider $\prod_{i=1}^s |\phi_i(\calS)|\le\|\Phi\|^{s} \le \|\Phi\|^k$ combinations $t_1, \ldots , t_s$ and the resulting Boolean queries can each be evaluated in time  $O(|\phi'|\cdot |\calS|)$ by Yannakakis' algorithm (see~\cite{Yannakakis-81}). Thus the construction of $\varphi_1$ and the relation $\varphi_1^{\calS}$ can be done in time $\|\Phi\|^{O(k)}$. Let us call $\calS'$ the union of the structure $\calS$ and  $\varphi_1^{\calS}$. 
We then have a new query $\varphi_1\wedge \psi$ such that $[\varphi_1\wedge \psi](\calS')=\phi(\calS)$. 

Let $\calT$ be a join tree of $\phi$. We can choose a subtree $\calT'$ of $\calT$ such that $\calT'$ is a join tree of $\calH'$ by considering the induces subgraph of the vertices $\{t\mid \exists e\in E', \lambda(t)=e\}$. Also $\free{\phi'}=V'\cap S=\var{\varphi_1}$ and recall that $(\var{\phi'}\backslash \free{\phi'})\cap \var{\psi}=\emptyset$. Hence, contracting $\calT'$ into a single node whose label is the constraint $\varphi_1$, results in a join tree of the formula  $\varphi_1\wedge \psi$. Thus this latter formula is acyclic.

We iterate this process with the $S$-components of the subformula $\psi$. When each $S$-component has been treated, $\phi$ is replaced by a quantifier free formula $\varphi=\varphi_1\wedge \ldots \wedge \varphi_m \wedge \phi_0$ where each $\varphi_i$ for $i=1, \ldots, m$ is atomic and $m$ is the number of $S$-components of $\phi$. Furthermore, $\phi_0$ is the conjunction of all atomic formulas of $\phi$ that contain only free variables. Also $\var{\varphi}=S$. Similarly $\calS$ is replaced by a structure $\calS'$ of size bounded by  $\|\Phi\|^{O(k)}$ (recall that each component is treated separately) such that  $\phi(\calS)=\varphi(\calS')$. In each iteration step the formula stays acyclic. Hence, $\varphi$ is acyclic and we conclude by applying Theorem~\ref{thm:constructVP}.
\end{proof}

\subsection{$\sW{1}$-hardness of parameterized $\sACQ$}\label{sct:starhard}

In this section we show that several parameterized versions of quantified $\sACQ$ are not fixed parameter tractable under standard assumptions from parametrized complexity. We consider the following parameterized counting problems:

\begin{itemize}
 \item $p$-$\mathrm{star}$-$\sACQ$: counting parameterized by the quantified star size,
 \item $p$-$\mathrm{var}$-$\sACQ$: counting parameterized by the number of free variables,
 \item $p$-$\sACQ$: counting parameterized by the size of the conjunctive formula.
\end{itemize}

Clearly, for every ACQ instance $\Phi$ with formula $\varphi$ we have that the quantified star size is at most $|\var{\varphi}| \le |\varphi|$. Thus we get from $p$-$\mathrm{var}$-$\sACQ$ and $p$-$\sACQ$ might be easier than $p$-$\mathrm{star}$-$\sACQ$. The next lemma states that -- unless there is a severe collapses in parameterized complexity -- all three problems are not fixed parameter tractable. This is in contrast to the decision version which is even in $\p$ for all three problems.

\begin{lemma}\label{lem:swhardness}
$p$-$\mathrm{star}$-$\sACQ$, $p$-$\mathrm{var}$-$\sACQ$ and $p$-$\sACQ$ are all $\sW{1}$-hard. 
\end{lemma}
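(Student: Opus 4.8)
The plan is to prove $\sW{1}$-hardness via a parameterized $T$-reduction from $p$-$\mathrm{\#Clique}$. Since the three problems are ordered by how restrictive their parameter is (the quantified star size is bounded by the number of free variables, which is bounded by the formula size), the cleanest strategy is to produce a single reduction whose output instance has \emph{all three} parameters controlled by the clique-size parameter $k$. That way one argument establishes hardness for all three problems simultaneously. Concretely, given a graph $G=(V_G,E_G)$ and the target $k$, I would build an acyclic conjunctive query $\Phi$ together with a weight-free structure so that the weighted count (or plain count) of $\phi(\calS)$ equals the number of $k$-cliques in $G$, while ensuring $|\var{\phi}|$, the number of free variables, and the quantified star size are all $O(k)$.

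First I would set up the structure so that the solutions of $\Phi$ are in bijection with $k$-cliques. The natural encoding uses $k$ free variables $x_1,\dots,x_k$, each ranging over $V_G$, to select the $k$ vertices of a candidate clique, with edge relations enforcing that every pair $\{x_i,x_j\}$ is adjacent and an ordering gadget (or a distinctness/injectivity mechanism) guaranteeing the $k$ chosen vertices are genuinely distinct so that each clique is counted the correct number of times. The subtlety is that a direct pairwise-adjacency encoding is not acyclic: constraints on all pairs $\{x_i,x_j\}$ form a clique in the primal graph and hence a cyclic hypergraph. The key trick, abstracting Example~\ref{example:formulahard} and the hard instance of \cite{PS-11}, will be to introduce quantified variables that collect the pairwise information into a tree-shaped (acyclic) hypergraph, while keeping the free variables $x_1,\dots,x_k$ \emph{dispersed} — no single atom may contain two of them — so that they form an $S$-$k$-star. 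This forces the quantified star size to be exactly $k$, which is precisely the parameter we want to blow up, and it is what makes counting hard even though the decision problem is tractable.

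The main obstacle I anticipate is reconciling three competing demands at once: acyclicity of the resulting hypergraph, a star size of exactly $\Theta(k)$ (so the problem is genuinely parameterized by star size and not by something smaller), and a clean parsimonious or $T$-reduction counting identity. Keeping $\phi$ acyclic while spreading the $x_i$ into distinct atoms typically requires a central quantified variable (or a chain of them) through which all pairwise edge-checks are routed; I would verify acyclicity by exhibiting an explicit join tree, using Observation~\ref{obs:inducedstarsize} and Observation~\ref{obs:acyclicitytransmission} to reason about the induced subhypergraph on the free variables. The counting identity is the delicate part: I must ensure that distinct $k$-cliques yield distinct solution sets of equal cardinality (ideally a bijection), handling the over-counting from vertex orderings by a fixed normalization — for instance a linear-order constraint $x_1<x_2<\cdots<x_k$ encoded via an auxiliary order relation in $\calS$ — so that $|\phi(\calS)|$ equals the clique count up to an explicitly computable factor that the oracle call and a final arithmetic adjustment absorb.

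Finally I would confirm the reduction meets the formal requirements of a parameterized $T$-reduction with a single oracle call: the instance $\Phi$ is constructible in time $f(k)\cdot|G|^c$, its three parameters (quantified star size, number of free variables, formula size) are all bounded by $g(k)$, and a single query to a $p$-$\mathrm{star}$-$\sACQ$ (respectively $p$-$\mathrm{var}$-$\sACQ$, $p$-$\sACQ$) oracle returns a value from which the number of $k$-cliques is recovered by a trivial post-processing step. Since $p$-$\mathrm{\#Clique}$ is the canonical $\sW{1}$-complete problem, this yields $\sW{1}$-hardness of all three problems at once, establishing that bounded quantified star size cannot be relaxed to fixed-parameter tractability under the hypothesis $\mathbf{FPT}\ne\sW{1}$.
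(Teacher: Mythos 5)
There is a genuine gap, and it is fatal to the approach as stated. You aim to construct, in FPT time, an acyclic conjunctive query $\Phi$ whose solutions are in bijection with the $k$-cliques of $G$ (up to an explicitly computable normalization factor). No such reduction can exist unless $\mathbf{FPT}=\mathbf{W[1]}$: the \emph{decision} problem for quantified $\ACQ$ is solvable in polynomial time (Yannakakis' algorithm; the paper stresses that the decision versions of all three parameterized problems are in $\p$). Your instance $\Phi$ would have a solution if and only if $G$ contains a $k$-clique, so running the polynomial-time decision algorithm on $\Phi$ would decide $k$-clique in time $f(k)\cdot|G|^{O(1)}$, collapsing $\mathbf{FPT}$ and $\mathbf{W[1]}$. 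Concretely, the encoding trick you propose cannot be realized: for a star $\exists z \bigwedge_{i=1}^k E_i(x_i,z)$, the solution set is exactly the set of tuples $(a_1,\ldots,a_k)$ for which $\bigcap_i \{c : (a_i,c)\in E_i\}\ne\emptyset$, and for the witness $z$ to certify all $\binom{k}{2}$ adjacencies it would essentially have to enumerate the clique itself, forcing a domain of size $n^{\Theta(k)}$ and breaking the $f(k)\cdot n^{O(1)}$ time bound of a parameterized $T$-reduction.

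The idea you are missing, and the heart of the paper's proof, is \emph{complementation}: count the bad tuples, not the good ones. The paper reduces from $p$-$\mathrm{\#DirPath}$ and observes that a choice of $k$ edges (with repetition) fails to be a path if and only if there \emph{exists} a defect, either two consecutive edges that do not match, or a vertex visited twice. Existence of a defect is precisely what a single existentially quantified star-center can express: $z$ ranges over a polynomial-size set of defect descriptions, and each binary constraint $E_i(y_i,z)$ checks consistency of the guessed defect with the $i$-th edge choice. The oracle then returns the number of defective tuples, and the number of $k$-paths is recovered as $|E|^k$ minus that count, a single oracle call with quantified star size, number of free variables, and formula size all $O(k)$. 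Note that this construction is immune to the objection above: the decision version of the constructed instance is trivially ``yes'' (defective tuples almost always exist), so its counting hardness is perfectly consistent with the tractability of $\ACQ$ decision. Any correct proof of this lemma must use some such complementation (or another device ensuring the constructed instance's decision problem is trivial); a direct bijective encoding of cliques or paths cannot work.
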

\begin{proof}
We reduce $p$-$\mathrm{\#DirPath}$, i.e. counting of paths of length $k$, to $\sACQ$ on stars. With the $\sW{1}$-hardness of $p$-$\mathrm{\#DirPath}$ \cite{FlumGrohe04} the result will follow. The basic observation is that there are $|E|^k$ ordered choices of $k$ edges with repetitions. Thus it suffices to count the number of choices that are not paths to compute the number of $k$-paths in a graph. A choice $e_1, \ldots, e_k$ is not a $k$-path, if and only if it has one of the following defects:
\begin{enumerate}
\item It is not a walk, i.e. there is an $i$ such that $e_i$ has as end vertex not the start vertex of $e_{i+1}$, or
\item a vertex is visited twice.
\end{enumerate}
We will encode these properties into a $\sACQ$-instance of polynomial size whose hypergraph is a $k$-star.

So let $G=(V,E)$ be the input in which we are supposed to count $k$-paths. We construct a $\sACQ$-instance $\Phi$ which has the variables $y_1, \ldots, y_k, z$. The $y_i$ have the domain $V\times V$, while $z$ has the more complicated domain $\{0,1\}\times [k]\times V\times V\times [k]\times [k]\times V$. Observe that the domains have polynomial size and all constraints will be binary, so $\Phi$ has polynomial size in $n$ and $k$.

For each $i\in [k]$ we add a binary constraint $E_i$ in the variables $y_i, z$. Out of these we build the formula \[\varphi:= \exists z \bigwedge_{i=1}^k E_i(y_i, z).\]
The $y_i$-variable will choose arbitrary potential start and end points of an edge. The role of the $z$-variable is to guess one of the defects described above that prevents the chosen vertices from describing a path. We systematically describe tuple set for $E_i$. 
The first component of an assignment can only take $0$ or $1$ and encodes if $z$ guesses either a defect in the walk structure or a double variable.

\begin{itemize}
 \item $z$ may guess that the end vertex of $e_i$ is $v$ while the start vertex of $e_{i+1}$ is $u$ for $u\ne v$. It does so by taking the value $(0,i,v,u,j_1, j_2, a)$ for arbitrary $j_1, j_2, a$. If this guess is true, then $y_i$ must have chosen an edge that indeed does end in $v$. So we add the tuples $\{((b, v), (0,i,v,u,j_1, j_2, a))\mid a,b,v,u\in V, (b,v)\in E, j_1, j_2\in [k]\}$.
\item The second defect that $z$ may guess is that the edges do not form a path, because the end vertex of $e_{i-1}$ is $v$ while the start vertex of $e_{i}$ is $u$. This results in the tuples $\{((u, b), (0,i-1,v,u,j_1, j_2, a))\mid a,b,v,u\in V, (b,v)\in E, j_1, j_2\in [k]\}$.
\item If $z$ predicts a defect preventing a walk in some other place, $E_i$ does not have to check this, so we accept everything. The resulting tuples are $\{((u, v), (0,j,c,d,j_1, j_2, a))\mid a,c,d,v,u\in V, (u,v)\in E, j_1, j_2\in [k], j\notin \{i-1,i\}\}$. These cover all cases of the edges not being a path.
\item If $z$ guesses that the vertex $v$ occurs at two different places in the potential path, it does so by specifying edges $e_{j_1}, e_{j_2}$ with $j_1\le j_2$ such that the start vertex of $e_{j_1}$ and the end vertex of $e_{j_2}$ is $v$. It does so by taking a value $(1,\ell, a,b, j_1, j_2, v)$ for $\ell, a, b$ arbitrary. If $i\notin \{j_1, j_2\}$ then $E_i$ does not have to check for an effect and accepts if $y_i$ encodes an edge. Thus we add the tuples $\{((u,w), (1,\ell, a,b, j_1, j_2, v) \mid a,b,u,v,w\in V, (u,w)\in E, \ell, j_1, j_2 \in [k], i\notin \{j_1, j_2\}\}$.
\item If $z$ guesses a double occurence of $v$ and $i=j_1$ we accept only if that guess is correct and $j_2 \ge i$. So we add the tuples $\{((v,w), (1,\ell, a,b, i, j_2, v) \mid a,b,v,w\in V, (v,w)\in E, \ell, j_2 \in [k]\}$. If $i=j_2$ we add analogous tuples.
\end{itemize}

It is easy to see that $\Phi$ accepts assignments to the $y_i$ if and only if each $y_i$ gets the end points of an edge in $G$ and there is a defect that prevents the edges from being a path. Thus the number of satisfying assignments of $\Phi$ is the number of ordered choices of edges in $G$ with repetition that are not paths. This completes the proof.
 \end{proof}

\subsection{Bounded quantified star size is necessary}

In this section we show that quantified star size is in a sense the only restriction that makes $\sACQ$ tractable. Not only does bounded quantified star size give tractable instances, but the other way round under a standard assumption from parameterized complexity all classes of tractable $\sACQ$-instances must have bounded quantified star size. This is somewhat similar to the results of Grohe et al \cite{GSS-01} who proved that under reasonable assumptions the only polynomial time decidable subclass of bounded arity $CQ$ is the class of bounded treewidth.

As we have seen in the previous sections, not only the hypergraph of the input formula is decisive for tractability but also the structure of the quantified variables in this hypergraph. We formalize this in the following definition.

\begin{definition}An $S$-hypergraphs is a pair $(\calH, S)$ where $\calH=(V,E)$ is a hypergraph and $S\subseteq V$. We say that $\sACQ$ is tractable for a class $\calG$ of $S$-hypergraphs if for all $\sACQ$ instances $\Phi$ with the associated hypergraph $\calH$ of $\Phi$ and the set $S$ of free variables of $\Phi$ with $(\calH, S)\in \calG$ we can solve $\sACQ$ in polynomial time.
\end{definition}

\begin{example}
 Let $\calG$ be the class of acyclic $S$-hypergraphs of $S$-star size bounded by $k$. Then the result of Theorem \ref{thm:countingeasy} can be expressed as  ``$\sACQ$ is tractable for $\calG$''.
\end{example}

We will use the fact that $\sACQ$ is already hard for very restricted $S$-hypergraphs, namely for stars in which only the center is not in $S$. We call this class $\calG_S$. Observe that the proof of Lemma \ref{lem:swhardness} gives the following Lemma.

\begin{lemma}\label{lem:hardforstars}
 $\sACQ$ is $\sW{1}$-hard for $\calG_S$ parameterized by the size of the stars.
\end{lemma}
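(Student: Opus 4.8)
The plan is to observe that Lemma~\ref{lem:hardforstars} is an immediate distillation of the reduction already carried out in the proof of Lemma~\ref{lem:swhardness}, so that essentially no new work is required beyond checking that the instances produced there belong to the class $\calG_S$ and that the parameterization matches. First I would recall the definition of $\calG_S$: it is the class of $S$-hypergraphs $(\calH,S)$ where $\calH$ is a star (in the graph-theoretic sense), and the only vertex \emph{not} in $S$ is the center of the star. I would then revisit the instance $\Phi$ constructed in the proof of Lemma~\ref{lem:swhardness}: it has free variables $y_1,\ldots,y_k$ together with a single existentially quantified variable $z$, and all constraints are the binary constraints $E_i(y_i,z)$. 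The associated hypergraph therefore has edges $\{y_i,z\}$ for $i=1,\ldots,k$, which is precisely a star with center $z$ and leaves $y_1,\ldots,y_k$.

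The key steps, in order, are as follows. I would first point out that this hypergraph is acyclic (a star is trivially acyclic, its join tree being the obvious one with the edges arranged around a center), so the instance is a legitimate $\sACQ$ instance. Second, I would verify the membership condition for $\calG_S$: the set $S$ of free variables is exactly $\{y_1,\ldots,y_k\}$, and the unique non-free (quantified) variable $z$ is the center of the star, so $(\calH,S)\in\calG_S$ as required. Third, I would address the parameterization. The reduction in Lemma~\ref{lem:swhardness} is a parameterized $T$-reduction from $p$-$\mathrm{\#DirPath}$ (parameterized by path length $k$) using a single oracle call, and the star produced has exactly $k$ leaves, so its size is bounded by a function of $k$ alone. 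Hence the same reduction witnesses $\sW{1}$-hardness when we parameterize $\sACQ$ restricted to $\calG_S$ by the size of the stars.

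Concretely I would phrase the proof as: \emph{Consider the instance $\Phi$ built in the proof of Lemma~\ref{lem:swhardness}. Its hypergraph consists of the edges $\{y_i,z\}$ for $i\in[k]$, which is a star centered at $z$ with leaves $y_1,\ldots,y_k$; the free variables are exactly the leaves $y_1,\ldots,y_k$ and the only quantified variable is the center $z$. Thus $(\calH,S)\in\calG_S$ and the size of the star is $k$. Since the construction is a parameterized $T$-reduction (with one oracle call) from the $\sW{1}$-hard problem $p$-$\mathrm{\#DirPath}$ and the star size $k$ equals the path-length parameter, $\sACQ$ restricted to $\calG_S$ and parameterized by the star size is $\sW{1}$-hard.}

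I do not expect any genuine obstacle here; the only thing requiring care is the bookkeeping to confirm that the domains of $z$ and the $y_i$, though large, do not affect the shape of the hypergraph (which depends only on which variables co-occur in a constraint, not on domain sizes), and that the parameter of the constructed instance is controlled by $k$ rather than by the input graph $G$. The main content of the lemma is therefore the \emph{observation}, already signalled in the surrounding text, that the hard instances of Lemma~\ref{lem:swhardness} are exactly stars of the form described by $\calG_S$, so the result follows by inspection of that earlier proof.
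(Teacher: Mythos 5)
Your proposal is correct and follows exactly the paper's approach: the paper proves this lemma simply by observing that the reduction in Lemma~\ref{lem:swhardness} already produces instances whose hypergraph is a star centered at the single quantified variable $z$ with the free variables $y_1,\ldots,y_k$ as leaves, so the instances lie in $\calG_S$ and the star size equals the parameter $k$ of $p$-$\mathrm{\#DirPath}$. Your write-up just makes this observation explicit, which is all the paper does as well.
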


We now show the main result of this section.

\begin{theorem}\label{thm:starsizenecessary}
 Assume $\mathbf{FPT}\ne \sW{1}$, and let $\calG$ be a recursively enumerable class of acyclic $S$-hypergraphs. Then $\sACQ$ is polynomial time solvable for $\calG$ if and only if $\calG$ is of bounded $S$-star size.
\end{theorem}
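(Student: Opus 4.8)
The plan is to prove the theorem by establishing both directions, where the interesting content is the ``only if'' direction. The ``if'' direction is immediate: if $\calG$ has $S$-star size bounded by some constant $k$, then Theorem~\ref{thm:countingeasy} gives an algorithm running in time $\|\Phi\|^{O(k)}$, which is polynomial since $k$ is fixed. So I would dispose of this in one sentence.

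For the ``only if'' direction I would argue by contraposition: assuming $\calG$ does \emph{not} have bounded $S$-star size, I want to show that polynomial-time solvability of $\sACQ$ for $\calG$ would contradict $\mathbf{FPT}\ne\sW{1}$. The key idea is to use $\calG$ as a source of arbitrarily large star gadgets to solve the $\sW{1}$-hard problem from Lemma~\ref{lem:hardforstars}, which counts solutions of $\sACQ$ on the class $\calG_S$ of pure stars (center quantified, all rays free), parameterized by the number of rays. First I would observe that since $\calG$ has unbounded $S$-star size, for every $k\in\mathbb{N}$ there is some $(\calH,S)\in\calG$ containing an $S$-$k$-star, i.e.\ an $S$-component with free vertices $y_1,\dots,y_k$ no two of which share an edge. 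Because $\calG$ is recursively enumerable, I can, given $k$, \emph{search} through $\calG$ (computing $S$-star size via Theorem~\ref{thm:starsizeeasy} for each enumerated member) until such an $(\calH,S)$ is found; this takes time depending only on $k$, which is exactly what an $\mathbf{FPT}$/parameterized reduction is allowed.

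The heart of the reduction is then to embed an arbitrary star instance from Lemma~\ref{lem:hardforstars} into the fixed hypergraph $(\calH,S)$ supplied by $\calG$. Given a star instance on center $z$ with rays $y_1,\dots,y_k$ and binary constraints $E_i(y_i,z)$, I would place these constraints onto the $S$-$k$-star witnessed inside $(\calH,S)$: identify the $k$ independent free vertices of the star with the variables $y_1,\dots,y_k$, and use the (quantified) interior of the $S$-component to play the role of the single center variable $z$ — here I must be careful, since the $S$-component may route the connections among the $y_i$ through several quantified vertices and edges rather than one literal center. The trick is that an $S$-$k$-star guarantees the $y_i$ are pairwise non-adjacent yet lie in a common connected quantified region, so I can design structures on the hyperedges of $\calH$ so that the only surviving projections onto $\{y_1,\dots,y_k\}$ are exactly the tuples accepted by the original star instance, while the extra free vertices of $\calH$ outside the star and outside $S$ are pinned down by singleton constraints (value-forcing relations) so they contribute a multiplicative factor of $1$ to the count. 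Then the weighted/unweighted count of the constructed $\sACQ$ instance on $(\calH,S)$ equals the count for the star instance, giving a parameterized $T$-reduction with one oracle call from the $\sW{1}$-hard problem to $\sACQ$ on $\calG$.

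The main obstacle I anticipate is precisely this embedding step: unlike the clean single-center stars of $\calG_S$, a general $S$-$k$-star inside $(\calH,S)$ can have a complicated internal structure of quantified variables and overlapping hyperedges, and I must show that one can always encode the binary star constraints into this structure so that the projection onto the free star-vertices is preserved exactly, with no spurious solutions introduced by the ambient edges and no undercounting. I would handle this by exploiting acyclicity (via the join tree from Observation~\ref{obs:acyclicitytransmission} restricted to the component) to propagate the guessed values from each $y_i$ along the tree to a common meeting point, effectively simulating the role of $z$, and by using the fact that the rest of the hypergraph can be trivialized with forcing constraints; verifying that these encodings keep the instance acyclic and that the overall running time is $f(k)\cdot\|\cdot\|^{O(1)}$ is the bookkeeping I would then complete.
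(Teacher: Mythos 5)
Your high-level strategy coincides with the paper's: handle the ``if'' direction by Theorem~\ref{thm:countingeasy}; for the ``only if'' direction argue by contraposition, use recursive enumerability (together with Theorem~\ref{thm:starsizeeasy}) to compute, in time depending only on $k$, some $(\calH,S)\in\calG$ of $S$-star size at least $k$; embed a star instance from Lemma~\ref{lem:hardforstars} into $(\calH,S)$ with the same solution count; and make a single oracle call to place $\sACQ$ on $\calG_S$ in $\mathbf{FPT}$, contradicting $\mathbf{FPT}\ne\sW{1}$. So the skeleton is right, and even the idea of letting the quantified interior of the $S$-component ``play the role of $z$'' is the correct one.

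The gap is in the only concrete mechanism you give for the embedding, which is the heart of the proof: ``propagate the guessed values from each $y_i$ along the tree to a common meeting point.'' Taken literally this fails. The quantified vertices of the component are ordinary variables over the domain $D$, each holding a single value, and the paths from different $y_i$ to any meeting point will in general share quantified vertices (two of the $y_i$ may even attach to the same quantified vertex, which is not excluded by the star condition --- that only forbids a common \emph{hyperedge}). A shared vertex cannot carry the values of two different $y_i$, so inward propagation either equates distinct $y_i$ or loses information before it reaches the meeting point; enlarging the domain to carry tuples of values is not an option, since that would blow the instance up beyond $f(k)\cdot n^{O(1)}$. The paper propagates in the \emph{opposite} direction: every edge $e\in E'$ containing $y_i$ gets the relation $\{(a,d,\dots,d,b,\dots,b)\mid (a,b)\in\calR_i\}$, i.e.\ it pins the other free variables of $e$ to a dummy value $d$, forces \emph{all} quantified variables of $e$ to one common value $b$, and checks $\calR_i$ locally; edges of the component containing no $y_i$ likewise force their free part to $d$ and their quantified part to a common arbitrary value. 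Since every edge of an $S$-component contains a quantified vertex and the quantified parts of its edges are connected in $\calH[V-S]$, these local equalities chain together, so in any satisfying assignment the whole quantified part of the component carries a single value $b$ --- this global value is what simulates $z$, with no meeting point needed. One further detail you gloss over: edges \emph{outside} the component may still contain one $y_i$ (they meet $V'$ only inside $S$, and in at most one element of $Y$); their relations must force all remaining variables to $d$ while leaving $y_i$ completely unconstrained, since otherwise they would corrupt the count. With these corrections the bijection between satisfying assignments of the constructed instance and of the star instance goes through; as written, your reduction does not.
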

\begin{proof}
 One direction of the claim is Theorem \ref{thm:starsizeeasy}. For the other direction assume that there is a class $\calG$ of unbounded $S$-star size such that $\sACQ$ is tractable on $\calG$. We show that in this case $\sACQ$ on $\calG_S$ parameterized by the star size is in $\mathbf{FPT}$ and with Lemma \ref{lem:hardforstars} we get $\mathbf{FPT}= \sW{1}$.

So all we have to do is to construct a  fixed parameter algorithm for $\sACQ$ on $\calG_S$. Let $\Phi$ be an instance of this problem, i.e. $\Phi$ has the formula $\varphi:= \exists z \bigwedge_{i=1}^k E_i(y_i, z)$. Let the domain of $\Phi$ be $D$. Because $\calG$ is recursively enumerable and of unbounded $S$-star size, there is a computable function $g:\mathbb{N}\rightarrow \mathbb{N}$ such that for $k\in \mathbb{N}$ one can compute $(\calH, S)\in \calG$ such that $\calH$ is of $S$-star size at least $k$ in time $g(k)$. We will embed $\Phi$ into $\calH$ to construct an $\sACQ$-instance $\Psi$ of size $g(k)n^{O(1)}$ where $n$ is the size of $\Phi$. Furthermore, $\Psi$ will have the $S$-hypergraph $\calH$ and the same domain $D$ as $\Phi$.

Let $\calH'=(V', E')$ be an $S$-$k$-star in $\calH$ that is formed by $Y=\{y_1, \ldots , y_k\}$, $Y\subseteq S$. For each edge $e\in E$ we define a constraint $E_e$. Let first $e\in E'$ be an edge that contains $y_i$ for some $i\in [k]$, then $E_e$ has as variables the vertices of $e$. Let $y_i$ be the first variable of $E_e$ followed by the other variables in $e\cap S$ and after those the variables in $e\setminus S$. Then $E_e$ has the tuples $\{(a, d, \ldots, d, b, \ldots, b)\mid (a,b)\in \calR_i\}$, where $\calR_i$ is the relation of $E_i$ and $d$ is an arbitrary but fixed value in $D$. Observe that this forces all variables in $(e\cap S)\setminus \{y_i\}$ to the variable $d$ in satisfying assignments, while the variables in $e\setminus S$ all have a common value $b$. Furthermore, observe that no two of the $y_i$ share an edge in $E'$, so $E_e$ is always well defined. 

Let $e\in E'$ with $e\cap Y = \emptyset$. Again we define a constraint $E_e$. Let in $E_e$ the first variables be those in $S\cap e$ followed by those in $e\setminus S$, then $E_e$ has the tuples $\{(d,\ldots, d, a, \ldots, a)\mid a\in D\}$ for the same $d\in D$ as before. Again in the satisfying assignments all variables in $e\cap S$ are forced to $d$, while the variables in $e\setminus S$ can take an arbitrary but equal value.

For $e\in E\setminus E'$ with $e\cap V'= \emptyset$, we add a constraint $E_e$ in with the single tuple $(d, \ldots, d)$. If $e\in E\setminus E'$ with $e\cap V' \ne \emptyset$, we have $e\cap V'\subseteq S$. Furthermore at most one vertex in $e\cap V'$, say $y_i$, can be in $Y$, because $Y$ forms a star. If there is no such $y_i$, we construct a constraint $E_e$ with the only tuple $(d, \ldots , d)$. If there is $y_i\in e$, we construct a constraint $E_e$ in which $y_i$ is the first variable with the tuples $\{(a, d, \ldots, d)\mid a \in D\}$. 

These are all constraints of $\Psi$. Let the formula 
\[\psi' := \bigwedge_{e\in E} E_e\]
 and $\psi$ the formula that we get from $\psi'$ by quantifying all variables in $V\setminus S$. Let $\psi$ be the formula of $\Psi$, then it is easy to see that $\Psi$ has has the associated $S$-hypergraph $(\calH,S)$. Furthermore, $\Psi$ has the same number of satisfying assignments as $\Phi$. This is because in each satisfying  assignment $\tu a$ of $\Psi$ all variables in $(V\setminus V')\cup (S\setminus Y)$ are set to $d$. Furthermore, all variables in $V'\setminus S$ take one common value $b$ in $\tu a$. Let $\tu a'$ be an assignment to $\Phi$ that we get by setting $\tu a'(z):=b$ and $\tu a'(y_i) := \tu a(y_i)$. It is easy to see that by this construction the satisfying assignments of $\Psi$ and $\Phi$ correspond directly, so the number of satisfying assignments is the same.

Now assume that $\sACQ$ is polynomial time solvable for $\calG$. It follows that the satisfying assignments of $\Psi$ can be counted in time $(g(k)n)^{O(1)}$ and thus $\sACQ$ on $\calG_S$ is in $\mathbf{FPT}$. With Lemma \ref{lem:hardforstars} this contradicts the assumption which completes the proof.
\end{proof}

\section{Applications to arithmetic circuit complexity}~\label{sec:arithmeticcircuitcomplexity}

We now show an adaptation of some of the results in this paper to arithmetic circuit complexity, i.e. the so-called Valiant model (\cite{Valiant-79}). A \emph{polynomial family} is a sequence $(f_n)$ of multivariate polynomials over a field $\mathbb{F}$.
The class of polynomial families of polynomial degree computed by families of polynomial size arithmetic circuits is denoted by $\VP$. This class  is a natural candidate to formalize efficient computation with arithmetic circuits.
A family $(f_n)$ of polynomials is in $\VNP$, if there is a family $(g_n)\in \VP$ and a polynomial $p$ such that $f_n(X) = \sum_{e\in \{0,1\}^{p(n)}} g_n(e,X)$ for all $n$ where $X$ denotes the vector $(X_1, \ldots, X_{q(n)})$ for some polynomial~$q$. By definition $\VP\subseteq \VNP$ but the precise relations between $\VP$ and $\VNP$ are still unknown.  It is however widely conjectured that $\VP\ne \VNP$, i.e. that not all polynomial familes in $\VP$ are efficiently computable by arithmetic circuits. 

A polynomial $f$ is called a \emph{projection} of $g$ (symbol: $f \le g$), if there are values $a_i \in \mathbb{F} \cup \{X_1, X_2, \ldots\}$ such that $f(X) = g(a_1, \ldots, a_q)$. A family $(f_n)$ of polynomials is a $p$-projection of $(g_n)$ (symbol: $(f_n) \le_p (g_n)$), if there is a polynomial $r$ such that $f_n \le g_{r(n)}$ for all $n$.
As usual we say that $(g_n)$ is hard for an arithmetic circuit class $\mathcal{C}$ if for every $(f_n) \in \mathcal{C}$ we have $(f_n) \le_p (g_n)$. If further $(g_n) \in \mathcal{C}$ we say that $(g_n)$ is $\mathcal{C}$-complete.

Classes of polynomials are often better understood through the natural polynomial families they contain or, better, which are complete for them. For example, it is well-known that determinant family, denoted $(\Det_n)$, of matrices  $(X_{i,j})_{i,j\in [n]}$ defined by: 
\[\Det_n(X_{i,j})= \sum_{\sigma\in S_n}\sgn(\sigma) \prod_{i=1}^n  X_{i,\sigma(i)}.\]

\noindent where $\sgn(\sigma)\in\{-1,1\}$ is the sign of the permutation $\sigma$, is contained in $\VP$ (although presumably not complete).  Similarly, the polynomial family ($\Perm_n$) representing the permanent of the matrices $(X_{i,j})_{i,j\in [n]}$ and defined as

\[\Perm_n(X_{i,j})= \sum_{\sigma\in S_n}\prod_{i=1}^n X_{i,\sigma(i)}.\]

\noindent has been shown to be $\VNP$-complete (see~\cite{Valiant-81}). Recently, the complexity of polynomial families defined by natural extensions of counting solutions to constraint satisfaction problems (like the the $Q$-polynomial of this paper) has been investigated (see \cite{BKM-11,Mengel-11} and also \cite{Bri-11}) and several non-circuit characterizations of $\VP$, $\VNP$ and other classes have been obtained. In this section, we generalize some of these results by showing that acyclic conjunctive queries can define polynomial families that characterizes $\VP$ and $\VNP$. 

\subsection{Tractable polynomials}

The weighted counting problem $\wsACQ{\mathbb{F}}$ can be seen as the problem of computing the polynomial
\[Q(\Phi) := \sum_{a\in \phi(\calS)} \prod_{x \in \var{\phi}} X_{a(x)},\] in the variables $\{X_d \mid d \in D\}$
for a  $\Phi$ in $\CQ$. This naturally makes $\wsACQ{\mathbb{F}}$ or equivalently computing $Q(\Phi)$ a question in the Valiant model. We have:

\begin{theorem} \label{thm:VPbyACQfree}
If $(\Phi_n)$ is a family of 
$\ACQ$ of polynomially bounded size and bounded quantified star size, then $(Q(\Phi_n)) \in\VP$. Moreover, any family in $\VP$ is a p-projection of $(Q(\Phi_n))$, where the $\Phi_n$ are polynomial size quantifier free conjunctive queries whose hypergraph is a~tree.\end{theorem}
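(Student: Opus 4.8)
The theorem has two parts, and my plan is to handle them separately.

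\medskip

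\textbf{The upper bound.} The plan for the first part is to combine Theorem~\ref{thm:countingeasy} with a standard fact from the Valiant model. Given a family $(\Phi_n)$ of $\ACQ$ instances of polynomially bounded size and bounded quantified star size $k$, Theorem~\ref{thm:countingeasy} produces, in time $\|\Phi_n\|^{O(k)}$, an arithmetic circuit $C_n$ computing $Q(\Phi_n)$. Since $k$ is bounded by a constant, this runtime is polynomial in $n$, so the circuits $C_n$ have polynomial size. It remains only to check the degree condition defining $\VP$: as noted in Section~\ref{intro:polynomials}, the polynomial $Q(\Phi_n)$ is homogeneous of degree $|\var{\Phi_n}|$, which is polynomially bounded. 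Hence $(Q(\Phi_n))$ has polynomially bounded degree and polynomial-size circuits, i.e.~$(Q(\Phi_n))\in\VP$.

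\medskip

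\textbf{The hardness/completeness direction.} The plan for the second part is to show that every family $(f_n)\in\VP$ is a $p$-projection of a family $(Q(\Psi_n))$ where each $\Psi_n$ is a quantifier free $\CQ$ whose hypergraph is a tree. I would proceed by first normalizing the given $\VP$-circuit and then encoding its computation as a tree-structured conjunctive query. First I would use the standard depth-reduction/normal-form results for $\VP$ (polynomial degree implies one may assume a multiplicatively disjoint, or even a formula-like/parse-tree, structure) to bring the circuit into a shape whose underlying graph is a tree; the gates of this tree become the atoms of $\Psi_n$ and the wires become the shared variables between adjacent atoms, so that the join tree of $\Psi_n$ mirrors the circuit's parse tree. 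Each $+$-gate is encoded so that $Q$ sums over the choices of its inputs, and each $\times$-gate so that $Q$ multiplies them, which is exactly the behavior of $Q(\Phi)=\sum_{a\in\phi(\calS)}\prod_{x}X_{a(x)}$ when the domain elements are set up to carry the relevant variable or constant labels. The variables $a_i\in\mathbb{F}\cup\{X_1,X_2,\ldots\}$ of the projection are then obtained by substituting the circuit's input labels for the corresponding indeterminates $X_d$ of $Q(\Psi_n)$.

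\medskip

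The main obstacle I anticipate is the hardness direction, specifically making the tree structure of $\Psi_n$ compatible with the fan-out of a general arithmetic circuit: arbitrary circuits reuse gate outputs, which creates shared subexpressions and hence cycles in the naive hypergraph encoding, whereas we need an honest \emph{tree}. The key technical point will be to invoke a parse-tree or formula normal form for $\VP$ so that reconvergent paths disappear and each gate feeds exactly one parent, at the cost of only a polynomial blow-up (this is where the polynomial-degree hypothesis of $\VP$ is essential, since it is what allows such a normal form). A secondary subtlety is arranging the domain and the relations of each atom so that the $Q$-polynomial's monomial bookkeeping---each satisfying assignment contributing $\prod_{x}X_{a(x)}$---faithfully reproduces the additive and multiplicative semantics of the gates without spurious extra factors; one typically handles this by using dedicated ``dummy'' domain elements mapped to the constant $1$ for the structural variables, so that only the variables carrying genuine input labels contribute nontrivially to each monomial.
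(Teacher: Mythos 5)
Your upper bound is essentially the paper's own argument: Theorem~\ref{thm:countingeasy} yields circuits for $Q(\Phi_n)$ in time (hence size) $\|\Phi_n\|^{O(k)}$, which is polynomial once the quantified star size $k$ is a constant, and the degree of $Q(\Phi_n)$ is $|\var{\Phi_n}|$, which is polynomially bounded. That part is fine.

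The hardness direction, however, contains a genuine gap. Your plan is to normalize a $\VP$-circuit so that ``each gate feeds exactly one parent'' with only polynomial blow-up, and then to read the query hypergraph off the circuit graph. A circuit in which every gate has fan-out one \emph{is} a formula, so the normal form you invoke is precisely the assertion $\VP = \VPe$ (polynomial-size formulas), which is not known and is widely conjectured to be false; the known depth-reduction for $\VP$ only gives formulas of quasipolynomial size. The polynomial-degree hypothesis does not buy you what you claim: by Malod and Portier~\cite{MP-08} it buys \emph{multiplicative disjointness}, but a multiplicatively disjoint circuit may still have gates of fan-out two feeding reconvergent paths through $+$-gates, and then the naive ``gates as atoms, wires as shared variables'' encoding is not acyclic, let alone a tree. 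The paper sidesteps all of this by citing~\cite{Mengel-11}, where the tree of the query is \emph{not} the circuit: one takes a multiplicatively disjoint circuit $C$ computing $f_n$, uses the fact that $f_n$ equals the sum over parse trees of $C$ of the products of the leaf labels, and observes that polynomial degree forces every parse tree to fit into one universal tree shape of polynomial size. The query variables are the nodes of that universal shape, the domain is the set of gates of $C$, the binary constraints along the edges of the shape enforce parent--child consistency in $C$, and the projection sends $X_g$ to the label of $g$ when $g$ is an input gate and to $1$ when $g$ is internal. Your closing remark about ``dummy domain elements mapped to the constant $1$'' is exactly this projection step, but without replacing the circuit-as-tree idea by the parse-tree encoding, the construction you describe cannot be carried out.
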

\begin{proof}
The upper bound follows from the proof of Theorem~\ref{thm:countingeasy} and Theorem~\ref{thm:constructVP}. The lower bound is already true for acyclic queries on graphs and follows from~\cite{Mengel-11}.
\end{proof}

This result  shows that while $\wsACQ{\mathbb{F}}$ is a tractable counting problem, it is probably harder than computing the determinant, which is quite rare in counting complexity (see \cite{MP-08} for the role of the determinant in the Valiant model).

As a corollary, we also obtain the following result from~\cite{Mengel-11}.

\begin{corollary} Let $k,d$ be integers. For any family $(\Phi_n)$ of $\CSP$ of polynomially bounded size and arity bounded by $d$ and tree-width bounded by $k$, then  $(Q(\Phi_n)) \in\VP$.
\end{corollary}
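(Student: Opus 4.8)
The plan is to reduce to the quantifier-free acyclic case already covered by Theorem~\ref{thm:constructVP} (the star-size-$0$ instance of Theorem~\ref{thm:VPbyACQfree}). A $\CSP$ instance is exactly a quantifier-free conjunctive query, so every variable is free; when $S=\var{\phi}$ the set $E_{\nsubseteq S}$ is empty, so there is no $S$-component and the quantified star size is $0$. Hence it suffices to transform each $\Phi_n$ into an \emph{equivalent acyclic} query $\Phi'_n$ of polynomial size without changing the polynomial $Q(\Phi_n)$; the theorem then applies directly.

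First I would compute, for each $n$, a width-$O(k)$ tree decomposition $(\calT_n,(B_t)_t)$ of $\Phi_n$ in which every constraint scope lies inside some bag. Such a decomposition exists and is computable in polynomial time because $k$ and $d$ are fixed: the bounded arity guarantees that constraint scopes are bounded cliques, which sit inside the bags of a width-$O(k)$ decomposition. I would assign each atom of $\phi_n$ to one bag containing its scope. For every node $t$ I then build a fresh relation $R_t$ on the variables of $B_t$, consisting of all assignments $B_t\to D$ that satisfy every atom assigned to $t$. Since $|B_t|=O(k)$ and $|D|\le\|\Phi_n\|$, each $R_t$ has at most $\|\Phi_n\|^{O(k)}$ tuples and is computable in polynomial time. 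Let $\Phi'_n$ be the query whose atoms are the $R_t(B_t)$, over the new structure $\calS'$ collecting these relations.

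Next I would verify the two properties that make the reduction work. Acyclicity: the hypergraph of $\Phi'_n$ has the bags as its edges, and the tree decomposition, together with the connectedness of the set of bags containing any fixed variable, is precisely a join tree; hence $\Phi'_n$ is acyclic. Invariance of $Q$: the query $\Phi'_n$ keeps the variable set $V=\var{\phi_n}$ and the domain $D$ of $\Phi_n$, and an assignment satisfies $\bigwedge_t R_t(B_t)$ if and only if it satisfies every original atom (each atom is enforced through the bag it was assigned to). Thus $\phi'_n(\calS')=\phi_n(\calS)$, and by the defining formula of $Q$ we get $Q(\Phi'_n)=Q(\Phi_n)$.

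Finally, $\Phi'_n$ is an acyclic quantifier-free query of size $\|\Phi_n\|^{O(k)}$ and thus of quantified star size $0$, so Theorem~\ref{thm:constructVP} produces in polynomial time a polynomial-size circuit computing $Q(\Phi'_n)=Q(\Phi_n)$; as $Q(\Phi_n)$ is homogeneous of degree $|V|\le\|\Phi_n\|$, its degree is polynomially bounded and the family lies in $\VP$. The only genuine work is the gathering step, and the single point to watch is keeping the bag relations polynomially small: this is exactly where bounded treewidth (few tuples per bag) and bounded arity (scopes fitting inside bags) are used. Everything else is routine bookkeeping.
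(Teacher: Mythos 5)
Your proposal is correct and follows essentially the same route as the paper: the paper's (one-sentence) proof likewise joins the atoms inside each bag of a tree decomposition to obtain an acyclic quantifier-free query with the same solution set, then invokes the circuit construction of Theorem~\ref{thm:constructVP}. Your write-up simply makes explicit the bookkeeping (scopes inside bags, acyclicity via the decomposition tree as join tree, preservation of $Q$) that the paper leaves implicit.
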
 

\begin{proof} Comes from the fact that any $\CSP$ of tree-width $k$ built on relation of arity $d$ can be transformed (by taking joins of atoms in each bag of the tree decompositions) in time $O(n^{f(k,d)})$, for some function $f$, into an acyclic $\CSP$  with the same set of solutions.  
\end{proof}

\subsection{The power of existentially quantified variables}\label{sct:quantifiedchanged}

In this section we show a version of Pichler and Skritek's hardness result for $\sACQ$ for the Valiant model, i.e. the polynomial  $Q(\Phi)$ appears to be harder to compute than in the unquantified case. We state the upper bound in a more general way:

Let $D$ be a finite set with $|D|= d$. For a positive integer $n$ we encode an elements of $\tu a = (a_1,\ldots, a_n)\in D^n$ by a $d\times n$-matrix $M= (m_{i,j})_{i\in D, j\in [n]}$ such that $m_{i,j}= 1$ if $a_j = i$ and $m_{i,j} = 0$ otherwise. Observe that a $0$-$1$-matrix encodes an element in $D^n$ if and only if in each column there is exactly one $1$. We define the monomial $q(M) :=q(a) = \prod_{j\in [n]}X_{a_j}$ if $M$ encodes $a$ and $q(M):= 0$ otherwise. With this notation we can prove a version of Valiant's classic criterion \cite{Valiant-79} for the $Q$-polynomial.

\begin{proposition}\label{prop:valiantcriterion}
 Let $\alpha:\{0,1\}^* \rightarrow \mathbb{Z}$ a function in $\gapP$. Furthermore let $(D_n)$ be a polynomially bounded family of sets and let $p(n)$ be a polynomial. Then the family $(f_n)$ defined by 
\[f_n= \sum_{M\in \{0,1\}^{|D_n|\times p(n)}} \alpha(M) q(M)\]
is in $\VNP$.
\end{proposition}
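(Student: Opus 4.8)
The plan is to prove this as an arithmetic-circuit analogue of Valiant's criterion, following the classical proof that a $\gapP$ function gives rise to a $\VNP$ polynomial but adapting it to the matrix/monomial encoding $q(M)$. The key observation is that $\VNP$ is the class of polynomials whose coefficients are ``easy to compute'' in the sense of $\gapP$, so the whole argument reduces to expressing $f_n$ as an exponential sum of a $\VP$-family evaluated over Boolean cube points, which is exactly the definition of $\VNP$ given earlier in the excerpt.

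First I would rewrite $f_n$ so that the summation is over Boolean vectors rather than matrices, identifying $\{0,1\}^{|D_n|\times p(n)}$ with $\{0,1\}^{N}$ for $N=|D_n|\cdot p(n)$, which is polynomially bounded by hypothesis. The quantity $\alpha(M)$ is an integer-valued $\gapP$ function, so by the standard characterization $\alpha(M)= \#A(M) - \#B(M)$ for two $\sP$ functions, or more directly there is a polynomial-time predicate and a polynomial bound so that $\alpha(M)$ is a difference of numbers of accepting paths of a polynomial-time nondeterministic machine. The main step is to encode this machine computation as a small arithmetic circuit whose Boolean-cube sum recovers $\alpha(M)$: this is precisely Valiant's original technique, turning acceptance of a polynomial-time verifier on certificate $e$ into a multilinear ``selector'' polynomial that is $1$ on accepting $(M,e)$ and $0$ otherwise. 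Summing over the certificate bits $e\in\{0,1\}^{r(n)}$ then produces $\alpha(M)$ as a coefficient.

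Next I would handle the monomial factor $q(M)$. Since $q(M)$ is either $\prod_{j\in[n]} X_{a_j}$ when each column of $M$ has exactly one $1$ (with the row index giving $a_j$), or $0$ otherwise, I can write $q(M)$ as an explicit polynomial in the matrix entries $m_{i,j}$ and the variables $X_i$: namely $q(M)=\prod_{j\in[p(n)]}\bigl(\sum_{i\in D_n} m_{i,j}X_i\bigr)$ together with a ``validity'' factor that is a small circuit checking that each column sums to exactly one $1$ (this indicator is again a multilinear polynomial in the $m_{i,j}$, computable by a polynomial-size circuit, and it vanishes on non-encoding matrices, absorbing the $q(M)=0$ case). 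Multiplying the selector circuit for $\alpha$ by this circuit for $q(M)$ gives a single $\VP$-family $g_n$ in the variables $m_{i,j}$, the certificate bits, and the $X_i$; the degree stays polynomially bounded because all factors have polynomial degree, so $g_n\in\VP$. Finally, summing $g_n$ over all Boolean values of the $m_{i,j}$ and the certificate bits reproduces $\sum_M \alpha(M)q(M)=f_n$, which is exactly the defining exponential-sum form of $\VNP$.

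The step I expect to be the main obstacle is the clean packaging of the $\gapP$ function $\alpha$ as a projection-friendly $\VP$-circuit that sums correctly over the certificates; one must be careful that the nondeterministic verifier's acceptance indicator is genuinely a polynomial-size arithmetic circuit of polynomial degree (so that the resulting $g_n$ lands in $\VP$ and not merely in $\VNP$) and that signs coming from $\alpha\in\gapP$ are handled, e.g.\ by writing $\alpha$ as a difference of two $\sP$ functions and using that $\VNP$ is closed under subtraction. Once the selector is in place, the remaining manipulations are routine, and the homogeneity/degree bookkeeping needed to invoke the definition of $\VNP$ from the preliminaries is straightforward.
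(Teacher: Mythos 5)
Your proof is correct and takes essentially the same route as the paper's: the paper cites the folklore fact that any $\gapP$ function can be written as a Boolean exponential sum $\alpha(x)=\sum_{e}g_{|x|}(x,e)$ of a polynomial-size uniform formula family, cites Lemma 4 of the full version of Mengel's paper for a small arithmetic formula computing $q$ in the entries $m_{i,j}$ and the $X_i$, and then multiplies the two and sums over the Boolean cube, exactly as you do. The only difference is that you construct both ingredients explicitly (the arithmetized verifier/selector for $\alpha$, handled via a difference of $\sP$ functions, and the formula $\prod_{j}\bigl(\sum_{i}m_{i,j}X_i\bigr)$ times a column-validity indicator for $q$) instead of invoking them by citation.
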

\begin{proof}
 It is folklore that $\alpha$ can computed as $\alpha(x) = \sum_{e\in \{0,1\}^{r(n)}} g_{|x|}(x,e)$ where $r$ is a polynomial and $g_n$ is a family of uniform arithmetic formulas of polynomial size (see e.g. \cite{BF91}). In the proof of Lemma 4 in the full version of \cite{Mengel-11} it is shown how to compute the function $q$ with small arithmetic formulas. Combining this we directly get the proposition.
\end{proof}

\begin{proposition} \label{prop:quantifiedhard}
If $(\Phi_n)$ is a family of acyclic conjunctive queries of polynomial size, then $(Q(\Phi_n)) \in\VNP$. Moreover, any family in $\VNP$ is a p-projection of such a $(Q(\Phi_n))$. The $(\Phi_n)$ family can be supposed of arity bounded by two.
\end{proposition}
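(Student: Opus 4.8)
The statement splits into an upper bound ($(Q(\Phi_n))\in\VNP$) and a completeness part (every $\VNP$ family is a p-projection), and I would treat them separately.

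For the upper bound I would invoke Proposition~\ref{prop:valiantcriterion}. Fix $\Phi_n=(\calS_n,\phi_n)$ with free variables $y_1,\dots,y_{p(n)}$, so that an assignment $a\in\phi_n(\calS_n)$ to these variables is encoded by a $0$-$1$ matrix $M$ with $q(M)=\prod_j X_{a(y_j)}$. Define $\alpha(M)=1$ when $M$ encodes an assignment lying in $\phi_n(\calS_n)$ — equivalently, an assignment of the free variables that extends to a full satisfying assignment of $\phi_n$ — and $\alpha(M)=0$ otherwise (in particular when $M$ is not a legal encoding). Fixing the free variables turns the extension test into a Boolean $\ACQ$ instance, decidable in polynomial time by Yannakakis' algorithm~\cite{Yannakakis-81}; hence $\alpha$ is polynomial-time computable and $\alpha\in\mathbf{FP}\subseteq\gapP$. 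Since $\sum_M\alpha(M)q(M)=\sum_{a\in\phi_n(\calS_n)}\prod_j X_{a(y_j)}=Q(\Phi_n)$, Proposition~\ref{prop:valiantcriterion} yields $(Q(\Phi_n))\in\VNP$. This direction needs no bound on the arity.

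For completeness it suffices, by Valiant's theorem that $(\Perm_n)$ is $\VNP$-complete~\cite{Valiant-81}, to realise the permanent as a p-projection of some $Q(\Phi_m)$ with $\Phi_m$ acyclic and binary. I would take the quantified star $\exists z\,\bigwedge_{i=1}^n E_i(y_i,z)$ used in Lemma~\ref{lem:swhardness}: it is binary and its hypergraph is a star, hence acyclic. Let $y_i$ range over a domain recording a choice $\sigma(i)=j$, and project the domain variable for the pair $(i,j)$ to the matrix entry $X_{i,j}$ (sending auxiliary domain elements to field constants), so that an assignment $\sigma$ contributes the monomial $\prod_i X_{i,\sigma(i)}$ after projection. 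The role of the quantified centre $z$ is, as in Lemma~\ref{lem:swhardness}, to guess a collision witnessing that $\sigma$ fails to be injective.

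The crux is that a polynomial-size acyclic binary query cannot directly enforce that $\sigma$ is a bijection, since an all-different constraint would force $z$ to range over $S_n$, an exponential domain. I would therefore not enforce injectivity but cancel it: following the collision-guessing scheme of Lemma~\ref{lem:swhardness}, I would design the relations $E_i$ together with a distinguished domain element that the projection sends to the constant $-1$, so that the monomials of non-injective $\sigma$ acquire signed multiplicities that cancel by inclusion--exclusion over their colliding fibres, while every permutation survives with coefficient exactly $1$. This realises $\Perm_n$ as a p-projection of the single polynomial $Q(\Phi_m)$, with $\Phi_m$ binary and acyclic. The main obstacle is precisely this cancellation bookkeeping: arranging the gadget and the $(-1)$-substitution so that the alternating sum over collision patterns leaves exactly the permutation monomials. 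I would verify it by matching the signed contributions against a Ryser/M\"obius-type inclusion--exclusion for the permanent; alternatively one can avoid redoing the combinatorics by importing the $\VNP$-completeness of the relevant counting polynomial from~\cite{Mengel-11,BKM-11} and checking that its construction is already (or can be made) acyclic and binary, the only genuinely new point being that the hardness is pushed through existential quantification exactly as in Lemma~\ref{lem:swhardness}.
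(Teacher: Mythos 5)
Your upper bound is correct and is essentially the paper's own argument: deciding whether a fixed assignment of the free variables extends to a full satisfying assignment is a Boolean $\ACQ$ instance, so the coefficient function is polynomial-time computable, lies in $\gapP$, and Proposition~\ref{prop:valiantcriterion} applies. The gap is in the hardness direction, and it is not mere bookkeeping: the cancellation mechanism you describe cannot work as stated. The coefficient of a monomial of $Q(\Phi)$ counts \emph{distinct assignments to the free variables}; the existentially quantified variable contributes neither variables nor multiplicities, because all witnesses for the same free assignment collapse into a single solution. This has two consequences for your plan. First, in a pure collision-guessing star $\exists z\,\bigwedge_i E_i(y_i,z)$, a permutation $\sigma$ admits no witness for $z$ and hence contributes \emph{nothing}; it cannot ``survive with coefficient $1$''. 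In Lemma~\ref{lem:swhardness} this is harmless because the Turing reduction performs the subtraction $|E|^k$ minus the number of defective choices \emph{outside} the query, but a p-projection of a single polynomial allows no such post-processing: the subtraction must be realized inside $Q(\Phi)$ itself. Second, your ``signed multiplicities over colliding fibres'' cannot be generated by the quantified variable for the same reason: two different collision witnesses for the same $\sigma$ yield the same free assignment, hence one term, not a signed sum. If you instead make the witness free, each non-injective $\sigma$ is counted once per witness, and forcing these signed contributions to telescope to $0$ while keeping the monomial unchanged is exactly the Ryser-type bookkeeping you defer; done naively (a witness is a set of collisions with sign $(-1)^{|S|}$) it requires exponentially many domain elements.

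The paper closes this gap by turning the witness collapse into the tool rather than fighting it. Its query has a \emph{free} selector variable $x$ joined to two stars: for $x=p$ (projected to $+1$) a quantifier-free star lets $x_1,\ldots,x_n$ encode an arbitrary function $\sigma\in[n]^{[n]}$, while for $x=n$ (projected to $-1$) a second star, centered at a \emph{quantified} variable $y$, lets $x_1',\ldots,x_n'$ encode precisely the non-surjective (equivalently non-bijective) functions: $y$ must be a value avoided by $\sigma$, and, crucially, each such $\sigma$ appears \emph{exactly once} no matter how many values it avoids, because the existential quantifier collapses the choice of avoided value. Substituting $X_p=1$, $X_n=-1$, $X_l=1$ then gives $\sum_{\sigma\in[n]^{[n]}}\prod_i X_{i,\sigma(i)} - \sum_{\sigma\notin S_n}\prod_i X_{i,\sigma(i)} = \Perm_n$, with no inclusion--exclusion at all. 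Note also that your fallback of importing completeness from \cite{Mengel-11,BKM-11} cannot close the gap: the hard polynomials there come from quantifier-free CSPs, and by Theorem~\ref{thm:constructVP} any quantifier-free acyclic family has its $Q$-polynomial in $\VP$, so no such construction ``can be made acyclic'' while staying $\VNP$-hard unless $\VP=\VNP$. The quantified double-star construction is the genuinely new content of this proposition, and it is exactly the part your proposal leaves open.
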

\begin{proof}
We start off with the containment in $\VNP$, which follows easily from the fact that given an assignment $a$ and an acyclic  conjunctive query $\Phi$ one can in polynomial time decide if the $a$ satisfies $\Phi$. Applying Proposition \ref{prop:valiantcriterion} we get the upper bound.

The hardness is obtained by reduction from the family ($\Perm_n$).
$\Perm_n$ can be seen as the sum of weights of perfect matchings in the weighted bipartite graph $K_{n,n}$.

Let $K_{n,n}=(A\cup B,E)$ be the complete bipartite graph with $A=\{a_1,...,a_n\}$,  $B=\{b_1,...,b_n\}$. We denote by $e_{i,j}$ the edge between vertices $a_i\in A$ and $b_j\in B$. We construct a structure $S=\langle \calU, F, G, H, I, J, K, a_1,...,a_n\rangle$ of domain $\calU$ as follows.

\begin{itemize}
\item The universe $\calU= A \cup B \cup \{e_{i,j}\mid i,j\in [n]\} \cup \{p,n, l\}$,
\item $G= \{p,n\}$,
\item $F= \{(n,l)\}\cup \{(p, e_{i,j})\mid i,j\in [n]\}$,
\item $H= \{(e_{i,j}, a_i)\mid i,j\in [n]\}\cup \{(l,a_i)\mid i\in [n]\}$,
\item $I= \{(p,l)\} \cup \{(n, b_i) \mid i\in [n]\}$,
\item $K= \{(b_j,e_{i,k}): i,j,k\in [n], j\ne k\} \cup \{(l,l)\}$
\end{itemize}
 
 Note that the maximal arity of a predicate is two.
 Let $\phi(x_1,...,x_n,x,x_1',...,x_n')$ be the following acyclic conjunctive query:
 
 \begin{eqnarray}~\label{eqn:formulahardnew}
 G(x) \wedge \bigwedge_{i=1}^n F(x, x_i) \wedge \bigwedge_{i=1}^n H(x_i,a_i) \wedge \exists y I(x,y) \wedge \bigwedge_{i=1}^n K(y, x_i') \wedge \bigwedge_{i=1}^n H(x_i',a_i)
 \end{eqnarray}

There are two types of satisfying assignments:

\begin{itemize}
 \item If $x$ takes the value $p$, then $y$ and all $x_i'$ must take the value $l$. The $x_i$ take as values the edges $e_{i,j}$ in such a way that for each $i\in [n]$ there is an edges $e_{i,j}$. Thus the vertices in $A$ are mapped to the vertices of $B$ in the original graph in an arbitrary way.
\item If $x$ takes the value $n$, then all $x_i$ must take the value $l$. Furthermore $y$ takes a value $b\in B$. The $x_i'$ then take as values the edges $e_{i,j}$ in such a way that each $a_i$ is mapped to a vertex $b_j\in B\setminus \{b\}$ by this edge. Thus the assignment to the $x_i'$ is an arbitrary non-injective assignment of the vertices in $A$ to those in $B$.
\end{itemize}

Thus the query $\Phi=(\calS,\phi)$ defines a polynomial $Q(\Phi)$ with the following property (with $e_{i,j}$ corresponding to variable $X_{i,j}$, $p$ to $X_p$, $n$ to $X_n$ and $l$ to $X_l$).
 
\[ 
 \begin{array}{rl}
Q(\Phi) & =  \displaystyle \sum_{\phi(S)} \prod_{t\in \var{\Phi}} X_{a(t)} \\
& = \displaystyle X_p X_l^n \sum_{\sigma\in [n]^{[n]}}\prod_{i=1}^n X_{i,\sigma(i)} + X_n X_l^n \sum_{\sigma\in [n]^{[n]}\backslash S_n}\prod_{i=1}^n X_{i,\sigma(i)}.
\end{array}
\]

Projecting correctly, we get

 \[\Perm_n(X_{i,j})=Q(\Phi)(X_1,...,X_n,X_p, X_n, X_l)|_{X_{p} = 1, X_n= -1, X_l=1}.\]
\end{proof}

\subsection{Unions and intersections of ACQ}

We now show that a version of Proposition \ref{prop:union} is also true for the Valiant model.

\begin{proposition}
If $(\Phi_n)$ is a family of queries of polynomial size that are conjunction (resp. disjunction) of two acyclic conjunctive queries, then $(Q(\Phi_n)) \in\VNP$. Moreover, any family in $\VNP$ is a p-projection of such a $(Q(\Phi_n))$. The result remains true for the polynomial family $(P(\Phi_n))$.
\end{proposition}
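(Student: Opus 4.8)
The plan is to mirror the structure of Proposition~\ref{prop:quantifiedhard}, since the desired statement is again a $\VNP$ upper bound together with $\VNP$-completeness via a $p$-projection of the permanent. The upper bound is essentially automatic: a conjunction or disjunction of two acyclic conjunctive queries admits a polynomial-time algorithm for deciding whether a given assignment $a$ satisfies it (just check membership in each query result, which is easy since the individual queries are acyclic). Thus the indicator of a satisfying assignment is computable in polynomial time, and the associated counting coefficient is in $\gapP$. Applying Proposition~\ref{prop:valiantcriterion} exactly as in the proof of Proposition~\ref{prop:quantifiedhard} yields $(Q(\Phi_n))\in\VNP$, and the same argument works verbatim for the disjunction case and for the $(P(\Phi_n))$ variant.

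For the hardness direction, the natural route is to invoke Proposition~\ref{prop:union}, which already establishes that counting solutions of a conjunction (or disjunction) of two quantifier-free $\ACQ$ instances is $\sP$-complete. The idea is to lift this counting hardness to the arithmetic setting in the same spirit as the permanent reduction: one reduces from $(\Perm_n)$ by encoding the permanent as a weighted count over an appropriate structure whose query is a conjunction of two acyclic queries. First I would recall that $\Perm_n$ is a sum of weights over (non-)injective assignments, exactly as exploited in Proposition~\ref{prop:quantifiedhard}, and then observe that the constraint enforcing a valid matching can be written as the intersection of two acyclic constraint families—this is precisely the row/column decomposition of a grid-like constraint used in the proof of Proposition~\ref{prop:union}, where $\Phi=\Phi_1\wedge\Phi_2$ with each $\Phi_i$ acyclic. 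Combining the grid splitting of Lemma~\ref{lem:grid} with the permanent encoding should produce the required family $(\Phi_n)$ expressed as a conjunction of two acyclic queries, so that projecting the variables correctly recovers $\Perm_n$, establishing that every family in $\VNP$ is a $p$-projection of $(Q(\Phi_n))$.

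The main obstacle will be the hardness reduction rather than the upper bound. Since quantifier-free acyclic queries already give $(Q(\Phi_n))\in\VP$ by Theorem~\ref{thm:VPbyACQfree}, a single acyclic query cannot be $\VNP$-hard, so the $\VNP$-hardness must genuinely exploit the interaction between the two queries (or, equivalently, the quantification implicit in counting over the conjunction). The delicate point is to carry out the permanent encoding while keeping \emph{each} of the two queries acyclic and of polynomial size, and while ensuring the projection step cancels out the unwanted contributions exactly as in Proposition~\ref{prop:quantifiedhard}. I would handle this by reusing the two-query splitting already verified to preserve acyclicity in the proof of Proposition~\ref{prop:union}, checking that the sign-and-scalar projection $X_p=1, X_n=-1, X_l=1$ (or its analogue) still isolates the permanent. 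The disjunction case then follows by the inclusion-exclusion relationship between unions and intersections noted in the proof of Proposition~\ref{prop:union}, and the extension to $(P(\Phi_n))$ requires only observing that the same construction computes the richer polynomial $P$ without structural change.
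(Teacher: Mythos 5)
Your upper bound is the paper's own argument (Valiant's criterion, Proposition~\ref{prop:valiantcriterion}), and for conjunctions your row/column-splitting plan is in the spirit of what the paper intends when it says hardness ``follows as Proposition~\ref{prop:union}''. One caution there, however: you cannot literally combine Lemma~\ref{lem:grid} with the permanent, because that lemma works over a domain of constant size, and $Q(\Phi)$ has one variable per \emph{domain element}. A projection of a polynomial in $q$ variables is again a polynomial in at most $q$ variables, so a family whose members have $O(1)$ variables can never have $(\Perm_n)$, which has $n^2$ variables, as a p-projection. Any correct conjunction construction must use a domain of polynomial size (e.g.\ one element $e_{i,j}$ per matrix entry, as in Proposition~\ref{prop:quantifiedhard}), for instance with $n$ row constraints and $n$ column constraints, each family of constraints forming an acyclic hypergraph.

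The genuine gap is the disjunction case. ``Follows by inclusion-exclusion'' is a counting-complexity argument: it gives the identity $Q(\Phi\vee\Psi)=Q(\Phi)+Q(\Psi)-Q(\Phi\wedge\Psi)$, which suffices for $\sP$-hardness under Turing reductions, where one may make oracle calls and then subtract. In the Valiant model, hardness means every family in $\VNP$ is a p-projection of $(Q(\Phi_n))$, and a projection is a variable substitution into a \emph{single} polynomial; you cannot subtract separately computed polynomials. The linear identity above relates the four polynomials but does not exhibit the hard polynomial $Q(\Phi\wedge\Psi)$ as a projection of any disjunction-query polynomial. The paper closes exactly this gap with a gadget: extend the domain by three fresh elements $\alpha_1,\alpha_2,\alpha_3$, add unary relations $R=\{\alpha_1,\alpha_2\}$ and $S=\{\alpha_2,\alpha_3\}$, and consider the single disjunction instance $\Upsilon$ with formula $\left(\phi(\tu x)\wedge R(y)\right)\vee\left(\psi(\tu x)\wedge S(y)\right)$. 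Its solutions are $(\tu a,\alpha_1)$ for $\tu a\in\phi(\calS)$, $(\tu a,\alpha_3)$ for $\tu a\in\psi(\calS)$, and $(\tu a,\alpha_2)$ for $\tu a\in\phi(\calS)\cup\psi(\calS)$, whence $Q(\Upsilon)=Y_1Q(\Phi)+Y_2Q(\Phi\vee\Psi)+Y_3Q(\Psi)$, and the projection $Y_1=Y_3=1$, $Y_2=-1$ yields $Q(\Phi\wedge\Psi)$. In other words, the inclusion-exclusion signs must be realized \emph{inside} one polynomial by substitution into extra variables; without this (or an equivalent) construction, $\VNP$-hardness for disjunctions remains unproven in your write-up.
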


\begin{proof}[Proof (Sketch)]
The upper bound follows directly from Proposition \ref{prop:valiantcriterion}.
The proof of the lower bound for conjunction of acyclic queries follows directly as Proposition~\ref{prop:union}. The case of disjunction is obtained by reduction from the case of conjunction. Let $\Phi=(\calS,\phi(\tu x))$ and $\Psi=(\calS,\psi(\tu x))$ be two acyclic conjunctive queries. W.l.o.g. we can suppose they both are on the same structure $\calS$ of signature $\sg$ and domain $D$. We denote by $\Phi\wedge\Psi$ the instance $(\calS, \phi(\tu x)\wedge \psi(\tu x))$ and by $\Phi\vee\Psi$ the instance $(\calS, \phi(\tu x)\vee \psi(\tu x))$. 
Let $\calS'$ be a new structure of domain $D'=D \cup \{\alpha_1,\alpha_2, \alpha_3\}$ where $\alpha_1,\alpha_2, \alpha_3$ are not in $D$. Structure $\calS'$ includes $\calS$ and is equipped with two new unary relations $R$ and $S$ which are defined as follows:

\[R=\{\alpha_1,\alpha_2\}, S=\{\alpha_2,\alpha_3\}.\]

Let us now consider the following disjunction of two acyclic formulas:

\[\varphi(\tu x,y) \equiv \left(\phi(\tu x)\wedge R(y)\right) \vee \left(\psi(\tu x)\wedge S(y)\right). \]

The query problem $\Upsilon=(\calS',\varphi(\tu x,y))$ has the following tuples as solutions:

\begin{itemize}
\item $(\tu a,\alpha_1)$ for $\tu a \in \phi(\calS)$.
\item $(\tu a,\alpha_3)$ for $\tu a \in \psi(\calS)$.
\item $(\tu a,\alpha_2)$ for $\tu a \in \phi(\calS)\cup \psi(\calS)$.
\end{itemize}

Then, associating each value $\alpha_i$ with variable $Y_i$:

\[ 
 \begin{array}{rl}
Q(\Upsilon) & =  \displaystyle Y_1Q(\Phi) + Y_3 Q(\Psi) + Y_2Q(\Phi\vee\Psi).
\end{array}
\]

By projection, we get

 \[Q(\Phi\wedge\Psi)=Q(\Upsilon)(\tu X,Y_1,Y_2,Y_3)|_{Y_{1} = 1, Y_2=-1, Y_3=1}.\]

This shows that polynomials obtained by disjunction of two acyclic queries can represented as projections of polynomials obtained by conjunction and hence this is true for all polynomial families in $\VNP$. 
\end{proof}

\section{Conclusion}

We have presented a complete picture of tractability for weighted $\sACQ$. However, there are still many open questions that could be explored in the future.

The first question is how to generalize our results from acyclic to more general classes of conjunctive queries. While it should be possible to generalize the unquantified case to, say, bounded hypertree width (and indeed Pichler and Skritek \cite{PS-11} sketch this for unweighted counting), the quantified case is less clear. The generalizations of quantified star size for e.g. bounded treewidth queries and thus the counting algorithm of Theorem \ref{thm:countingeasy} appears straightforward. The real problem would then be if there is an efficient algorithm to decide for a combination of quantified star size and treewidth similarly to Theorem \ref{thm:starsizeeasy}. For a fixed tree decomposition computing the star size would be easy, but ruling out other tree decompositions with smaller quantified star size is at least non obvious to us.

One future direction of work could be trying to apply our results to combinatorial counting problems. Conjunctive queries are a very versatile in encoding other problems, so can our results be used to find non-obvious algorithms for such problems?

We now turn to aspects from parameterized complexity. The first question is completely determining the complexity of the parameterized problems in Section \ref{sct:starhard}. For the $p$-$\sACQ$ it is easy to see that (by complementing the formula) the problem can be reduced to counting solutions of $\Pi_1$-formulas. These formulas define the class $\mathbf{\#A[2]}$ and thus we have $p$-$\sACQ\in \mathbf{\#A[2]}$. Observe that we do not use the acyclicity in this reduction at all which could make the problem easier (and does indeed in other settings). So it would not be very surprising if we could show $p$-$\sACQ\in \mathbf{\#W[1]}$. For $p$-$\mathrm{var}$-$\sACQ$ containment in $\sW{P}$ is straightforward, but we do not see how to bring this down into the $\mathbf{\#A}$- or the $\mathbf{\#W}$-hierarchy (see \cite{FlumGrohe06} and also the discussion in \cite{Thurley-06} for a definition of these classes). It would not even be surprising to show that $p$-$\mathrm{var}$-$\sACQ$ could be reduced to $p$-$\sACQ$. Intuitively, too many quantified variables for few free variables should not make too much sense, which could then lead directly to a bound of the size of formulas in the number of free variables. Finally, for $p$-$\mathrm{star}$-$\sACQ$ not even an $\sW{P}$ upper bound is apparent.

Our Theorem \ref{thm:starsizenecessary} strongly depends on the fact that we can in the construction of hard instances use as many different relation symbols $E_e$ as we need. Can we show a stronger version in parallel to the result of Grohe et al. \cite{GSS-01} that holds for any fixed vocabulary that contains at least one binary relation symbol?

Another question is if there are any structural subclasses of $\sACQ$ that allow more efficient counting than the algorithm of Theorem \ref{thm:countingeasy}. Theorem \ref{thm:starsizenecessary} tells us that these classes would be of bounded quantified star size, but could we add more structural restrictions to yield a class that allows, say, fixed parameter counting? Or can we under plausible assumptions rule out such classes in the style of \cite{Marx-07}?

\bibliographystyle{alpha}
\bibliography{bibfile}

\end{document}